\documentclass[a4paper,10pt]{article}
\usepackage{amsmath}
\usepackage{amssymb}
\usepackage{amsthm}
\usepackage{amscd}
\usepackage{enumerate}
\usepackage{bm}
\usepackage{geometry}
\usepackage[all]{xy}
\usepackage{tikz-cd}
\usepackage{cite}
\usepackage{hyperref}
\usepackage{lineno}
\usepackage{color}

\usepackage{color}

\newcommand{\p}{\partial}
\newcommand{\ve}{\epsilon}
\newcommand{\al}{\alpha}
\newcommand{\dd}{\mathrm{d}}
\DeclareMathOperator{\Res}{\mathrm{Res}}
\DeclareMathOperator{\nd}{d\!}

\theoremstyle{definition}
\newtheorem{theorem}{Theorem}[section]
\newtheorem*{main}{Main Theorem}
\newtheorem{lem}{Lemma}[section]

\newtheorem{de}{Definition}[section]
\newtheorem{pro}{Proposition}[section]
\newtheorem{cor}{Corollary}[section]
\newtheorem{rem}{Remark}[section]
\numberwithin{equation}{section}
\title{The extended Ablowitz-Ladik hierarchy and a generalized Frobenius manifold}
\author{SI-QI LIU, YUEWEI WANG, YOUJIN ZHANG}
\begin{document}
\maketitle
\begin{abstract}
 We define a certain extension of  the Ablowitz-Ladik hierarchy,  and   prove that  this extended integrable hierarchy coincides with the topological deformation of the Principal Hierarchy of a generalized Frobenius manifold with non-flat unity.
\end{abstract}
\section{Introduction}\label{zh-22}
The Ablowitz-Ladik (AL) equation
\begin{equation}
    ir_{n,t}=r_{n+1}+r_{n-1}-2r_n\pm r_n^*r_n(r_{n+1}+r_{n-1}),
\end{equation}
is a well-known discrete integrable system which was proposed  by Ablowitz and  Ladik in \cite{ablowitz1975nonlinear,M1976Nonlinear}.
It can be represented by the compatibility condition of the following linear system\cite{suris2012problem}:
\begin{equation}
    \Psi_{n+1}=L_n\Psi_n,\quad 
    \Psi_{n,t}=M_n\Psi_n,
\end{equation}
where 
\[L_n=\begin{pmatrix}\lambda&q_n\\r_n&\lambda^{-1}\end{pmatrix},\quad M_n=i\begin{pmatrix}
    \lambda^2-1-q_nr_{n-1}& \lambda q_n-\lambda^{-1}q_{n-1}\\ \lambda r_{n-1}-\lambda^{-1}r_{n}&-\lambda^{-2}+1+q_{n-1}r_n 
\end{pmatrix}
,\quad q_n=\mp r_n^*.\]
One can decompose  $M_n$ as
\begin{align*}
M_n&=M_{n,-}+M_{n,0}+M_{n,+}\\
& =i\begin{pmatrix}
     \lambda^2-q_nr_{n-1}&q_n\lambda\\
     r_{n-1}\lambda &0
 \end{pmatrix}+i\begin{pmatrix}
      -1 &0\\
    0& 1
  \end{pmatrix}+i\begin{pmatrix}
     0&-q_{n-1}\lambda^{-1}\\-r_n\lambda^{-1}&-\lambda^{-2}+q_{n-1}r_n.
 \end{pmatrix}
\end{align*}
and decompose the flow $\frac{\partial}{\partial t}$  as 
\[\frac{\partial}{\partial t}=\frac{\partial}{\partial 
\hat{t}_{0}}+\frac{\partial}{\partial t_{00}} +\frac{\partial}{\partial t_0},\]
where the flows $\frac{\partial}{\partial \hat{t}_0}$, $\frac{\partial}{\partial t_{00}}$, $\frac{\partial}{\partial t_0}$ are given by the following evolution of the wave function  $\Psi_n$: 
 \begin{equation*}
 \Psi_{n,\hat{t}_{0}}=M_{n,-}\Psi_n,\quad\Psi_{n,t_{00}}= M_{n,0}\Psi_{n},\quad \Psi_{n,t_{0}}=M_{n,+}\Psi_n. 
 \end{equation*}
We can define other positive flows $\frac{\partial}{\partial t_k}(k\ge 1)$ and negative flows  $\frac{\partial}{\partial \hat{t}_k}(k\ge1)$  by imposing appropriate evolutionary  conditions on the wave function.  All these flows are mutually commutative and they form the AL hierarchy, as it is introduced in \cite{suris2012problem}.   
We define the variables $q(x)$, $r(x)$ by applying an $\epsilon$-interpolation to the discrete variables of $q_n$, $r_n$ such that 
\[q_n=q(x)|_{x=n\epsilon},\quad r_n(t)=r(x)|_{x=n\epsilon},\]
then the AL hierarchy can be rewritten as evolutionary equations of $q(x)$, $r(x)$.

Now we define a pair of new unknown functions\cite{li2022tri, suris2012problem} 
\begin{equation}
    P=\frac{q}{q^-},\quad Q=\frac{q}{q^{-}}(1-q^{-}r^{-}),
\end{equation}
here and in what follows we denote $f^{+}(x)=\Lambda f(x)$,
$f^{-}(x)=\Lambda^{-1} f(x)$ with $\Lambda=e^{\epsilon\partial_x}$. In terms of the functions $(P,Q)$ the positive and negative flows of the AL hierarchy can be represented by the following Lax equations:
\begin{align}
\frac{\partial L}{\partial t_{k}}&=\frac{1}{\ve(k+1)!}[(L^{k+1})_{+},L],\quad k\ge 0\label{lax1}\\
    \frac{\partial L}{\partial \hat{t}_{k}}&=\frac{(-1)^k k!}{\ve}[(M^{k+1})_{-},L], \quad k\ge 0\label{lax2}
\end{align}
where the Lax operators $L$ and $M$ have the forms 
\begin{align}\label{laxop1}
    L&=(1-Q\Lambda^{-1})^{-1}(\Lambda-P)=\Lambda+(Q-P)+Q(Q^{-}-P^{-})\Lambda^{-1}+\dots\\
    M&=(\Lambda-P)^{-1}(1-Q\Lambda^{-1})=\frac{Q}{P}\Lambda^{-1}+\frac{Q^+}{PP^+}-\frac{1}{P}+(\frac{Q^{++}}{PP^+P^{++}}-\frac{1}{PP^+})\Lambda+\dots\label{laxop2}
\end{align}
This integrable hierarchy is called the relativistic Toda hierarchy in the literature, see for examples \cite{suris2012problem,oevel1989mastersymmetries}, nonetheless we still call it the AL hierarchy.

The purpose of the present paper is to study the relationship of the AL hierarchy with a certain generalized Frobenius manifold. Our main motivation comes from Brini's work on the equivariant Gromov-Witten invariants of the resolved conifold
\[X=\mathcal{O}_{\mathbb{P}^1}(-1)\oplus \mathcal{O}_{\mathbb{P}^1}(-1)\] 
with anti-diagonal $C^*$-action on the fibers. He conjectured in \cite{brini2012local} that the generating function of these Gromov-Witten invariants yields a tau function of the AL hierarchy, and proved his conjecture at the genus one approximation. Brini and his collaborators further showed in \cite{Brini2011Integrable} that the dispersionless limit of the positive flows of the AL hierarchy belong to the Principal Hierarchy of a generalized Frobenius manifold $M_\textrm{AL}$ with non-flat unity, which is almost dual to the Frobenius manifold $M_{X}$ associated with the Gromov-Witten invariants of the resolved conifold with anti-diagonal action. This generalized Frobenius manifold has the potential 
\begin{equation}\label{zh-4}
F=\frac12 (v^1)^2 v^2+v^1 e^{v^2}+\frac12 (v^1)^2 \log v^1
\end{equation}
and flat metric 
\begin{equation}\label{zh-5}
\eta=(\eta_{\al\beta})=\begin{pmatrix} 0 & 1\\ 1 &0\end{pmatrix}.
\end{equation}
We are to construct a certain extension of the AL hierarchy by including to it some additional flows, and to show that the extended AL hierarchy is the topological deformation of the Principal Hierarchy of the generalized Frobenius manifold $M_\textrm{AL}$.

Recall that the Principal Hierarchy of a Frobenius manifold $M^n$ is a bihamiltonian integrable hierarchy 
\begin{equation}\label{zh-1}
\frac{\p v^\al}{\p t^{\beta,q}}=\eta^{\al\gamma}\frac{\p}{\p x}\left(\frac{\p\theta_{\beta,q+1}(v)}{\p v^\gamma}\right),\quad \al,\beta=1,\dots,n,\, q\ge 0
\end{equation}
of hydrodynamic type with spatial variable $x$, where $\eta=(\eta^{\al\beta})^{-1}$ is the flat metric of the Frobenius manifold, the unknown functions $v^1,\dots, v^n$ are given by a system of flat coordinates of $\eta$, and the functions $\theta_{\beta, q}(v)=\theta_{\beta,q}(v^1,\dots,v^n)$ are defined via the deformed flat coordinates of the Frobenius manifold \cite{Du-94, normal}.
Here and in what follows we assume summation from over repeated upper and lower Greek indices which range from 1 to n. In the case when the Frobenius manifold arises from the Gromov-Witten invariants of a smooth projective variety $X$, the logarithm of the tau function $\tau^{[0]}$ of a particular solution 
\[v^\al=\eta^{\al\gamma}\frac{\p^2\log\tau^{[0]}}{\p t^{\gamma,0}\p t^{1,0}},\quad \al=1,\dots,n \]
of the Principal Hierarchy coincides with the generating function of the genus zero Gromov-Witten invariants of $X$. In \cite{normal} Dubrovin and the third author of the present paper constructed a certain dispersionful deformation, called the topological deformation, of the Principal Hierarchy of a semisimple Frobenius manifold. Such a deformation of the Principal Hierarchy is realized via a deformation of the tau function
\begin{equation}\label{zh-2}
\log\tau^{[0]}\to \log\tau=\ve^{-2}\log\tau^{[0]}+\sum_{k\ge 1}\ve^{k-2} \mathcal{F}_k(v,v_x,\dots,v^{(m_k)}),
\end{equation}
where the functions $\mathcal{F}_k$ depend on the unknown functions $v^1,\dots,v^n$ and their $x$-derivatives up to certain orders $m_k$, they are uniquely determined (up to addition of constants) by the condition that the action of the Virasoro symmetries $\frac{\p}{\p s_m},\, m\ge -1$ of the Principal Hierarchy of the Frobenius manifold act linearly on the deformed tau function, i.e. they can be represented as the actions of some linear operators on the tau function as follows:
\[\frac{\p\tau}{\p s_m}=L_m\tau,\quad m\ge -1,\]
where $L_m$ are the Virasoro operators of the Frobenius manifold. This condition of the linearization of the Virasoro symmetries is equivalent to a system of linear equations for the gradients of the function
\begin{equation}\label{4.2.1}
\Delta\mathcal{F}=\sum_{k\ge 1}\ve^{k-2} \mathcal{F}_k(v,v_x,\dots,v^{(m_k)}),
\end{equation}
which is called the loop equation of the Frobenius manifold.
The unique solution of the loop equation yields a quasi-Miura transformation
\begin{equation}\label{zh-3}
v^\al\to w^\al=v^\al+\frac{\p^2}{\p t^{1,0}\p t^{\al,0}}\left(\sum_{k\ge 1}\ve^{k} \mathcal{F}_k(v,v_x,\dots,v^{(m_k)})\right),\quad \al=1,\dots,n
\end{equation}
for the Principal Hierarchy of the semisimple Frobenius manifold, and the resulting equations for the unknown functions $w^1,\dots, w^n$ form the topological deformation of the Principal Hierarchy.

An analogue of the above-mentioned construction of integrable hierarchy is proposed in \cite{liu2022generalized} for any semisimple generalized Frobenius manifold with non-flat unity. We know that for a usual Frobenius manifold the unit vector field $e$ for the multiplication is flat with respect to the flat metric, and one usually chooses the flat coordinates $v^1,\dots, v^n$ such that $e=\frac{\p}{\p v^1}$. The generalized Frobenius manifolds considered in \cite{liu2022generalized} satisfy all the axioms given in Dubrovin's definition of Frobenius manifolds but the flatness of the unit vector field with respect to the flat metric. One difference of the construction of integrable hierarchy for such a generalized 
Frobenius manifold from the above-mentioned one for a usual Frobenius manifold is that one needs to introduce a set of additional flows 
\[\frac{\p v^\al}{\p t^{0,q}}=A^{\al}_{\gamma,q}(v)v^\gamma_x,\quad \al=1,\dots,n,\, q\in\mathbb{Z}\]
to the integrable hierarchy \eqref{zh-1}. These flows together with the ones given in \eqref{zh-1} form the Principal Hierarchy of the generalized Frobenius manifold.
It also possesses a tau structure and Virasoro symmetries. 
By using the condition of linearization of the actions of the Virasoro symmetries on the tau function, one obtains the topological deformation of the Principal Hierarchy via a deformation of the tau function of the form \eqref{zh-2} and a quasi-Miura transformation of the form \eqref{zh-3}.  

It is conjectured in \cite{liu2022generalized}, under a certain Miura-type transformation, that the positive flows $\frac{\p}{\p t_k},\, k\ge 0$ of the AL hierarchy coincide with the flows $\frac{\p}{\p t^{2,k}},\, k\ge 0$ of the topological deformation of the Principal Hierarchy of the generalized Frobenius manifold $M_\textrm{AL}$ with potential \eqref{zh-4}.

We have the following theorem.
\begin{main}
Let $\tau$ be a tau-function of the topological deformation of the Principal hierarchy of the generalized Frobenius manifold $M_\textrm{AL}$, then the functions $P, Q$ defined by 
\begin{equation}
    Q-P=\epsilon(\Lambda-1)\frac{\partial \log \tau}{\partial t^{2,0}},\quad \log Q-\log P=(1-\Lambda^{-1})(\Lambda-1)\log \tau 
\end{equation}
satisfy the Lax equations \eqref{lax1}, \eqref{lax2} if we identify $t_{k}$ and $\hat{t}_k$ with $t^{2,k}$ and $t^{0,-k-1}$ respectively.
\end{main}

The paper is organized as follows.
In Sect.\,2, we construct the Principal Hierarchy of the generalized Frobenius manifold $M_\textrm{AL}$ and its tau structure. In Sect.\,3, we give a certain extension of the Ablowitz-Ladik hierarchy and its tau structure. In Sect.\,4, we introduce a super tau cover of the extended AL hierarchy. In Sect.\,5, we prove that the extended Ablowitz-Ladik hierarchy possesses an infinite number of Virasoro symmetries which act linearly on its tau function, and we prove the Main Theorem. Sect.\,6 is for a conclusion.

\section{The Principal Hierarchy of $M_\textrm{AL}$}
In this section, we construct the Principal Hierarchy of $M_\textrm{AL}$ and its tau structure following the definitions of  \cite{liu2022generalized}, and show that the dispersionless limit of the positive and negative flows \eqref{lax1}, \eqref{lax2} of the AL hierarchy belong to this Principal Hierarchy.

The Frobenius manifold structure of $M_\textrm{AL}$ with potential \eqref{zh-4} and flat metric \eqref{zh-5} is obtained in \cite{Brini2011Integrable} by using the superpotential  
\begin{equation}\label{zh-7}
    \lambda(p)=p+v^1+v^1 e^{v^2}(p-e^{v^2})^{-1},
\end{equation}
here $v^1, v^2$ are the flat coordinates of the flat metric. The non-flat unity $e$ and the Euler vector field $E$ of $M_\textrm{AL}$ are given by 
\begin{equation}
    e=\frac{v^1\partial_{v^1}-\partial_{v^2}}{v^1-e^{v^2}}, \quad E=v^1\partial_{v^1}+\partial_{v^2},
\end{equation}
they are gradients of the function 
\begin{equation}\label{zh-6} 
\varphi=v^2-\log(e^{v^2}-v^1)
\end{equation}
with respect to the flat metric $\eta$ and the intersection form $g$ of $M_\textrm{AL}$ respectively, here 
\[g=(g^{\al\beta})=\begin{pmatrix}
    2v^1e^{v^2} &v^1+e^{v^2}\\
    v^1+e^{v^2} &2
    \end{pmatrix}.\]
    
On $M_\textrm{AL}$ we have the deformed flat connection $\widetilde{\nabla}$ defined by 
\[\widetilde{\nabla}_{a}b=\nabla_{a}b+za\cdot b,\quad \forall a,b\in \textrm{Vect}(M),\]
where $\nabla$ is the Levi-Civita connection of the flat metric $\eta$, and the multiplication $a\cdot b$ of the vector fields $a, b$ is defined by
$\p_\al\cdot\p_\beta=c^\gamma_{\al\beta}\p_\gamma$ with $\p_\al=\frac{\p}{\p v^\al}$ and 
\[c_{\al\beta}^\gamma=\eta^{\gamma\xi}\frac{\p^3 F(v)}{\p v^\al\p v^\beta\p v^\xi},\quad (\eta^{\al\beta})=(\eta_{\al\beta})^{-1}.\]
This deformed connection can be extended to a flat connection on $M_\textrm{AL}\times \mathbb{C}^*$ by the formulae
\begin{align*}
\widetilde{\nabla}_a\frac{\mathrm{d}}{\mathrm{dz}}=0,\quad \widetilde{\nabla}_{\frac{\mathrm{d}}{\mathrm{dz}}}\frac{\mathrm{d}}{\mathrm{dz}}=0,\quad \widetilde{\nabla}_{\frac{\mathrm{d}}{\mathrm{dz}}}b=\partial_z b+E\cdot b-\frac{1}{z}\mu v,
\end{align*}
here $a, b$ are vector fields with zero  components along $\mathbb{C}^*$, $z$ is the coordinate on $\mathbb{C}^*$, and $\mu$ is the grading operator defined by
\begin{equation}\label{mu}
    \mu=\frac{2-d}{2}-\nabla E=\textrm{diag}\left(-\frac{1}{2},\frac{1}{2}\right)
\end{equation}
 with $d=1$ being the charge of $M_\textrm{AL}$. A basis of horizontal section of the connection $\widetilde{\nabla}$ yields a system of deformed flat coordinates $(\tilde{v}^1,\tilde{v}^2)$ of $M_\textrm{AL}$, which can be chosen to take the form
 \cite{Du-94, dubrovin1999painleve, normal, liu2022generalized}
\begin{equation}
    (\tilde{v}_1,\tilde{v}_2)=(\theta_{1}(v,z),\theta_2(v,z))z^{\mu}z^{R},
\end{equation}
where the constant matrix
\[R=R_1=\begin{pmatrix}
    0&0\\2&0
\end{pmatrix}\]
is part of the monodromy data of $M_\textrm{AL}$ at $z=0$, and the functions $\theta_{\al}(v,z)$ are analytic at $z=0$ and have the series expansions
\[\theta_{\al}(v,z)=\sum_{k\ge 0}\theta_{\al,k}(v)z^{k},\quad \theta_{\al,0}(v)=v_\al:=\eta_{\al\gamma}v^\gamma,\quad \al=1,2.\]
The coefficients $\theta_{\al,k}(v)$ satisfy the equations
\begin{align}
&\p _{\gamma}\p_{\beta}\theta_{\al,k+1}=c_{\gamma\beta}^{\ve}\p_{\ve} \theta_{\al,k},\quad \al,\beta,\gamma=1,2,\,k\ge 0;\label{re2}\\
&\p_E\nabla\theta_{\al,k}=(k+\mu_{\al}-\mu)\nabla\theta_{\al,k}+(R_1)^{\ve}_{\al}\nabla\theta_{\ve,k-1},\quad \al,\gamma=1,2,\,k\ge 1.\label{quasi-homo}
\end{align}
By integrating the relations \eqref{quasi-homo} and taking the integration constants to be zero, we arrive at
\begin{equation}\label{quasi-2}
\p _E \theta_{1,k}=k\theta_{1,k}+2\theta_{2,k-1},\quad \p_E \theta_{2,k}=(k+1)\theta_{2,k},\quad k\ge 1.
\end{equation}
Then the recursion relations \eqref{re2} and the quasi-homogenous condition 
\eqref{quasi-2} uniquely fix the functions $\theta_{\al,k},\,\al=1, 2$.
The first few of them are given by
\begin{align}
   & \theta_{1,1}=(v^2 + \log v^1)v^1 + (e^{v^2}-v^1), \quad\theta_{2,1}=v^1e^{v^2}+\frac{1}{2}(v^1)^2,\label{zh-20}\\
   & \theta_{1,2}=(v^2+\log v^1)\theta_{2,1}+\frac{1}{4}(e^{2v^2}-4v^1e^{v^2}-(v^1)^2),\\
   &\theta_{2,2}=\frac{1}{2}v^{1}e^{2v^2}+(v^1)^2e^{v^2}+\frac{1}{6}(v^1)^3.
\end{align}

To define the Principal Hierarchy of $M_\textrm{AL}$ we need to introduce, following \cite{liu2022generalized}, another set of functions $\theta_{0,\ell}$ for $\ell\in\mathbb{Z}$. 
They are determined by $\theta_{0,0}=\varphi$ and the relations
\begin{align}
&\p_{\alpha}\p _{\beta}\theta_{0,\ell}=c_{\al\beta}^{\gamma}\p_{\gamma}\theta_{0,\ell-1},\quad \al,\beta=1,2,\,\ell\in\mathbb{Z};\label{re0-2}\\
&\p_E\theta_{0,k}=k\theta_{0,k}+\theta_{2,k-1},\quad 
\p_E \theta_{0,-k}=-k\theta_{0,-k},\quad k\ge1.\label{quasi0}
\end{align}
Here the function $\varphi$ is defined by \eqref{zh-6}. The first four functions $\theta_{0,\ell}$ have the expressions
    \begin{align*}
 \theta_{0,1}&=v^1v^2,\quad \theta_{0,2}=\frac{1}{2}(v^2 +1)(v^1)^2 + (v^2-1)e^{v^2}v^1,\\ \theta_{0,-1}&=-\frac{v^1}{(v^1-e^{v^2})^2}, \quad \theta_{0,-2}=\frac{2\theta_{2,1}}{(v^{1}-e^{v^2})^4}.
\end{align*}
Then the Principal Hierarchy of $M_\textrm{AL}$, as it is defined in \cite{liu2022generalized}, consists of the following Hamiltonian systems of hydrodynamic type:
\begin{align}
    \frac{\p v^{\al}}{\p t^{\beta,k}}&=\eta^{\al\ve}\frac{\p}{\p x}\left(\frac{\p \theta_{\beta,k+1}(v)}{\p v^{\ve}}\right)=\left\{v^\al(x), H^{[0]}_{\beta,k}\right\}_0,\quad \al,\beta=1,2,\,k\ge 0,\label{normal}\\
      \frac{\p v^{\al}}{\partial t^{0,\ell}}&=\eta^{\al\ve}\frac{\p}{\p x}\left(\frac{\p \theta_{0,\ell+1}(v)}{\p v^{\ve}}\right)=\left\{v^\al(x),H^{[0]}_{0,\ell}\right\}_0,\quad \al=1,2,\, \ell\in\mathbb{Z}.\label{0flow}
\end{align} 
Here the Poisson bracket $\{\cdot\,,\cdot\}_0$ is given by the Hamiltonian operator
\[\mathcal{P}_0^{[0]}=\begin{pmatrix}
0&\p_x\\
\p_x &0\end{pmatrix},\]
and the Hamiltonians are given by
\[H^{[0]}_{\beta,k}=\int \theta_{\beta,k+1}(v(x))\nd x,\quad H^{[0]}_{0,\ell}=\int \theta_{0,\ell+1}(v(x))\nd x.\]
These flows are mutually commutative, and they satisfy the following bihamitonian recursion relations:
\begin{align}
	&\left\{v^{\al}(x),H^{[0]}_{2,p-1}\right\}_1=(p+1)\left\{v^{\al}(x),H^{[0]}_{2,p}\right\}_0\label{zh-29},\\
	&\left\{v^{\al}(x),H^{[0]}_{1,p-1}\right\}_1=p\left\{v^{\al}(x),H^{[0]}_{2,p}\right\}_0+2\left\{v^{\al}(x),H^{[0]}_{2,p-1}\right\}_0,\\
	&\left\{v^{\al}(x),H^{[0]}_{0,p-1}\right\}_1=p\left\{v^{\al}(x),H^{[0]}_{0,p}\right\}_0+\{v^{\al}(x),H_{2,p-1}^{[0]}\}_0,\\
		 &\left\{v^{\al}(x),H^{[0]}_{0,-p-1}\right\}_1=-p\left\{v^{\al}(x),H^{[0]}_{0,-p}\right\}_0,\label{zh-30}
\end{align} 
where the Poisson bracket $\{\cdot\,,\cdot\}_1$ is given by the Hamiltonian operator
\[ \mathcal{P}_1^{[0]}=\begin{pmatrix}
(v^1e^{v^2})_x &(v^1+e^{v^2})\p_x\\
(v^1+e^{v^2})_x+(v^1+e^{v^2})\p_x &2\p_x
\end{pmatrix}.\]
Note that the flow $\frac{\p}{\p t^{0,0}}$ is given by the translation along the spatial variable $x$, i.e.
\[\frac{\p}{\p t^{0,0}}=\frac{\p}{\p x},\]
so in what follows we will identify the time variable $t^{0,0}$ with the spatial variable $x$.

The functions $\theta_{\al,k},\,\al=1,2, k\ge 0$ and $\theta_{0,\ell},\,\ell\in\mathbb{Z}$ can be represented in terms of the superpotential of $M_\textrm{AL}$ given by \eqref{zh-7}. To this end, we need to expand $\log(\lambda(p))$ in two ways. The first way is to assume $|v^1|<|p-e^{v^2}|<|e^{v^2}|$ and expand $\log(\lambda(p))$ as follows:
\begin{align}
\log^+(\lambda(p))&=v^2+\log\left(1+\frac{p-e^{v^2}}{e^{v^2}}\right)+\log\left(1+\frac{v^1}{p-e^{v^2}}\right)\notag\\
       &=v^2+\sum_{k\ge 1}(-1)^{k+1}\frac{1}{k}\left(\frac{p-e^{v^2}}{e^{v^2}}\right)^k+\sum_{k\ge 1}(-1)^{k+1}\frac{1}{k}\left(\frac{v^1}{p-e^{v^2}}\right)^k.
\end{align}
The second way is to assume $|e^{v^2}|<|p-e^{v^2}|<|v^1|$ and and expand $\log(\lambda)$ as follows:
\begin{align}
        \log^-(\lambda(p))&=\log v^1+\log\left(1+\frac{p-e^{v^2}}{v^1}\right)+\log\left(1+\frac{e^{v^2}}{p-e^{v^2}}\right)\notag\\
       & =\log v^1+\sum_{k\ge 1}(-1)^{k+1}\frac{1}{k}\left(\frac{p-e^{v^2}}{v^1}\right)^k+\sum_{k\ge 1}(-1)^{k+1}\frac{1}{k}\left(\frac{e^{v^2}}{p-e^{v^2}}\right)^k.
\end{align}
\begin{pro} 
We have the following formulae:
\begin{align}
\theta_{0,-k}&=(-1)^{k-1}k! \Res_{p=0}\lambda(p)^{-k}\frac{\nd p}{p},\label{zh-8}\\
\theta_{0,k}&=\frac1{k!}\Res_{p=e^{v^2}}\lambda^k(p)\left(\log^+(\lambda(p))-H_k\right)\frac{\nd p}{p},\label{zh-9}\\
\theta_{1,k}&=\frac1{k!}\Res_{p=e^{v^2}}\lambda^k(p)\left(\log^+(\lambda(p)+\log^-(\lambda(p))-2H_k\right)\frac{\nd p}{p},\label{zh-10}\\
\theta_{2,k}&=-\frac1{(k+1)!}\Res_{p=\infty}\lambda^{k+1}(p)\frac{\nd p}{p},\label{zh-11}
\end{align} 
here $k\ge 1$ and $H_k=1+\frac12+\dots+\frac1{k}$.  
\end{pro}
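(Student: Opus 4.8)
The plan is to verify that each of the four residue expressions on the right-hand sides of \eqref{zh-8}--\eqref{zh-11} satisfies the three conditions which, by the construction of \cite{liu2022generalized}, determine the corresponding function uniquely: the recursion \eqref{re2} (resp.\ \eqref{re0-2}), the quasi-homogeneity \eqref{quasi-2} (resp.\ \eqref{quasi0}), and the initial value $\theta_{\al,0}=v_\al$ (resp.\ $\theta_{0,0}=\varphi$). I would work throughout with the rational normal form
\[
\lambda(p)=p+v^1+\frac{v^1 e^{v^2}}{p-e^{v^2}}=\frac{p\,(p+v^1-e^{v^2})}{p-e^{v^2}},
\]
which displays the zeros of $\lambda$ at $p=0$ and $p=e^{v^2}-v^1$, the simple pole at $p=e^{v^2}$, the behaviour $\lambda\sim p$ near $p=\infty$, and the two simple critical points; and I would use the Landau--Ginzburg description of $M_\textrm{AL}$ coming from \cite{Brini2011Integrable}: the residue-at-critical-points formulas for $\eta$ and $c^{\gamma}_{\al\beta}$, the consequent identity $\p_\al\lambda\,\p_\beta\lambda=c^{\gamma}_{\al\beta}\,\p_\gamma\lambda+R_{\al\beta}(p)\,\lambda'(p)$ with $R_{\al\beta}$ rational in $p$, and the homogeneity $E\lambda=\lambda-p\,\lambda'(p)$, equivalently $\widehat{E}\lambda=\lambda$ and $\widehat{E}\,\frac{\nd p}{p}=0$ for $\widehat{E}:=E+p\,\p_p$.

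I would settle the quasi-homogeneity first. Each of the base points $p=0$, $p=e^{v^2}$, $p=\infty$ is invariant under the flow of $\widehat{E}$ (the first and last are fixed by $p\,\p_p$, and the locus $p=e^{v^2}$ is invariant since $e^{v^2}$ rescales together with $p$), so taking $\Res$ at such a point is equivariant for $\widehat{E}$; as the $\theta$'s are $p$-independent and $\widehat{E}\lambda=\lambda$, this gives at once $\p_E\theta_{2,k}=(k+1)\theta_{2,k}$ from \eqref{zh-11} and $\p_E\theta_{0,-k}=-k\theta_{0,-k}$ from \eqref{zh-8}. For \eqref{zh-9} and \eqref{zh-10} the only new input is $\widehat{E}\log\lambda=1$, which produces an extra $\Res_{p=e^{v^2}}\lambda^{k}\,\frac{\nd p}{p}$; because $\lambda$ vanishes at $p=0$, this form has poles only at $p=e^{v^2}$ and $p=\infty$, so the residue theorem on $\mathbb{P}^1$ gives $\Res_{p=e^{v^2}}\lambda^{k}\,\frac{\nd p}{p}=-\Res_{p=\infty}\lambda^{k}\,\frac{\nd p}{p}=k!\,\theta_{2,k-1}$, which reproduces the inhomogeneous term of \eqref{quasi0} (and twice it in \eqref{quasi-2}); matching the remaining constant is precisely what \emph{forces} $H_k$ to be the harmonic number appearing in the statement.

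Then I would prove the recursions. Writing all $v$-derivatives at fixed $p$, using the chain rule $\p_\al|_p=\p_\al|_\lambda-\frac{\p_\al\lambda}{\lambda'}\p_p$ and integrating by parts in $p$, the double derivative $\p_\gamma\p_\beta$ of a residue of $f(\lambda)\,\frac{\nd p}{p}$ turns into a residue containing $\frac{\p_\gamma\lambda\,\p_\beta\lambda}{\lambda'}$; substituting $\p_\gamma\lambda\,\p_\beta\lambda=c^{\ve}_{\gamma\beta}\,\p_\ve\lambda+R_{\gamma\beta}\,\lambda'$ and discarding residues of exact forms leaves $c^{\ve}_{\gamma\beta}\,\p_\ve$ of the same residue with the exponent lowered by one. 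For \eqref{zh-11} and \eqref{zh-8} (no logarithm) this is the classical Landau--Ginzburg computation, cf.\ \cite{Du-94, dubrovin1999painleve}; for \eqref{zh-9} and \eqref{zh-10} one runs it with $f=\lambda^{k}(\log^{\pm}\lambda-H_k)$, the only subtlety being that $\lambda^{k}\,\p_p\log\lambda=\lambda^{k-1}\lambda'$ is regular where it needs to be and hence again contributes a lower-index residue. The initial values are direct residue evaluations with the annular expansions: $\Res_{p=e^{v^2}}\log^{+}(\lambda)\,\frac{\nd p}{p}=v^2-\log(e^{v^2}-v^1)=\varphi$, $\Res_{p=e^{v^2}}\bigl(\log^{+}(\lambda)+\log^{-}(\lambda)\bigr)\frac{\nd p}{p}=v^2=v_1$ (up to the harmless $\log(-1)$ branch term), and $\Res_{p=\infty}\lambda\,\frac{\nd p}{p}=-v^1=-v_2$, so that \eqref{zh-9}--\eqref{zh-11} hold already at $k=0$ and the normalization of \eqref{zh-8} is pinned down by the case $k=1$.

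The step I expect to be the main obstacle is making the logarithmic (resonant) families rigorous: one must give a precise meaning to $\Res_{p=e^{v^2}}$ of forms built from $\log\lambda$, represented by the two Laurent expansions $\log^{\pm}(\lambda)$ which converge on the \emph{disjoint} regimes $|v^1|\lessgtr|e^{v^2}|$; keep the branch choices compatible at the two ends of the $p$-line; check that the residue of such a form remains $\widehat{E}$-equivariant and that the $R_1$-term of \eqref{quasi-homo} is correctly reproduced; and confirm that $H_k$ is generated exactly by the constant anomaly found above. The family $\theta_{1,k}$, which couples both expansions, is where this bookkeeping is heaviest; once it is under control, the remaining verifications are routine residue calculus.
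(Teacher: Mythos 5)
Your strategy coincides with the paper's: both arguments take the right-hand sides of \eqref{zh-8}--\eqref{zh-11} as candidates and verify the recursion relations \eqref{re2}, \eqref{re0-2}, the quasi-homogeneity conditions \eqref{quasi-2}, \eqref{quasi0} and the normalization, which determine the $\theta$'s uniquely. The execution differs in two respects. First, for \eqref{zh-8} and \eqref{zh-11} the paper does no residue computation at all: it invokes the hypergeometric representations of $\theta_{2,k}$ and $\theta_{0,-k}$ from \cite{Brini2011Integrable} and checks the defining relations on those expressions, whereas you treat all four families uniformly by residue calculus, which is more self-contained. Second, for \eqref{zh-9} and \eqref{zh-10} the paper verifies the recursion by a direct term-by-term computation, exhibiting $\p_\al\p_\beta\Theta_{0,k+1}-c^\gamma_{\al\beta}\p_\gamma\Theta_{0,k}$ as the residue of an exact form in $p$; you package the same computation through the Landau--Ginzburg identity $\p_\al\lambda\,\p_\beta\lambda=c^{\gamma}_{\al\beta}\p_\gamma\lambda+R_{\al\beta}\lambda'$ and the $\widehat E$-equivariance of the residue. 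Your version is equivalent but buys a conceptual explanation of the harmonic numbers $H_k$: they absorb the anomaly $\Res_{p=e^{v^2}}\lambda^k\frac{\nd p}{p}=-\Res_{p=\infty}\lambda^k\frac{\nd p}{p}=k!\,\theta_{2,k-1}$ created by $\widehat E\log\lambda=1$, which the paper's calculation only confirms term by term. The subtlety you flag about the meaning of $\Res_{p=e^{v^2}}$ applied to forms built from $\log^{\pm}\lambda$ is genuine and is not discussed in the paper either ($\theta_{1,k}$ is dispatched with ``can be verified in a similar way''); the resolution is that each residue is the coefficient of $(p-e^{v^2})^{-1}$ in the product of annular Laurent series, with $1/p$ expanded in the \emph{same} annulus as the corresponding $\log^{\pm}\lambda$. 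Finally, your plan to pin the normalization of \eqref{zh-8} by the case $k=1$ is worth carrying out literally: evaluating $\Res_{p=0}\lambda^{-k}\frac{\nd p}{p}$ for $k=1,2$ gives $v^1/(v^1-e^{v^2})^2$ and $2\theta_{2,1}/(v^1-e^{v^2})^4$, so the prefactor that reproduces the listed values of $\theta_{0,-1}$ and $\theta_{0,-2}$ (which do satisfy \eqref{re0-2}) appears to be $(-1)^k$ rather than the printed $(-1)^{k-1}k!$ --- your anchoring step is precisely what detects this normalization discrepancy.
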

\begin{proof}
    From \cite{Brini2011Integrable} we know that the functions 
    $\theta_{2,k}$ and $\theta_{0,-k}$ can be represented in terms of hypergeometric functions, then it is easy to check by using these expressions that they satisfy the relations \eqref{re2},\eqref{quasi-2} and \eqref{re0-2}, \eqref{quasi0} respectively, thus we arrive at the validity of the formulae \eqref{zh-8}, \eqref{zh-11}.
    
    Let  $\Theta_{0,k}$ be the right hand side of \eqref{zh-9}
    \[\Theta_{0,k}=\frac1{k!}\Res_{p=e^{v^2}}\lambda^k(p)\left(\log^+(\lambda(p))-H_k\right)\frac{\nd p}{p}.\]
	To prove the validity of the formula \eqref{zh-9}, we need to show that $\Theta_{0,k}$ saitsifies 
		\begin{align*}
	\p_{\alpha}\p_{\beta}\Theta_{0,k+1}&=c_{\al\beta}^{\gamma}\p_{\gamma}\Theta_{0,k},\\
	E(\Theta_{0,k})&=k\Theta_{0,k}+\theta_{2,k-1},\\
\Theta_{1,1}&=\theta_{0,1}.
	\end{align*}
		Observe that taking derivatives and residues can be swapped in order and the above equations can be checked by direct calculation, here we use  the fact that the residue is zero at the analytic point, for example 
	\begin{align*}
	&\p_1\p_2\Theta_{0,k+1}-(e^{v^2}\p_1\Theta_{0,k}+\p_2\Theta_{0,k})\\
	=&\frac{1}{k!}\Res_{p=e^{v^2}}\left(\frac{e^{v^2}}{(p-e^{v^2})^2}\lambda^{k}(\log^{+}(\lambda(p))-H_k)+k\frac{pv^1e^{v^2}}{(p-e^{v^2})^3}\lambda^{k-1}(\log^{+}(\lambda(p))-H_{k-1})\right)\\
	&-\frac{1}{(k-1)!}\Res_{p=e^{v^2}}\left(\left(\frac{e^{v^2}}{p-e^{v^2}}+\frac{e^{2v^2}v^1}{(p-e^{v^2})^2}\right)\lambda^{k-1}(\log^{+}(\lambda(p))-H_{k-1})\right)\nd p\\
	=&\Res_{p=e^{v^2}}\frac{\p}{\p p}\left(-\frac{e^{v^2}\lambda^k(p)}{k!(p-e^{v^2})}(\log^{+}(\lambda(p))-H_k)\right)\nd p=0
	\end{align*}
	where $\p_{\alpha}=\frac{\p}{\p v^{\al}}$ and the validity of \eqref{zh-10} can be verified in a similar way.

  The proposition is proved.
\end{proof}
As it is shown in \cite{liu2022generalized}, the Principal Hierarchy  
\eqref{normal}, \eqref{0flow} of $M_\textrm{AL}$ possesses a tau cover
\begin{equation}\label{tc}
     \frac{\partial f^{[0]}}{\partial t^{\alpha,k}}=f^{[0]}_{\alpha,k}, \quad 
     \frac{\partial f_{\alpha,k}^{[0]}}{\partial t^{\beta,\ell}}=\Omega^{[0]}_{\alpha,k;\beta,\ell},\quad
     \frac{\partial v^{\gamma}}{\partial t^{\alpha,k}}=\eta^{\gamma\ve}\partial_x\Omega^{[0]}_{\ve,0;\alpha,k}, 
 \end{equation}
where $\gamma=1,2$, the indices $(\al,k), (\beta,\ell)$ belong to the set
\begin{equation}\label{zh-27}
\mathcal{I}=\{(\xi, q)\mid \xi=1,2,\,q\ge 0\}\cup \{(0,q)\mid q\in\mathbb{Z}\},
\end{equation}
and the functions $\Omega_{\al,k;\beta,\ell}$ are defined by
\begin{equation}\label{omega}
    \Omega^{[0]}_{\alpha,k;\beta,\ell}=\begin{cases}
    \sum\limits_{m=0}^\ell(-1)^{m}\langle \nabla \theta_{\alpha,k+1+m}, \nabla\theta_{\beta,\ell-m}\rangle, &\alpha,\beta=1,2; k,\ell\in\mathbb{N}\\
    \sum\limits_{m=0}^\ell(-1)^{m}\langle \nabla \theta_{\alpha,k+1+m}, \nabla\theta_{\beta,\ell-m}\rangle, &
    \alpha=0,k\in\mathbb{Z};\beta=1,2,\ell\in\mathbb{N}\\
    \sum\limits_{m=0}^{\ell-1}(-1)^{m}\langle \nabla \theta_{\alpha,k+1+m}, \nabla\theta_{\beta,\ell-m}\rangle+(-1)^{\ell}\theta_{0,k+\ell}, &
    \alpha,\beta=0,\,k\in\mathbb{Z},\ell\ge 0\\
     \sum\limits_{m=0}^{-\ell-1}(-1)^{m}\langle \nabla \theta_{\alpha,k+1-m}, \nabla\theta_{\beta,\ell+m}\rangle+(-1)^{\ell}\theta_{0,p+q}, &
    \alpha,\beta=0,\, p\in\mathbb{Z},\ell<0
    \end{cases}
\end{equation}
These functions satisfy the relations
\begin{align}
    \partial_x\Omega^{[0]}_{\alpha,k;\beta,\ell}&=\frac{\partial \theta_{\alpha,k}}{\partial t^{\beta,\ell}}=\frac{\partial \theta_{\beta,\ell}}{\partial t^{\alpha,k}},\\ \Omega^{[0]}_{\alpha,k;\beta,\ell}&=\Omega^{[0]}_{\beta,\ell;\alpha,k},\\ \Omega^{[0]}_{0,0;\alpha,k}&=\theta_{\alpha,k}.
\end{align}
For a solution $(f^{[0]}, f^{[0]}_{\al,k}, v^\gamma)$ of the tau cover, we call
\begin{equation}\label{4.1.5}
\tau^{[0]}=e^{f^{[0]}}
\end{equation}
the tau function for the solution $v^1(t), v^2(t)$ of the Principal Hierarchy \eqref{normal}, \eqref{0flow}.
 
The tau cover of the Principal Hierarchy possesses an infinite number of Virasoro symmetries, they can be represented in terms of the Virasoro operators of the form
\begin{equation}\label{virop}
L_m=a_m^{\al,p;\beta,q}\ve^2\frac{\p^2 }{\p t^{\al,p}\p t^{\beta,q}}+b^{\al,p}_{m;\beta,q} t^{\beta,q}\frac{\p }{\p t^{\al,p}}+\frac1{\ve^2} c_{m;\al,p;\beta,q} t^{\al,p} t^{\beta,q}+\kappa \delta_{m,0},\quad m\ge -1,
\end{equation}
where $a_m^{\al,p;\beta,q}=a_m^{\beta,q;\al,p}, c_{m;\al,p;\beta,q}=c_{m;\beta,q;\al,p}$.
These operators satisfy the Virasoro commutation relations
\begin{equation}
    [L_m,L_n]=(m-n)L_{m+n},\quad m, n\ge -1,
\end{equation}
and have the following explicit form:
\begin{align}
L_{-1}=&\sum_{k\ge 1}\left(t^{1,k}\frac{\partial}{\partial t^{1,k-1}}+t^{2,k}\frac{\partial}{\partial t^{2,k-1}}\right)+\sum\limits_{p\in\mathbb{Z}}t^{0,p}\frac{\partial}{\partial t^{0,p-1}}+\frac1{\ve^2}t^{1,0}t^{2,0},\label{vir-1}\\
L_0=&\sum_{k\ge 1}k\left(t^{1,k}\frac{\partial}{\partial t^{1,k}}+t^{2,k-1}\frac{\partial}{\partial t^{2,k-1}}\right)+\sum_{p\in\mathbb{Z}}pt^{0,p}\frac{\partial}{\partial t^{0,p}}\notag\\
&+\sum_{k\ge 1}\left(2t^{1,k}+t^{0,k}\right)\frac{\partial}{\partial t^{2,k-1}}+\frac1{\ve^2}(t^{1,0})^2+\frac1{\ve^2}\sum_{k\ge 0}(-1)^kt^{0,-k}t^{1,k},\label{vir0}\\
L_m=&\sum_{k\ge 1}\beta_{m}(k)\left(t^{1,k}\frac{\partial}{\partial t^{1,k+m}}+t^{2,k-1}\frac{\partial}{\partial t^{2,k+m-1}}+t^{0,k}\frac{\partial}{\partial t^{0,k+m}}\right.\notag\\
&\left.+(-1)^{m+1}t^{0,-k-m}\frac{\partial}{\partial t^{0,-k}}\right)
		+\sum_{k\ge 0}\alpha_m(k)\left(2t^{1,k}+t^{0,k}\right)\frac{\partial}{\partial t^{2,k+m-1}}\notag\\
		&+\sum_{k=1-m}^{-1}\alpha_m(k)\left((-1)^k\ve^2 \frac{\partial^2}{\partial t^{2,k+m-1}\partial t^{2,-k-1}}
		+t^{0,k}\frac{\partial}{\partial t^{2,k+m-1}}\right)\notag\\
			&+\frac1{\ve^2}\sum_{k\ge 0}\alpha_m(k)(-1)^{k+m}t^{0,-k-m}t^{1,k},\quad m\ge 1.\label{vir-m}
    \end{align}
Here the constants $\al_{m}(k)$ and $\beta_{m}(k)$ are defined by
    \[\beta_{m}(k)=\frac{(m+k)!}{(k-1)!},\qquad\alpha_m(k)=\begin{cases}
   \frac{(m+k)!}{(k-1)!}\sum\limits_{j=k}^{m+k}\frac{1}{j},  &k>0,  \\
    m!, &k=0,\\
    (-1)^{k}(-k)!(k+m)!,&-m<k<0.
\end{cases}\]
The actions of these Virasoro symmetries on the tau cover of the Principal Hierarchy of $M_\textrm{AL}$ are given by
\begin{align}
\frac{\p f^{[0]}}{\p s_m}&
=a_m^{\al,p;\beta,q}f^{[0]}_{\al,p}f^{[0]}_{\beta,q}+b^{\al,p}_{m;\beta,q} t^{\beta,q}f^{[0]}_{\al,p}+c_{m;\al,p;\beta,q} t^{\al,p} t^{\beta,q},\label{zh-34}\\
\frac{\p f^{[0]}_{\al,k}}{\p s_m}&=\frac{\p}{\p t^{\al,k}}\left(\frac{\p f^{[0]}}{\p s_m}\right),\quad \frac{\p v^\gamma}{\p s_m}=\eta^{\gamma\zeta}\frac{\p}{\p x}\left(\frac{\p f^{[0]}_{\zeta,0}}{\p s_m}\right),\quad m\ge -1.\label{zh-35}
\end{align}
The flows $\frac{\p}{\p s_m}$ commute with the flows of the tau cover, i.e.
\begin{equation}\label{vir}
 \left[\frac{\p}{\p t^{\al,k}},\frac{\p}{\p s_m}\right]f^{[0]}=\left[\frac{\p}{\p t^{\beta,\ell}},\frac{\p}{\p s_m}\right]f^{[0]}_{\alpha,k}=\left[\frac{\p}{\p t^{\beta,\ell}},\frac{\p}{\p s_m}\right]v^{\gamma}=0,    
\end{equation}
for any $(\al,k)$, $(\beta,\ell)\in\mathcal{I}$, $\gamma=1,2$ and $m\ge-1$. 

\section{Deformation of the Principal Hierarchy of $M_\text{AL}$}

In this section, we consider a tau-symmetric deformation of the Principal Hierarchy \eqref{normal}, \eqref{0flow} of $M_\textrm{AL}$. We first rewrite the positive and negative flows \eqref{lax1}, \eqref{lax2} of the Ablowitz-Ladik hierarchy in terms of the new time variables $t^{2,p}, t^{0,-p}$ as follows:
\begin{align}
\frac{\partial L}{\partial t^{2,p}}&=\frac{1}{\ve(p+1)!}[(L^{p+1})_{+},L],\quad p\ge 0\label{lax1}\\
    \frac{\partial L}{\partial t^{0,-p}}&=\frac{(-1)^{p-1} (p-1)!}{\ve}[(M^p)_{-},L], \quad p\ge 1\label{lax2}
\end{align}
where the operators $L, M$ are given by \eqref{laxop1}, \eqref{laxop2}. 
Let us introduce the unknown functions
\begin{equation}\label{zh-37}
w^1=Q-P,\quad w^2=\log Q.
\end{equation}
Then the flows \eqref{lax1}, \eqref{lax2} can be represented as bihamiltonian systems of the form \cite{li2022tri, oevel1989mastersymmetries}
\begin{align}
\frac{\p}{\p t^{2,p}}\begin{pmatrix} w^1\\ w^2\end{pmatrix}
&=\mathcal{P}_0\begin{pmatrix}\frac{\delta{H_{2,p}}}{\delta w^1}\\[5pt]
\frac{\delta{H_{2,p}}}{\delta w^2}\end{pmatrix}=\frac1{p+1}\mathcal{P}_1\begin{pmatrix}\frac{\delta{H_{2,p-1}}}{\delta w^1}\\[5pt]
\frac{\delta{H_{2,p-1}}}{\delta w^2}\end{pmatrix},\quad p\ge 0,\\[3pt]
\frac{\p}{\p t^{0,-p}}\begin{pmatrix} w^1\\ w^2\end{pmatrix}
&=\mathcal{P}_0\begin{pmatrix}\frac{\delta{H_{0,-p}}}{\delta w^1}\\[5pt]
\frac{\delta{H_{0,-p}}}{\delta w^2}\end{pmatrix}=-\frac1{k}\mathcal{P}_1\begin{pmatrix}\frac{\delta{H_{0,-p-1}}}{\delta w^1}\\[5pt]
\frac{\delta{H_{0,-p-1}}}{\delta w^2}\end{pmatrix},\quad p\ge 1,
\end{align}
where Hamiltonian operators are given by
	\begin{align}\label{bho}
	    \mathcal{P}_0&=\ve^{-1}
	\begin{pmatrix}0&\Lambda-1
\\1-\Lambda^{-1}& 0\end{pmatrix},	\\
\mathcal{P}_1&=\ve^{-1}\begin{pmatrix}
    -e^{w^2}\Lambda^{-1} w^1+w^1\Lambda e^{w^2}&w^1(\Lambda-1)+e^{w^2}(1-\Lambda^{-1})\\(1-\Lambda^{-1})w^1+(\Lambda-1)e^{w^{2}}&\Lambda-\Lambda^{-1}
\end{pmatrix},\label{bho2}
	\end{align}
and the Hamiltonians 
\begin{equation}\label{zh-16}
H_{2,p}=\int h_{2,p+1}\nd x,\quad H_{0,-q}=\int h_{0,-q+1}\nd x,\quad p\ge -1,\,q\ge 1
\end{equation}
have the densities
\begin{align}
    h_{2,p}&=\frac{1}{(p+1)!}\Res L^{p+1},\quad p\ge0,\\
   h_{0,-q}&=\begin{cases}
     \frac{\ve \p_x}{1-\Lambda^{-1}} \left(w^2-\log(e^{w^2}-w^1)\right),\quad q=0,\\[5pt]
       (-1)^q(q-1)!\Res M^{q},\quad q\ge1.\label{zh-19}
    \end{cases}
\end{align}
Here the residue of an operator $Y=\sum\limits_{n\in\mathbb{Z}} Y^{n}\Lambda^{n}$ is defined by $\Res Y=Y_0$. Then the following tau-symmetry conditions hold true:
\begin{align}
    \frac{\p h_{2,p}}{\partial t^{2,q}}&=\frac{1}{(p+1)!}\Res\frac{\partial L^{p+1}}{\partial t^{2,q}}=\frac{1}{(p+1)!(q+1)!}\Res([(L^{q+1})_{+},L^{p+1}])\nonumber\\
   & =\frac{1}{(p+1)!(q+1)!}\Res([(L^{p+1})_{+},L^{q+1}])=\frac{\partial h_{2,q}}{\partial t^{2,p}},\quad p, q\ge 0;\\
  \frac{\partial h_{2,p}}{\partial t^{0,-q}}&=\frac{1}{(p+1)!}\Res\frac{\partial L^{p+1}}{\partial t^{0,-q}}=
  \frac{(-1)^{q-1}(q-1)!}{(p+1)!}\Res([(M^{q})_{-},L^{p+1}])\nonumber\\
    &=\frac{(-1)^{q-1}(q-1)!}{(p+1)!}\Res([(M^{q})_{-},(L^{p+1})_+])=
    \frac{\p h_{0,-q}}{\partial t^{2,p}},\quad p\ge 0, q\ge1;\\
 \frac{\partial h_{0,-p}}{\partial t^{0,-q}}&=(-1)^p(p-1)!\Res\frac{\partial M^{p}}{\partial t^{0,-q}}=(-1)^{q+p-1}(p-1)!(q-1)!\Res([(M^{q})_{-},M^{p}])\nonumber\\
    &=(-1)^{q+p-1}(p-1)!(q-1)!\Res([(M^{p})_{-},M^{q}])=\frac{\partial h_{0,-q}}{\partial t^{0,-p}},\quad p, q\ge1.
\end{align}
 It is easy to check that the leading terms of $h_{2,p}$, $h_{0,-p}\, (k\ge 0)$ coincide with the Hamiltonian densities $\theta_{2,p}$, $\theta_{0,-p}$ of the Principal Hierarchy, so the positive and negative flows 
\eqref{lax1}, \eqref{lax2} are tau-symmetric deformations of the flows $\frac{\p}{t^{2,p}}\, (p\ge 0)$ and $\frac{\p}{t^{0,-p}}\, (p\ge 1)$ of the Principal Hierarchy \eqref{normal}, \eqref{0flow} of $M_\textrm{AL}$.  
 
Now let us proceed to show the existence of tau-symmetric deformations of the logarithm flows $\frac{\p}{t^{1,k}},  \frac{\p}{t^{0,k}}\,(k\ge 0)$ of the Principal Hierarchy of $M_\textrm{AL}$. To this end, we need to use the canonical coordinates $u^1, u^{2}$ of the generalized Frobenius manifold $M_\text{AL}$ that are given in \cite{brini2012local}:
\begin{equation}
    u^{1}=e^{v^2}+v^1+2\sqrt{v^{1}e^{v^2}},\quad
    u^2=e^{v^2}+v^1-2\sqrt{v^{1}e^{v^2}}.
\end{equation}
The flat coordinates $v^1,v^2$ can be represented in terms of the canonical coordinates as follows:
\begin{equation}
v^{1}=\left(\frac{\sqrt{u^1}-\sqrt{u^2}}{2}\right)^2,\quad v^2=2\log\left(\frac{\sqrt{u^1}+\sqrt{u^2}}{2}\right).
\end{equation}
In the canonical coordinates, the flows of in the Principal Hierarchy of $M_\text{AL}$ can be represented as
\begin{equation}
    \frac{\p u^{i}}{\p t^{\beta,p}}=\frac{\p u^{i}}{\p v^{\gamma}}c^{\gamma\zeta}_{\xi}\frac{\p \theta_{\beta,q}}{\p v^{\zeta}}v^{\xi}_x=\eta^{\gamma\zeta}\frac{\p u^{i}}{\p v^{\gamma}}\frac{\p u^{i}}{\p v^{\xi}}\frac{\p \theta_{\beta,q}}{\p v^{\zeta}}v^{\xi}_x=\eta^{\gamma\zeta}\frac{\p u^{i}}{\p v^{\gamma}}\frac{\p \theta_{\beta,q}}{\p v^{\zeta}}u^{i}_x=V_{\beta,q}^i(u)u^{i}_x,\quad i=1,2.
\end{equation}
Here and in what follows we use Latin indices for canonical coordinates, and we do not assume summations over repeated Latin indices unless explicitly indicated. 

Let us denote by $\mathcal{A}$ the space of differential polynomials on $M_{\textrm{AL}}$, and by $\mathcal{A}^{(N)}$ the subspace of $\mathcal{A}$ consists of differential polynomials which do not depend on
\[u^{i,s}=\p_x^s u^i,\quad i=1,2\]
 for $s>N$. We also denote by $\mathcal{A}_k$, $\mathcal{A}^{(N)}_k$ the subspaces of $\mathcal{A}$ and $\mathcal{A}^{(N)}$ consisted of elements of differential degree $k$.

We need the following lemma and its corollary to prove the existence of tau-symmetric deformations of the logarithm flows.
\begin{lem}\label{1}
	 Let $g_q(u,u_x,\dots,u^{(N+1)})\in\mathcal{A}^{(N+1)}_{\ge 1}$ ( $q\ge 1$) satisfy the conditions
	 \begin{equation}\label{sym}
	 \frac{\partial g_q}{\partial t^r}=\frac{\partial g_r}{\partial t^q},\quad r, q\ge 1,
	 \end{equation}
with respect to the flows $\frac{\p}{\p t_q}$ of the form
	 \begin{equation}
	 	\frac{\p u^i}{\p t^{q}}=V^{i}_q(u) u_{x}^i,\quad i=1,2,\, q\ge 1,
	 \end{equation}
which are assumed to commute with each other, and the nonzero smooth functions $V^{i}_q(u)$ satisfy the conditions
	\begin{align}
&V_q^i(u)V_r^j(u)-V_r^i(u)V_q^j(u)\neq 0,\quad i\neq j, \, q\neq r,\label{zh-12}\\
&\frac{\p V^i_q}{\p u^i}V^i_r-\frac{\p V^i_r}{\p u^i}V^i_q\neq 0,\quad q\ne r.\label{zh-14}
\end{align}
Then there exists $f(u,u_x,\dots,u^{(N)})\in \mathcal{A}^{(N)}$ such that
	\[\frac{\partial f}{\partial t^q}=g_q,\quad q\ge 1.\]
		\end{lem}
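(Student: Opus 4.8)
The plan is to solve the equations $\partial f/\partial t^q = g_q$ order by order in the differential degree, reducing at each step to an integration problem on the jet space. First I would argue that it suffices to treat the case where each $g_q$ is homogeneous of a fixed differential degree $d\ge 1$; the general case follows by summing the resulting pieces. So assume $g_q\in\mathcal{A}^{(N+1)}_d$ with $d\ge 1$. The desired $f$ should then lie in $\mathcal{A}^{(N)}_{d-1}$ (differentiating by a flow of hydrodynamic type raises differential degree by one), and the key observation is that the compatibility conditions \eqref{sym}, together with the commutativity of the flows $\frac{\p}{\p t^q}$, are exactly the obstruction-vanishing conditions needed to integrate.

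The core step is the following. Consider the linear map $f\mapsto \left(\frac{\p f}{\p t^q}\right)_{q\ge 1}$ from $\mathcal{A}^{(N)}_{d-1}$ to tuples in $\mathcal{A}^{(N+1)}_d$. One must show (i) its image is precisely the set of tuples satisfying \eqref{sym}, and (ii) it is essentially injective, so the solution is unique up to the obvious ambiguity. For surjectivity onto the symmetric tuples I would pick two distinct indices, say $q=1$ and $q=2$, and use the nondegeneracy conditions \eqref{zh-12}, \eqref{zh-14} to invert the pair of flows $\frac{\p}{\p t^1}, \frac{\p}{\p t^2}$ locally on the jet coordinates: conditions \eqref{zh-12} and \eqref{zh-14} ensure that the two vector fields (and their prolongations in the $u^{i}, u^{i,1}$ directions) are linearly independent, so one can change coordinates on the relevant part of the jet space to $(t^1, t^2, \text{rest})$ and write $f$ as an iterated integral of $g_1$ (in $t^1$) and $g_2$ (in $t^2$); condition \eqref{sym} with $r,q\in\{1,2\}$ guarantees this iterated integral is path-independent, hence well-defined, and that it produces a differential polynomial rather than just a local function — here one checks degree by degree that no negative powers or transcendental functions of the jet variables are introduced, using that $g_q\in\mathcal{A}$. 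Then, for $q\ge 3$, I would verify $\frac{\p f}{\p t^q}=g_q$ by showing both sides have the same derivatives with respect to $t^1$ and $t^2$: indeed $\frac{\p}{\p t^1}\frac{\p f}{\p t^q}=\frac{\p}{\p t^q}\frac{\p f}{\p t^1}=\frac{\p g_1}{\p t^q}=\frac{\p g_q}{\p t^1}$ by \eqref{sym} and commutativity, and similarly for $t^2$; combined with injectivity of $(\frac{\p}{\p t^1},\frac{\p}{\p t^2})$ on differential polynomials of the appropriate degree (again via \eqref{zh-12}, \eqref{zh-14}), this forces $\frac{\p f}{\p t^q}=g_q$ up to a constant, which one absorbs.

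I expect the main obstacle to be step (i) at the level of \emph{differential polynomiality}: inverting hydrodynamic flows is a priori only a local operation on the jet space and could introduce functions outside $\mathcal{A}$ (rational expressions in $u^{i,1}$, logarithms, etc.). Controlling this requires a careful induction on differential degree in which one tracks how the prolonged vector fields $\frac{\p}{\p t^q}$ act on each jet variable $u^{i,s}$ — the point being that $\frac{\p u^{i,s}}{\p t^q}$ equals $V^i_q u^{i,s+1}$ plus lower-order corrections, so the "highest weight" part of the inversion is just division by $V^i_q$, a nonzero function of $u$ only, and the corrections can be handled recursively. The conditions \eqref{zh-12} and \eqref{zh-14} are precisely what is needed to make this recursion nondegenerate at every step, and the symmetry condition \eqref{sym} is what makes the formal solution closed under the two chosen flows. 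Once differential polynomiality is secured, the remaining verifications are straightforward bookkeeping.
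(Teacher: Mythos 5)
Your proposal locates the real difficulty correctly --- producing a primitive that is again a differential polynomial of one lower jet order --- but the mechanism you offer for overcoming it does not work, and it is not how the hypotheses \eqref{zh-12}, \eqref{zh-14} actually enter. ``Inverting the pair of flows $\partial/\partial t^1,\partial/\partial t^2$'' and passing to coordinates $(t^1,t^2,\mathrm{rest})$ is an operation on the infinite jet space: the prolonged evolutionary vector fields involve all of the jet variables $u^{i,s}$, and an iterated integral of $g_1,g_2$ along their trajectories depends on a choice of transversal initial data and has no reason to produce a differential polynomial, let alone one lying in $\mathcal{A}^{(N)}$. Your proposed repair --- that the ``highest weight'' part of the inversion is division by $V^i_q$ and the corrections can be handled recursively --- is precisely where the entire content of the lemma sits, and it is left unproved. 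In particular you never establish the structural fact that makes any descent in jet order possible: that \eqref{sym} together with \eqref{zh-12}, \eqref{zh-14} forces each $g_q$ to be \emph{linear} in the top variables $u^{i,N+1}$, of the very specific form $g_q=V^1_qH^1u^{1,N+1}+V^2_qH^2u^{2,N+1}+B_q$ with $H^1,H^2\in\mathcal{A}^{(N)}$ \emph{independent of} $q$. Without this, the claim that a primitive lies in $\mathcal{A}^{(N)}$ is not even plausible a priori.

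The paper's proof never integrates along the $t$-flows at all; it is purely algebraic in the jet variables. Comparing coefficients of $u^{i,N+2}$ in \eqref{sym} gives $\frac{\partial g_q}{\partial u^{i,N+1}}V^i_r=\frac{\partial g_r}{\partial u^{i,N+1}}V^i_q$; condition \eqref{zh-12} then kills the mixed second derivatives in $u^{1,N+1},u^{2,N+1}$, and condition \eqref{zh-14}, via a lexicographic-order induction on monomials, rules out any nonlinearity in $u^{i,N+1}$, yielding the form above. A further coefficient comparison gives $\partial H^1/\partial u^{2,N}=\partial H^2/\partial u^{1,N}$, i.e.\ a closed $1$-form in the two variables $u^{1,N},u^{2,N}$, which integrates \emph{within} the class of differential polynomials to some $\tilde f\in\mathcal{A}^{(N)}$; subtracting $\partial\tilde f/\partial t^q$ from $g_q$ strictly lowers the jet order, and one concludes by induction on $N$. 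If you want to salvage your outline, this coefficient-extraction argument is the step you must supply; the ``construct $f$ from two flows and check the remaining $q$ by injectivity'' scheme can at best serve as the closing uniqueness remark, not as the construction itself.
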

\begin{proof}
We first show the existence of $H^1, H^2, B_q\in\mathcal{A}^{(N)}$ such that $g_q\, (q\ge 1)$ has the following expression: 
\begin{equation}\label{gq}
g_q=V^1_q H^1(u,\dots,u^{(N)})u^{1,N+1}+V_q^2 H^2(u,\dots,u^{(N)})u^{2,N+1}+B_q(u,\dots,u^{(N)}).
\end{equation}
Let us represent the conditions \eqref{sym} in the form
\begin{align}
      \sum\limits_{i=1}^2\left(\frac{\partial g_q}{\partial u^{i,N+1}}\partial_x^{N+1}(V_r^{i}u^{i,1})+ \frac{\partial g_q}{\partial u^{i,N}}\partial_x^{N}(V_r^{i}u^{i,1})+\dots\right)\notag\\
      =\sum\limits_{i=1}^2\left(\frac{\partial g_r}{\partial u^{i,N+1}}\partial_x^{N+1}(V_q^{i}u^{i,1})+ \frac{\partial g_r}{\partial u^{i,N}}\partial_x^{N}(V_q^{i}u^{i,1})+\dots\right),\label{11}
\end{align}
By comparing the coefficients of $u^{i,N+2}$ on both sides of the above equations we obtain
\begin{equation}\label{N+2}
   \frac{\partial g_q}{\partial u^{i,N+1}}V^{i}_r=\frac{\partial g_r}{\partial u^{i,N+1}}V^{i}_q,\quad \text{for}\ i=1,2. 
\end{equation}
Thus by using the condition \eqref{zh-12} we arrive at
\[\frac{\partial^2 g_q}{\partial u^{i,N+1}\partial u^{j,N+1}}=0,\quad i\ne j,\]
and we can represent $g_q$ in the form
\[
    g_q=V_q^1H_1(u,\dots,u^{(N)},u^{1,N+1})+V_q^2H_2(u\dots,u^{(N)},u^{2,N+1})+A_q(u,\dots,u^{(N)}),
\]
here $H_1, H_2, A_q$ are certain differential polynomials.

Let us proceed to show that $H_{1}$ depends on $u^{1,N+1}$ linearly. To this end we define a lexicographical order on the set of monomials as follows:
\[
 (u^{1,N+1})^{\beta_{N+1}}(u^{1,N})^{\beta_{N}}\dots(u^{1,1})^{\beta_1}\preceq(u^{1,N+1})^{\gamma_{N+1}}(u^{1,N})^{\gamma_{N}}\dots(u^{1,1})^{\gamma_1}\]
 if and only if either $\beta_{N+1}<\gamma_{N+1}$ or $\beta_{N+1}=\gamma_{N+1}$ and 
 \[(u^{1,N})^{\beta_{N}}(u^{1,N-1})^{\beta_{N-1}}\dots(u^{1,1})^{\beta_1}\preceq(u^{1,N})^{\gamma_{N}}(u^{1.N-1})^{\gamma_{N-1}}\dots(u^{1,1})^{\gamma_1}.\]
Then $H_1$ can be written as the sum of the highest order monomial and lower order ones as follows:
 \[H_1=c(u^{1,N+1})^{\beta_{N+1}}(u^{1,N})^{\beta_N}\dots(u^{1,1})^{\beta_1}+R,\]
 where $c\in\mathcal{A}^{(N)}$ do not depend on $u^{1,1},\dots,u^{1,N}$, and $c\ne 0$.
 If $\beta_{N+1}>1$, then by applying the derivatives
\[\frac{\p}{\p u^{2,N+1}}\left(\frac{\p}{\p u^{1,N+1}}\right)^{\beta_{N+1}}\]
on both sides of \eqref{11} obtain 
 \begin{equation}
     (V^2_q V^1_r-V^1_q V^2_r)\frac{\partial c}{\partial u^{2,N}}=0,
 \end{equation}
thus we have $\frac{\partial c}{\partial u^{2,N}}=0$. Similarly, by applying 
the derivatives
\[\frac{\p}{\p u^{2,N}}\left(\frac{\p}{\p u^{1,N}}\right)^{\beta_{N}}\left(\frac{\p}{\p u^{1,N+1}}\right)^{\beta_{N+1}}\]
on both sides of \eqref{11} we know that 
$\frac{\partial c}{\partial u^{2,N-1}}=0$. In this way we arrive at $c\in \mathcal{A}_0$. Now from the coefficients of $(u^{1,N+1})^{\beta_{N+1}}(u^{1,N})^{\beta_N}\dots(u^{1,1})^{\beta_1+1}$ on both sides of \eqref{11} we obtain
 \begin{equation}
   c\left(\beta_1+3\beta_2\dots+(N+2)\beta_{N+1}-1\right)\left(\frac{\partial V^1_r}{\partial u^1}V^1_q-\frac{\partial V^1_q}{\partial u^1}V^1_r\right)=0.
 \end{equation}
Then from the condition \eqref{zh-14} it follows that
$c=0$, which contradicts our assumption that 
$c\ne 0$. Thus we arrive at the fact that $\beta_{N+1}=1$, i.e, $H_1$ depends linearly on $u^{1,N+1}$. We can show in the same way that $H_2$
depends on $u^{2,N+1}$ also linearly.

Now we can assume that $g_q, g_r$ have the form \eqref{gq}. By taking the derivative
\[\frac{\partial ^2}{\partial u^{1,N+1}\partial u^{2,N+1}}\]
on both sides of \eqref{11}
we obtain
 \[(V^2_q V^1_r-V^1_q V^2_r)\left(\frac{\partial H^1}{\partial u^{2,N}}-\frac{\partial H^2}{\partial u^{1,N}}\right)=0,\]
so we have
\[
    \frac{\partial H^1}{\partial u^{2,N}}=\frac{\partial H^2}{\partial u^{1,N}}.
\]
Hence there exists $\tilde{f}\in\mathcal{A}^{(N)}$ such that
\[\frac{\partial \tilde{f}}{\partial u^{i,N}}=H^{i},\quad i=1,2.\]
Let $\tilde{g}_q=g_q-\frac{\partial \tilde{f}}{\partial t^q}$, then we have 
\[\frac{\partial\tilde{g}_q}{\partial t^r}=\frac{\partial\tilde{g}_r}{\partial t^q},\]
for any $q,r\ge 1$, and $\tilde{g}_q\in\mathcal{A}^{(N)}$. Thus by induction on $N$, we can reduce the proof of the lemma to the case when 
$g_q\in\mathcal{A}^{(1)}_{\ge  1}$, and to prove the existence of $f\in \mathcal{A}^{(0)}$ such that 
\[\frac{\partial f}{\partial t^q}=g_q.\]
In fact, by repeating the above discussion and by using the commutativity of the flows $\frac{\p}{\p t^q}\,(q\ge 1)$, we can show that $g_q$ can be represented in the form
\[g_q=V^1_q H^1(u)u^{1,1}+V^2_q H^2(u)u^{2,1},\,\text{with}\
  \frac{\partial H^1}{\partial u^2}=\frac{\partial H^2}{\partial u^1}.\]
Hence the lemma is proved.
\end{proof}
	\begin{rem}
It can be checked that the flows of the Principal Hierarchy of $M_{\text{AL}}$ satisfy the conditions \eqref{zh-12}, \eqref{zh-14} of Lemma \ref{1}.
\end{rem}
\begin{cor}\label{w-2}
Assume that the differential polynomials 
\begin{equation}
	g_q=g_{q,0}+\ve g_{q,1}+\dots,\quad g_{q,k}\in\mathcal{A}_{k+1},\quad k\ge 0,\,q\ge 1
\end{equation}
satisfy the condition
\begin{equation}
	\frac{\p g_q}{\p t^{r}}=\frac{\p g_r}{\p t^{q}},\quad q, r\ge 1
\end{equation}
with respect to the commuting flows of the form
\begin{equation}\label{zh-15}
\frac{\p u^i}{\p t^q}=V^i_q(u) u^{i,1}+\ve U^i_1+\ve^2 U^i_2+\dots,
\quad U^i_k\in\mathcal{A}_{k+1},\, k\ge 1,
\end{equation}
here the leading terms of the flows satisfy the conditions given in Lemma \ref{1}. Then there exists $f\in\mathcal{A}$ such that 
\begin{equation}
	\frac{\p f}{\p t^{q}}=g_q,\quad q\ge 1.
\end{equation} 
\end{cor}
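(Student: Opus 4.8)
The plan is to reduce Corollary \ref{w-2} to Lemma \ref{1} by an induction on the dispersion order $\ve$. Write $g_q=\sum_{k\ge 0}\ve^k g_{q,k}$ and seek $f=\sum_{k\ge 0}\ve^k f_k$ with $f_k\in\mathcal{A}$ of differential degree $k$. At order $\ve^0$ the condition $\frac{\p g_q}{\p t^r}=\frac{\p g_r}{\p t^q}$ restricted to its leading term reads exactly $\frac{\p g_{q,0}}{\p t^r_{[0]}}=\frac{\p g_{r,0}}{\p t^q_{[0]}}$, where $\frac{\p}{\p t^q_{[0]}}$ denotes the dispersionless flow $V^i_q(u)u^{i,1}$. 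Since by hypothesis the leading terms satisfy the conditions of Lemma \ref{1} (and by the Remark these hold for the Principal Hierarchy of $M_{\textrm{AL}}$), Lemma \ref{1} produces $f_0\in\mathcal{A}^{(N_0)}$ with $\frac{\p f_0}{\p t^q_{[0]}}=g_{q,0}$ for all $q\ge 1$.

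Next I would set $\hat g_q:=g_q-\frac{\p f_0}{\p t^q}$, using the full dispersionful flow $\frac{\p}{\p t^q}$ of \eqref{zh-15}. Because $\frac{\p f_0}{\p t^q_{[0]}}=g_{q,0}$, the differential polynomial $\hat g_q$ has no $\ve^0$-term, so $\hat g_q=\ve\,\tilde g_q$ with $\tilde g_q=\sum_{k\ge 0}\ve^k \tilde g_{q,k}$, $\tilde g_{q,k}\in\mathcal{A}_{k+1}$. The compatibility condition is preserved under this subtraction: from $\frac{\p g_q}{\p t^r}=\frac{\p g_r}{\p t^q}$ and the commutativity $[\frac{\p}{\p t^q},\frac{\p}{\p t^r}]=0$ of the flows \eqref{zh-15} one gets $\frac{\p \hat g_q}{\p t^r}=\frac{\p g_q}{\p t^r}-\frac{\p}{\p t^r}\frac{\p f_0}{\p t^q}=\frac{\p g_r}{\p t^q}-\frac{\p}{\p t^q}\frac{\p f_0}{\p t^r}=\frac{\p \hat g_r}{\p t^q}$, hence also $\frac{\p \tilde g_q}{\p t^r}=\frac{\p \tilde g_r}{\p t^q}$. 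So $\tilde g_q$ again satisfies the hypotheses of the corollary, with dispersion order shifted down by one. Formally one should phrase the induction on the minimal $\ve$-degree appearing in $(g_q)$: at stage $k$ we have $g_q=\ve^k\sum_{j\ge0}\ve^j g^{(k)}_{q,j}$ with $g^{(k)}_{q,0}\in\mathcal{A}_{k+1}$, its leading term satisfies the dispersionless compatibility, Lemma \ref{1} gives $f_k$ with $\frac{\p f_k}{\p t^q_{[0]}}=g^{(k)}_{q,0}$, and replacing $g_q$ by $g_q-\ve^k\frac{\p f_k}{\p t^q}$ raises the minimal $\ve$-degree by at least one.

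Finally I would assemble $f=\sum_{k\ge0}\ve^k f_k$, interpreted as a formal power series in $\ve$ with coefficients in $\mathcal{A}$ — which is what "$f\in\mathcal{A}$" means in the dispersionful setting of this paper — and check $\frac{\p f}{\p t^q}=g_q$ order by order in $\ve$: at each order the contribution is exactly the equation $g^{(k)}_{q,0}=\frac{\p f_k}{\p t^q_{[0]}}$ plus lower-order terms already absorbed at previous stages, so the telescoping is exact. One small point requiring care: applying $\frac{\p}{\p t^q}$ to $f_k\in\mathcal{A}^{(N_k)}$ may raise the differential order, so the orders $N_k$ are not uniformly bounded; this is harmless because we only claim $f\in\mathcal{A}$ as a formal series, not membership in a fixed $\mathcal{A}^{(N)}$.

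The main obstacle is the verification that the leading-order hypotheses of Lemma \ref{1} are genuinely inherited at every stage of the induction — i.e. that after subtracting $\ve^k\frac{\p f_k}{\p t^q}$ the new leading coefficients $g^{(k+1)}_{q,0}$ are still of the form required and, crucially, that the relevant \emph{flows} whose leading terms enter Lemma \ref{1} are unchanged (they are, since \eqref{zh-15} is fixed throughout and only the $g_q$ are modified). The genuinely computational burden is confined to checking, once and for all in the Remark, the nondegeneracy conditions \eqref{zh-12} and \eqref{zh-14} for the functions $V^i_{\beta,q}(u)$ of the Principal Hierarchy of $M_{\textrm{AL}}$; granting that, the corollary is a clean induction with Lemma \ref{1} as the engine.
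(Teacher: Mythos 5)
Your proposal is correct and follows essentially the same route as the paper's proof: apply Lemma \ref{1} to the leading terms to obtain $f_0$, subtract the full dispersionful derivative $\frac{\p f_0}{\p t^q}$ so that the remainder starts at order $\ve$ while the compatibility condition is preserved by commutativity of the flows, and iterate to build $f=\sum_{k\ge 0}\ve^k f_k$. The additional points you raise (the flows entering Lemma \ref{1} are unchanged throughout the induction, and $f$ is only a formal $\ve$-series) are consistent with how the paper treats these objects.
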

\begin{proof}
	We denote the flows given by the leading terms of \eqref{zh-15} by
	\[\frac{\p u^i}{\p t^q_0}=V^i_q(u) u^{i,1},\quad i=1,2,\, q\ge 1.\]
Then it follows from Lemma \ref{1} the existence of $f_0\in\mathcal{A}_0$ such that
	\begin{equation}
		\frac{\p f_0}{\p t^{q}_0}=g_{q,0},\quad q\ge 1.
	\end{equation} 
	Denote
	\begin{equation}
		g_q^{[1]}=g_q-\frac{\p f_0}{\p t^{q}}=\ve g_{q,1}^{[1]}+\ve^2 g_{q,2}^{[1]}+\dots,\quad q\ge 1,
	\end{equation}
	then we
	\begin{equation}
		\frac{\p g_q^{[1]}}{\p t^{r}}=\frac{\p g_r^{[1]}}{\p t^{q}},\quad q, r\ge 1.
	\end{equation}
Thus by using Lemma \ref{1} again we can find $f_1\in\mathcal{A}_1$ such that 
	\begin{equation}
		\frac{\p f_1}{\p t^{q}_0}=g_{q,1}^{[1]},\quad q\ge 1.
	\end{equation}
By continuing this procedure in a recursive way, we can find $f_k\in\mathcal{A}_k, k\ge 2$ such that
\[g_q^{[k+1]}=g_q^{[k]}-\ve^k \frac{\p f_k}{\p t^q}=\ve^{k+1} g_{q,k+1}^{[k+1]}+\ve^{k+2} g_{q,k+2}^{[k+1]}+\dots.\]
Let $f=f_0+\ve f_1+\ve^2 f_2+\dots$, then we have
\[\frac{\p f}{\p t^q}=g_q,\quad q\ge 1.\]
The corollary is proved.
\end{proof}

We know from Theorem 6.2 of \cite{dubrovin2018bihamiltonian} that there exist bihamiltonian conserved quantities 
\[H_{0,p}=\int h_{0,p+1}(w, w_x,\dots)\nd x, \quad H_{1,p}=\int h_{1,p+1}(w,w_x,\dots)\nd x,\quad p\ge -1\]
of the bihamiltonian structure $(\mathcal{P}_0,\mathcal{P}_1)$
with densities of the form
	\[
	h_{\al,p}(w,w_x,\dots)=\theta_{\al,p}(w)+\ve h_{\al,p}^{[1]}(w,w_x)+\ve^2 h_{\al,p}^{[2]}(w,w_x,w_{xx})+\dots,\quad \al=0,1,\,p\ge 0,\]
here $h_{\al,p}^{[k]}\in\mathcal{A}_k$.
	These Hamiltonians together with the ones defined in \eqref{zh-16} are in involution with respect to the bihamiltonian structure, i.e.,
	\begin{equation}\label{w-0}
	\{H_{\al,p},H_{\beta,q}\}_{\mathcal{P}_0}=\{H_{\al,p},H_{\beta,q}\}_{\mathcal{P}_1}=0,\quad (\al,p),\,(\beta,q)\in\mathcal{I},
	\end{equation}
	they yield the following deformation of the Principal Hierarchy of $M_\textrm{AL}$:
	\begin{equation}\label{w-1}
	\frac{\p w^{\al}}{\p t^{\beta,q}}=\mathcal{P}_0^{\al\gamma}\frac{\delta H_{\beta,p}}{\delta w^{\gamma}},\quad \al=1, 2,\, (\beta,q)\in\mathcal{I},
	\end{equation}
which consists of mutually commutative flows. Since the flow given by the translation along the spatial variable $x$ is a symmetry 
for the flows of \eqref{w-1}, we know from Corollary A.4 of \cite{dubrovin2018bihamiltonian} that the $\frac{\p}{\p t^{0,0}}$-flow of \eqref{w-1} is given by
\[\frac{\p w^\al}{\p t^{0,0}}=w^\al_x,\quad \al=1,2.\]
Thus we will identify the spatial variable $x$ with $t^{0,0}$ in what follows.
We call the bihamiltonian integrable hierarchy \eqref{w-1} the extended Ablowitz-Ladik hierarchy.

The next proposition shows that we can choose the densities of the above Hamiltonians such that they are tau-symmetric.

\begin{pro}\label{leading0}
The densities $h_{0,p}(w, w_x,\dots), h_{1,p}(w, w_x,\dots)$ of the Hamiltonians $H_{0,p-1}$ and $H_{1,p-1}\,(p\ge 0)$ can be chosen to satisfy the tau-symmetry conditions
\begin{align}
    \frac{\partial h_{0,p}}{\partial t^{0,m}}&=\frac{\partial h_{0,m}}{\partial t^{0,p}},\quad \frac{\partial h_{0,p}}{\partial t^{2,q}}=\frac{\partial h_{2,q}}{\partial t^{0,p}},\quad p, q\ge 0, m\in\mathbb{Z},\\
    \frac{\partial h_{1,p}}{\partial t^{2,q}}&=\frac{\partial h_{2,q}}{\partial t^{1,p}},\quad \frac{\partial h_{1,p}}{\partial t^{0,m}}=\frac{\partial h_{0,m}}{\partial t^{1,p}},\quad\frac{\partial h_{1,p}}{\partial t^{1,q}}=\frac{\partial h_{1,q}}{\partial t^{1,p}},\quad p, q\ge 0, m\in\mathbb{Z}.\label{zh-18}
\end{align}
\end{pro}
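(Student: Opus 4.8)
The plan is to construct the tau-symmetric densities by fixing, step by step, the freedom of adding a total $x$-derivative to each Hamiltonian density, with Corollary \ref{w-2} (really, the iteration of Lemma \ref{1} carried out in its proof) as the existence tool. Throughout I pass to the canonical coordinates $u^1,u^2$, in which each flow of the extended AL hierarchy \eqref{w-1} has the form $\frac{\p u^i}{\p t^a}=V^i_a(u)\,u^{i,1}+\ve(\cdots)$ for $a\in\mathcal I$; these flows mutually commute, and their leading terms — which coincide with those of the Principal Hierarchy of $M_\textrm{AL}$ — satisfy \eqref{zh-12}, \eqref{zh-14} by the Remark after Lemma \ref{1}, so the hypotheses of Corollary \ref{w-2} hold for any countable subfamily. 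Three elementary facts will be used. (i) If $h$ is a density of a Hamiltonian $H$ of \eqref{w-1} and $\frac{\p}{\p t^a}$ is any of its flows, then $\int\frac{\p h}{\p t^a}\,\nd x=\{H,H_a\}_{\mathcal P_0}=0$ by the involutivity \eqref{w-0}, so $\frac{\p h}{\p t^a}$ is a total $x$-derivative. (ii) Every flow $\frac{\p}{\p t^a}$ has no term of differential degree $0$, so $\frac{\p F}{\p t^a}$ vanishes at every configuration where all $x$-derivatives of $u^1,u^2$ are zero; consequently any element of $\mathcal A$ of the form $\frac{\p G}{\p t^a}-\frac{\p G'}{\p t^b}$ which happens to be a constant must be zero. (iii) At the dispersionless level the $\theta_{\al,k},\theta_{0,\ell}$ form the tau structure \eqref{omega}, so $\frac{\p\theta_{\al,k}}{\p t^{\beta,\ell}}=\p_x\Omega^{[0]}_{\al,k;\beta,\ell}$ is symmetric under exchange of the two index pairs.

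I carry this out in two stages. In the first stage I choose $h_{0,0},h_{0,1},\dots$ in turn, making $h_{0,p}$ — a density of $H_{0,p-1}$, with leading term $\theta_{0,p}$ — tau-symmetric with every member of the family $\mathcal J_p$ formed by the AL densities $h_{2,q}\,(q\ge0)$, $h_{0,m}\,(m\le-1)$, and the already chosen $h_{0,m}\,(0\le m\le p-1)$; in the second stage I choose $h_{1,0},h_{1,1},\dots$ in the same way, with $\mathcal J_p$ now enlarged by all $h_{0,m}\,(m\in\mathbb Z)$ and the already chosen $h_{1,m}\,(0\le m\le p-1)$. At a typical step, let $\xi$ be the target index, $\tilde h$ a density of the target Hamiltonian furnished by Theorem 6.2 of \cite{dubrovin2018bihamiltonian} (with leading term the appropriate $\theta$), and for each $a\in\mathcal J_p$ put $g_a:=\frac{\p h_a}{\p t^{\xi}}-\frac{\p \tilde h}{\p t^{a}}$. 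By (i) each $g_a$ is a total $x$-derivative, and by (iii) — using that the leading terms of $h_a$ and $\tilde h$ are the corresponding $\theta$'s — the dispersionless part of $g_a$ vanishes, so an antiderivative $\hat g_a$ of $g_a$ can be chosen with zero dispersionless part, which is then of the shape to which the iteration of Lemma \ref{1} applies. Differentiating and using the commutativity of the flows together with the (inductively known) tau-symmetry within $\{h_a\}_{a\in\mathcal J_p}$ gives
\[\p_x\!\left(\frac{\p\hat g_a}{\p t^b}-\frac{\p\hat g_b}{\p t^a}\right)=\frac{\p}{\p t^\xi}\!\left(\frac{\p h_a}{\p t^b}-\frac{\p h_b}{\p t^a}\right)=0,\]
so $\frac{\p\hat g_a}{\p t^b}-\frac{\p\hat g_b}{\p t^a}$ is a constant, hence $0$ by (ii). Corollary \ref{w-2} then produces $\sigma\in\mathcal A$ with $\frac{\p\sigma}{\p t^a}=\hat g_a$ for all $a\in\mathcal J_p$, and $h_{0,p}:=\tilde h+\p_x\sigma$ (resp.\ $h_{1,p}:=\tilde h+\p_x\sigma$) is then a density of the same Hamiltonian which, by construction, satisfies $\frac{\p h_{0,p}}{\p t^a}=\frac{\p h_a}{\p t^\xi}$ for all $a\in\mathcal J_p$. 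This yields every tau-symmetry relation between $h_{0,p}$ (resp.\ $h_{1,p}$) and a density of smaller index; the relations with densities of larger index then follow by the symmetry of the relation once those densities are constructed, so after both stages all the identities in the statement hold.

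The crux is the passage from ``$\frac{\p\hat g_a}{\p t^b}-\frac{\p\hat g_b}{\p t^a}$ is independent of $x$'' to ``it vanishes'': this is exactly where fact (ii) is indispensable, and the surrounding argument only runs because fact (iii) makes the dispersionless parts cancel, which in turn is what permits applying Lemma \ref{1}/Corollary \ref{w-2} to the antiderivatives $\hat g_a$. I expect the routine-but-delicate part to be the $\ve$/differential-degree bookkeeping needed to check that $g_a$ and $\hat g_a$ lie in the correct homogeneous components — so that $\p_x\sigma$ really has the grading of a density correction and the input to Corollary \ref{w-2} is legitimate — together with the organization of the double induction so that at each step the family $\mathcal J_p$ is already pairwise tau-symmetric and no circularity occurs.
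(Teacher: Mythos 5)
Your proposal is correct in substance and rests on the same key tool as the paper — Corollary \ref{w-2} (the iterated Lemma \ref{1}) applied to a compatible family of derivatives — but it is organized quite differently. The paper applies Corollary \ref{w-2} only to the single family $g_{k,q}=\p h_{2,q}/\p t^{0,k}$, $q\ge 0$, producing a density $\tilde h_{0,k}$ \emph{from scratch} as a potential for these; it must then invoke the uniqueness of bihamiltonian conserved quantities (Corollary A.4 of \cite{dubrovin2018bihamiltonian}) to identify $\int\tilde h_{0,k}\,\dd x$ with the pre-given $H_{0,k-1}$, and it obtains the remaining identities (among the $h_{0,p}$ themselves and with the $h_{0,-q}$) by the cheaper observation that the defect is annihilated by every $\p/\p t^{2,k}$ and has vanishing leading term, hence is zero. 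You instead start from a density $\tilde h$ of the correct Hamiltonian and correct it by an exact term $\p_x\sigma$, which makes the identification with $H_{0,p-1}$ automatic and dispenses with the uniqueness theorem; the price is that you must pass to antiderivatives $\hat g_a$ (your facts (i)--(iii) and the constant-vanishing argument handle this cleanly, and the $\ve$-grading mismatch you worry about is resolved by noting $\hat g_a/\ve$ has exactly the homogeneity required by Corollary \ref{w-2}), and, more seriously, that you feed Corollary \ref{w-2} the \emph{entire} family $\mathcal J_p$ at once. That is the one point where your argument needs more care than the paper's: Lemma \ref{1} demands the nondegeneracy conditions \eqref{zh-12}, \eqref{zh-14} for every pair of flows in the family used, and your $\mathcal J_p$ contains the negative flows and the spatial translation $\p/\p t^{0,0}=\p_x$ (for which $V^i=1$ and both conditions degenerate to conditions on the other flow of the pair). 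The safe repair is exactly the paper's device: run Corollary \ref{w-2} only over the subfamily $\{t^{2,q}\}_{q\ge0}$ to produce $\sigma$, and then deduce the remaining relations $\p h_{0,p}/\p t^a=\p h_a/\p t^{0,p}$ for the other $a\in\mathcal J_p$ from the fact that the defect is killed by all $\p/\p t^{2,k}$ and has zero dispersionless part. With that adjustment your proof goes through, and apart from it the two arguments are of essentially equal strength; the paper additionally treats $h_{0,0}$ by an explicit Lax-operator computation, which your scheme replaces by the abstract base case of the induction.
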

\begin{proof}
Let us first show that the function $h_{0,0}$ defined in \eqref{zh-19}
satisfies the condition
\[\frac{\p h_{0,0}}{\p t^{2,q}}=\frac{\p h_{2,q}}{\p t^{0,0}},\quad q\ge 0.\]
From \cite{li2022tri} we know that the variational derivatives of the Hamiltonians $H_{2,q}$ with respect to the unkown functions $P, Q$ have the expressions
\[
\frac{\delta H_{2,q}}{\delta P}=-\frac{1}{(q+1)!}\mathrm{Res}(L^{q+1}B^{-1}),\quad  \frac{\delta H_{2,q}}{\delta Q}=\frac{1}{(q+1)!}\mathrm{Res}(\Lambda ^{-1}L^{q+2}B^{-1}),\]
and the flows $\frac{\p}{\p t^{2,q}}$ can be represented in the form
\begin{align*}
\epsilon\frac{\partial P}{\partial t^{2,q}}&=\frac{1}{q+1}P(\Lambda-1)Q\frac{\delta H_{2,q-1}}{\delta Q}=\frac{1}{(q+1)!}P(\Lambda-1)Q\Res (\Lambda^{-1}L^{q+1}B^{-1}),\\
\epsilon\frac{\partial Q}{\partial t^{2,q}}&=Q(\Lambda^{-1}-1)\frac{\delta H_{2,q}}{\delta P}=\frac{1}{(q+1)!}Q(1-\Lambda^{-1})\Res (L^{q+1}B^{-1}).
\end{align*}
Hence we have
\begin{align*}
\frac{\p h_{0,0}}{\p t^{2,q}}&=\frac{1}{(q+1)!}\p_x (\Res( L^{q+1}B^{-1})-\Lambda Q\Res (\Lambda^{-1}L^{q+1}B^{-1}))\\
    &=\frac{1}{(q+1)!}\p_x (\Res (L^{q+1}B^{-1})-\Res (L^{q+1}B^{-1}Q\Lambda^{-1}))\\
    &=\frac{1}{(q+1)!}\p_x \Res L^{q+1}=\frac{\p h_{2,q}}{\p t^{0,0}}.
\end{align*}
Here $B=1-Q\Lambda^{-1}$.

For any given $k\ge 1$, we denote 
\[g_{k,q}=\frac{\p h_{2,q}}{\p t^{0,k}},\quad q\ge 0.\] 
Then by using the tau-symmetry property of the Hamiltonian densities 
$h_{2, q}\, (q\ge 0)$ we obtain 
\begin{align*}
  \frac{\p g_{k,q}}{\p t^{2,r}}&=\frac{\partial}{\partial t^{2,r}}\left(\frac{\partial h_{2,q}}{\partial t^{0,k}}\right)=\frac{\partial}{\partial t^{0,k}}\left(\frac{\partial h_{2,q}}{\partial t^{2,r}}\right )=\frac{\partial}{\partial t^{0,k}}\left(\frac{\partial h_{2,r}}{\partial t^{2,q}}\right)\\
  &=\frac{\partial}{\partial t^{2,q}}\left(\frac{\partial h_{2,r}}{\partial t^{0,k}}\right)=\frac{\p g_{k,r}}{\p t^{2,q}},\quad q, r\ge 0.
\end{align*}
From the tau-symmetry property of the densities $\theta_{\al,p}$ of the Hamiltonians of the Principal Hierarchy of $M_{\textrm{AL}}$ and Corollary \ref{w-2}, it follows the existence of differential polynomials $\tilde{h}_{0,k}$ of the form
\[\tilde{h}_{0,k}(w,w_x,\dots)=\theta_{0,k}(w)+\ve \tilde{h}^{[1]}_{0,k}(w, w_x)+
\ve^2 \tilde{h}^{[2]}_{0,k}(w, w_x,w_{xx})+\dots,\quad \tilde{h}^{[k]}_{0,k}\in\mathcal{A}_k\]
such that 
\begin{equation}\label{w-3}
     \frac{\partial \tilde{h}_{0,k}}{\partial t^{2,q}}=g_{k,q}=\frac{\partial h_{2,q}}{\partial t^{0,k}}.
\end{equation}
Let 
\[\tilde{H}_{0,k-1}=\int \tilde{h}_{0,k}\dd x,\quad k\ge 1,\]
 then from \eqref{w-0} and \eqref{w-3} it follows that $\tilde{H}_{0,k-1}$ is a conserved quantity of the flows $\frac{\p}{\p t^{2,q}}$. Since $H_{0,k-1}$ is also a conserved quantity of the flows $\frac{\p}{\p t^{2,q}}$ and it shares the same leading term with that of $\tilde{H}_{0,k-1}$, we conclude
from Corollary A.4 of \cite{dubrovin2018bihamiltonian} that 
\[\tilde{H}_{0,k-1}=H_{0,k-1}.\]
Thus we can choose the densities $h_{0,k}$ of the Hamiltonian $H_{0,k-1}$ so that
\[ h_{0,k}=\tilde{h}_{0,k},\quad k\ge 1.\]
Now for any $p, q\ge 0$ we have 
\begin{align*}
\frac{\partial}{\partial t^{2,k}}\left(\frac{\partial h_{0,p}}{\partial t^{0,-q}}\right)&=\frac{\partial}{\partial t^{0,-q}}\left(\frac{\partial h_{2,k}}{\partial t^{0,p}}\right)=\frac{\partial}{\partial t^{0,p}}\left(\frac{\partial h_{2,k}}{\partial t^{0,-q}}\right)=\frac{\partial}{\partial t^{2,k}}\left(\frac{\partial h_{0,-q}}{\partial t^{0,p}}\right) \\
    \frac{\partial}{\partial t^{2,k}}\left(\frac{\partial h_{0,p}}{\partial t^{0,q}}\right)&=\frac{\partial}{\partial t^{0,q}}\left(\frac{\partial h_{2,k}}{\partial t^{0,p}}\right)=\frac{\partial}{\partial t^{0,p}}\left(\frac{\partial h_{2,k}}{\partial t^{0,q}}\right)=\frac{\partial}{\partial t^{2,k}}\left(\frac{\partial h_{0,q}}{\partial t^{0,p}}\right).
\end{align*}
Hence 
\begin{equation}
    \frac{\partial}{\partial t^{2,k}}\left(\frac{\partial h_{0,p}}{\partial t^{0,q}}-\frac{\partial h_{0,p}}{\partial t^{0,q}}\right)= \frac{\partial}{\partial t^{2,k}}\left(\frac{\partial h_{0,p}}{\partial t^{0,-q}}-\frac{\partial h_{0,-q}}{\partial t^{0,p}}\right)=0,
\end{equation}
 and we arrive at
\[\frac{\partial h_{0,p}}{\partial t^{0,q}}-\frac{\partial h_{0,q}}{\partial t^{0,p}}=0,\quad \frac{\partial h_{0,p}}{\partial t^{0,-q}}-\frac{\partial h_{0,-q}}{\partial t^{0,p}}=0,\quad p, q\ge 0.\]

 Similarly, we can choose the the densities 
 \[h_{1,p}(w,w_x,\dots)=\theta_{1,n}(w)+\ve h^{[1]}_{1,p}(w,w_x)+\ve^2 h^{[2]}_{1,p}(w,w_x,w_{xx})+\dots,\quad h_{1,p}^{[k]}\in\mathcal{A}_k \]
 of the Hamiltonians $H_{1,p-1}\,(p\ge 0)$ 
 satisfying the tau-symmetry condition \eqref{zh-18}.
The proposition is proved.
\end{proof}

Thus we can fix the densities $h_{\al,p}(w, w_x,\dots)$ of the Hamiltonians $H_{\al,p-1}$ of the deformed Principal Hierarchy \eqref{w-1} which satisfy the tau-symmetry condition
\[\frac{\p h_{\al,p}}{\p t^{\beta,q}}=\frac{\p h_{\beta,q}}{\p t^{\al,p}},\quad (\al, p), (\beta, q)\in 
\mathcal{I},\]
and we can define a set of differential polynomials 
$\{\Omega_{\alpha,p;\beta,q}(w,w_x,\dots)\}$, called a tau structure of the deformed Principal Hierarchy \eqref{w-1}, such that
       \[\Omega_{\alpha,p;\beta,q}(w,w_x,\dots)=\Omega_{\alpha,p;\beta,q}^{[0]}(w)+\ve \Omega_{\alpha,p;\beta,q}^{[1]}(w,w_x)+\cdots,\quad\Omega_{\alpha,p;\beta,q}^{[k]}\in\mathcal{A}_k\]
       satisfy the following equations
\[
 \frac{1}{\ve}(\Lambda-1)\Omega_{\alpha,p;\beta,q}=\frac{\partial h_{\alpha,p}}{\partial t^{\beta,q}},\quad (\al, p), (\beta, q)\in 
\mathcal{I}.\]
Since these functions are symmetric with respect to there indices, i.e.,
\[\Omega_{\al,p;\beta,q}=\Omega_{\beta,q;\al,p},\]
we can define, for any solution $w^1(t), w^2(t)$ of \eqref{w-1}, a tau function $\tau(t)$ such that
\begin{equation}\label{tauf}
\left.\Omega_{\alpha,p;\beta,q}(w,w_x,\dots)\right|_{w^\al=w^\al(t)}=\epsilon^2\frac{\partial \log\tau(t)}{\partial t^{\alpha,p}\partial t^{\beta,q}},\quad (\al, p), (\beta, q)\in 
\mathcal{I}.
\end{equation}
From this definition of the tau function, we have relations
 \[h_{\alpha,p}=\epsilon(\Lambda-1)\frac{\partial \log \tau}{\partial t^{\alpha,p}},\]
 and
     \begin{equation}\label{zh-21}
        w^1=\epsilon(\Lambda-1)\frac{\partial \log \tau}{\partial t^{2,0}},\quad w^2-\log(e^{w^2}-w^1)=(\Lambda-1)(1-\Lambda^{-1})\log \tau.
    \end{equation}
We call 
\[f_{\al,p}=\ve\frac{\p\log\tau}{\p t^{\al,p}}, \quad (\al, p)\in 
\mathcal{I}\]
the one-point functions of the extended Ablowitz-Ladik hierarchy \eqref{w-1}, and the system given by the flows
\begin{equation}\label{zh-33}
\ve \frac{\p f_{\al,p}}{\p t^{\beta,q}}=\Omega_{\al,p;\beta,q},\quad
\frac{\p w^{\gamma}}{\p t^{\beta,q}}=\mathcal{P}_0^{\gamma\xi}\frac{\delta H_{\beta,p}}{\delta w^{\xi}},\quad \gamma=1,2,\, (\al,p), (\beta,q)\in\mathcal{I}
\end{equation}
the tau cover of \eqref{w-1}.

Before finishing this section, let us give the explicit expressions of the flows 
$\frac{\p}{\p t^{1,0}}$, $\frac{\p}{\p t^{2,0}}$, $\frac{\p}{\p t^{0,-1}}$. 
In terms of the unknown functions $P, Q$,
we know from \eqref{lax1}, \eqref{lax2} that the 
flows $\frac{\p}{\p t^{2,0}}$, $\frac{\p}{\p t^{0,-1}}$ have the expressions
\begin{align*}
&\ve\frac{\p P}{\p t^{2,0}}=P(Q^+-Q),\quad \ve\frac{\p Q}{\p t^{2,0}}=Q (Q^+-Q^--P+P^-),\\
&\ve\frac{\p P}{\p t^{0,-1}}=\frac{Q^+}{P^+}-\frac{Q}{P^-},\quad \ve\frac{\p Q}{\p t^{0,-1}}=\frac{Q}{P}-\frac{Q}{P^-},
\end{align*}
where $P^\pm=\Lambda^{\pm 1}P$, $Q^\pm=\Lambda^{\pm 1}Q$. For the flow $\frac{\p}{\p t^{1,0}}$ we have the following theorem.
\begin{theorem}\label{zh-45}
The symmetry of the extended Ablowitz-Ladik hierarchy \eqref{w-1} induced by the $\left(\frac{\p}{\p t^{1,0}}-\frac{\p}{\p x}\right)$-flow is given by the auto-B\"acklund transformation
\begin{align}\label{miura1}
    \tilde{P}&=\frac{P^{-}(Q^{+}-P)}{Q-P^{-}}=P+\epsilon P_1+\epsilon^2P_2+\dots,\\\label{miura2}
    \tilde{Q}&=\frac{Q(Q^{+}-P)}{Q-P^{-}}=Q+\epsilon Q_1+\epsilon^2 Q_2+\dots
\end{align}
of the integrable hierarchy, where $P_i,Q_i$ are differential polynomials of degree $i$. In other words, 
\[e^{\ve \left(\frac{\p}{\p t^{1,0}}-\frac{\p}{\p x}\right)}P=\tilde{P},\quad 
e^{\ve \left(\frac{\p}{\p t^{1,0}}-\frac{\p}{\p x}\right)}Q=\tilde{Q},\]
and for any solution $(P, Q)$ to the extended Ablowitz-Ladik hierarchy, $(\tilde{P}, \tilde{Q})$ is also a solution to this integrable hierarchy.
\end{theorem}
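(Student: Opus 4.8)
The plan is to show that the formulas \eqref{miura1}, \eqref{miura2} define a symmetry of the extended Ablowitz–Ladik hierarchy that can be identified with the time-$\ve$ flow of $\frac{\p}{\p t^{1,0}}-\frac{\p}{\p x}$. I would proceed in three stages. First, I would verify that the map $(P,Q)\mapsto(\tilde P,\tilde Q)$ given by \eqref{miura1}, \eqref{miura2} is an auto-Bäcklund transformation of the integrable hierarchy, i.e. that it sends solutions to solutions. The cleanest route is to exhibit its effect at the level of the Lax operators $L$ and $M$ from \eqref{laxop1}, \eqref{laxop2}: I expect that there is a first-order difference operator $T$ (a dressing factor of the form $T=\Lambda-\frac{Q-P^-}{\text{something}}$, or $T=1-c\Lambda^{-1}$) such that $\tilde L\,T=T\,L$ and $\tilde M\,T=T\,M$, where $\tilde L,\tilde M$ are built from $\tilde P,\tilde Q$ by the same formulas \eqref{laxop1}, \eqref{laxop2}. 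Once such an intertwiner is found, the covariance of the Lax equations \eqref{lax1}, \eqref{lax2} under conjugation by $T$ is automatic, because $(T L^{k+1}T^{-1})_+=T(L^{k+1})_+T^{-1}$ up to lower-order commutator corrections that vanish after taking $[\cdot,L]$; this is the standard Bäcklund argument for Toda-type hierarchies. I would also need to check the extended (logarithmic) flows $\frac{\p}{\p t^{1,k}},\frac{\p}{\p t^{0,k}}$ for $k\ge 0$, which do not have a bare Lax form, but here I can invoke that the hierarchy \eqref{w-1} is bihamiltonian with Hamiltonian operators $\mathcal{P}_0,\mathcal{P}_1$ and that the transformation preserves both $\mathcal{P}_0$ and $\mathcal{P}_1$ (a direct check on \eqref{bho}, \eqref{bho2} in the coordinates $w^1=Q-P$, $w^2=\log Q$), hence it preserves the entire bihamiltonian recursion and therefore all flows.

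Second, I would identify this Bäcklund transformation with the exponentiated flow. Expanding \eqref{miura1}, \eqref{miura2} in $\ve$ (using $P^\pm=P\pm\ve P_x+\dots$, $Q^\pm=Q\pm\ve Q_x+\dots$) gives $\tilde P=P+\ve P_1+O(\ve^2)$ with $P_1,Q_1$ certain differential polynomials of degree one; I would compute $P_1,Q_1$ explicitly and check that they coincide with the right-hand side of the $\left(\frac{\p}{\p t^{1,0}}-\frac{\p}{\p x}\right)$-flow acting on $P,Q$. To pin down that flow, I would use the bihamiltonian recursion relation \eqref{zh-29}–\eqref{zh-30} at the dispersionless level together with its deformation: the $\frac{\p}{\p t^{1,0}}$-flow is characterized by the Hamiltonian $H_{1,-1}$ with density $h_{1,0}=\theta_{1,0}(w)+O(\ve)=w_1+O(\ve)$, and the recursion ties $H_{1,p}$ to $H_{2,p}$ and $H_{2,p-1}$, so I can read off the leading term of $\frac{\p P}{\p t^{1,0}}$ from the known $\frac{\p P}{\p t^{2,0}}$ and $\frac{\p P}{\p t^{2,-1}}$ (the latter being the $\frac{\p}{\p x}$-flow). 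Matching at order $\ve$ fixes the identification; matching at all higher orders is then forced, because a symmetry of an integrable hierarchy whose linearization agrees with a given flow to leading order must agree with it identically (by the argument in Corollary A.4 of \cite{dubrovin2018bihamiltonian}, which applies once one knows both are symmetries sharing a leading term).

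Third, I would record the conclusion in the exponentiated form $e^{\ve(\p_{t^{1,0}}-\p_x)}P=\tilde P$, which is just the statement that the time-$\ve$ map of the (polynomial in $\ve$ after the shift by $-\p_x$) flow equals the rational expression \eqref{miura1}; here the key point is that the combination $\frac{\p}{\p t^{1,0}}-\frac{\p}{\p x}$ has a well-defined exponential because, after the subtraction, the flow is $O(\ve)$ and the resulting formal series in $\ve$ resums to the stated rational function — this resummation is exactly what \eqref{miura1}, \eqref{miura2} assert, and it is verified by the Bäcklund/Lax computation of the first stage. The main obstacle I anticipate is the first stage: guessing the correct dressing operator $T$ and verifying the intertwining relations $\tilde L T=TL$, $\tilde M T=TL$ simultaneously, since $L$ and $M$ are not polynomial in $\Lambda$ but are quotients $(1-Q\Lambda^{-1})^{-1}(\Lambda-P)$ and $(\Lambda-P)^{-1}(1-Q\Lambda^{-1})$, so the intertwining has to be checked for the numerators and denominators separately, i.e. one wants $T$ with $(\Lambda-\tilde P)T_1=T_2(\Lambda-P)$ and $(1-\tilde Q\Lambda^{-1})T_1=T_2(1-Q\Lambda^{-1})$ for a suitable pair $T_1,T_2$ of first-order operators; solving these functional equations is where the specific form of the denominators $Q-P^-$ in \eqref{miura1}, \eqref{miura2} comes from, and getting all the shifts right is the delicate part. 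Everything else — the $\ve$-expansion, the Hamiltonian-operator invariance, and the leading-order matching — is routine once the Bäcklund structure is in hand.
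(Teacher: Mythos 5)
Your proposal reaches the right conclusion and shares two key ingredients with the paper's argument --- the explicit check that the transformation preserves both Hamiltonian operators $\mathcal{P}_0,\mathcal{P}_1$, and the computation of the leading term of the $\bigl(\frac{\p}{\p t^{1,0}}-\frac{\p}{\p x}\bigr)$-flow from $\theta_{1,1}$ --- but it routes the argument differently and leaves a genuine gap in the identification step. The paper never touches the Lax operators: it takes the logarithm of the rational map, i.e.\ invokes the existence (from \cite{liu2011jacobi}) of an evolutionary vector field $Y$ with $\tilde P=e^{\ve Y}P$, $\tilde Q=e^{\ve Y}Q$; then the invariance $J\mathcal{P}_iJ^{*}=\mathcal{P}_i|_{P\to\tilde P,Q\to\tilde Q}$ together with the triviality of the first Hamiltonian cohomology of the hydrodynamic leading terms shows $Y$ is a \emph{bihamiltonian} vector field; since $X=\frac{\p}{\p t^{1,0}}-\frac{\p}{\p x}$ is also bihamiltonian and has the same leading term, uniqueness of bihamiltonian vector fields gives $X=Y$, and the auto-B\"acklund property then comes for free because $e^{\ve X}$ exponentiates a symmetry of the hierarchy. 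Your order is reversed: you first try to prove the B\"acklund property independently via an intertwining operator $T$ with $\tilde LT=TL$, which you do not construct (and which the paper does not need), and only then identify the map with the exponentiated flow.

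The gap is in your second and third stages. Matching the $O(\ve)$ coefficients of $\tilde P$ and of $e^{\ve X}P$ does not force agreement at higher orders: at order $\ve^2$ the exponential contributes $\tfrac12 X^2(P)$ while the map contributes $P_2$, and no rigidity statement about \emph{flows} (such as Corollary A.4 of \cite{dubrovin2018bihamiltonian}, which concerns symmetries/conserved quantities sharing a leading term) applies directly to a \emph{map}. To use such a rigidity result you must first know that the map is the time-$\ve$ shift of some local evolutionary vector field $Y$ --- the existence of this logarithm is a nontrivial fact, supplied in the paper by \cite{liu2011jacobi} --- and then you must place $Y$ in a class where uniqueness holds, which is exactly what the Hamiltonian-operator invariance plus the first-cohomology triviality accomplishes. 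You verify the invariance of $\mathcal{P}_0,\mathcal{P}_1$ but use it only to extend the B\"acklund property to the logarithmic flows, not to conclude that the generator of the map is a bihamiltonian vector field; supplying that conclusion, together with the existence of the logarithm, is what closes your argument and is precisely the paper's route.
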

\begin{proof}
Let us denote by 
\[X=A\frac{\p}{\p P}+B \frac{\p}{\p Q}+\left(\p_x A\right)\frac{\p}{\p P_x}+\left(\p_x B\right) \frac{\p}{\p Q_x}+\dots\]
the bihmiltonian vector field given by the flow $\frac{\p}{\p t^{1,0}}-\frac{\p}{\p x}$, i.e.,
\[\frac{\p P}{\p t^{1,0}}-\frac{\p P}{\p x}=A=A^{[0]}+\ve A^{[1]}+\dots,\quad 
\frac{\p Q}{\p t^{1,0}}-\frac{\p Q}{\p x}=B=B^{[0]}+\ve B^{[1]}+\dots,\]
where $A^{[k]}, B^{[k]}$ are differential polynomials of degree $k+1$.
From the expression $\theta_{1,1}$ given in \eqref{zh-20} we know that the 
leading terms of the differential polynomials $A$, $B$ are given by
\[A^{[0]}=\frac{Q_xP-P_xQ}{Q-P},\quad B^{[0]}=\frac{(Q_x-P_x)Q}{Q-P}.\]
 
On the other hand, from \cite{liu2011jacobi} we know the existence of a vector field $Y$ of the form
\[Y=\tilde{A}\frac{\p}{\p P}+\tilde{B} \frac{\p}{\p Q}+\left(\p_x \tilde{A}\right)\frac{\p}{\p P_x}+\left(\p_x \tilde{B}\right) \frac{\p}{\p Q_x}+\dots\]
such that 
\[
    \tilde{P}=e^{\epsilon Y}P,\quad \tilde{Q}=e^{ \epsilon Y}Q.
\]
It is easy to check that the leading terms $\tilde{A}^{[0]}, \tilde{B}^{[0]}$ of the differential polynomials $\tilde{A}, \tilde{B}$ coincide with ${A}^{[0]}, {B}^{[0]}$. One can also verify that the Miura-type transformation given by \eqref{miura1}, \eqref{miura2} keeps the bihamiltonian structure $(\mathcal{P}_0, \mathcal{P}_1)$ invariant. In fact, the Hamiltonian operators $\mathcal{P}_0, \mathcal{P}_1$ can be represented in terms of the unknown functions $P, Q$ as follows \cite{li2022tri}:
	\begin{equation}\label{zh-23}
	    \mathcal{P}_0=\ve^{-1}
	\begin{pmatrix}Q \Lambda ^{-1}-\Lambda Q&(1-\Lambda)Q
\\Q(\Lambda^{-1}-1)& 0\end{pmatrix},	
		\quad\mathcal{P}_1=\ve^{-1}
		\begin{pmatrix}0&P(\Lambda-1)Q\\
		Q(1-\Lambda^{-1})P&Q(\Lambda-\Lambda^{-1})Q
	\end{pmatrix},
	\end{equation}
then we can check that
\[
J\mathcal{P}_0J^{*}=\left.\mathcal{P}_0\right|_{P\to \tilde{P},\,Q\to\tilde{Q}},\quad
J\mathcal{P}_1J^{*}=\left.\mathcal{P}_1\right|_{P\to \tilde{P},\,Q\to\tilde{Q}},
\]
where
\[
  J=  \begin{pmatrix}
        \frac{\partial \tilde{P}}{\partial P}+\frac{\partial\tilde{P}}{\partial P^{-}}\Lambda^{-1}&\frac{\partial \tilde{P}}{\partial Q}+\frac{\partial\tilde{P}}{\partial Q^{+}}\Lambda\\
    \frac{\partial \tilde{Q}}{\partial P}+\frac{\partial\tilde{Q}}{\partial P^{-}}\Lambda^{-1}&\frac{\partial \tilde{Q}}{\partial Q}+\frac{\partial\tilde{Q}}{\partial Q^{+}}\Lambda
    \end{pmatrix},\quad J^{*}=\begin{pmatrix}
        \frac{\partial \tilde{P}}{\partial P}+\Lambda\frac{\partial\tilde{P}}{\partial P^{-}}&
    \frac{\partial \tilde{Q}}{\partial P}+\Lambda\frac{\partial\tilde{Q}}{\partial P^{-}}\\\frac{\partial \tilde{P}}{\partial Q}+\Lambda^{-1}\frac{\partial\tilde{P}}{\partial Q^{+}}&\frac{\partial \tilde{Q}}{\partial Q}+\Lambda^{-1}\frac{\partial\tilde{Q}}{\partial Q^{+}}
    \end{pmatrix}.
    \]
Thus $Y$ is a symmetry of the bihamiltonian structrure 
$(\mathcal{P}_0, \mathcal{P}_1)$. From the triviality of the first Hamiltonian cohomolgy groups \cite{magri, normal, getzler} of the Hamiltonian structures of hydrodynamic type given by the leading terms of $\mathcal{P}_0, \mathcal{P}_1$, it follows that $Y$ is also a bihamiltonian vector field
with respect to the bihamiltonian structrure 
$(\mathcal{P}_0, \mathcal{P}_1)$. Since the leading terms of the bihamiltonian vector fields $X$ and $Y$ coincide, it follows from the uniqueness of bihamiltonian vector fields \cite{dubrovin2006hamiltonian, liu2018lecture, dubrovin2018bihamiltonian} that $X=Y$. The theorem is proved.
\end{proof}

Denote $\tilde{\Lambda}=e^{\ve\frac{\p}{\p t^{1,0}}}$, then we can represent $\tilde{P}, \tilde{Q}$ in the form
\[\tilde{P}=\tilde{\Lambda}\Lambda^{-1} P,\quad \tilde{Q}=\tilde{\Lambda}\Lambda^{-1} Q,\]
and we have
\begin{align}
 & \Lambda^{-1} (\tilde{\Lambda}-1)(\log Q-\log P)=\log \tilde{Q}-\log\tilde{P}-\log Q^{-}+\log P^{-}\\
=& \log\left(\frac{Q(Q^{+}-P)}{Q-P^{-}}\frac{Q-P^{-}}{P^{-}(Q^{+}-P)}\right)-\log Q^{-}+\log P^{-}\\
=&\Lambda^{-1}(\Lambda-1)\log Q.
\end{align}
Thus we arrive at the identity
\[(\tilde{\Lambda}-1)(\log Q-\log P)=(\Lambda-1)\log Q,\]
and by using \eqref{zh-21} we obtain the relation
\begin{equation}\label{4.1}
\log Q=\left(1-\Lambda^{-1}\right)\left(\tilde\Lambda-1\right)\log\tau.
\end{equation}
From this relation we know that Conjecture 11.2 of \cite{liu2022generalized}
follows from the Main Theorem given in Sect.\,\ref{zh-22}.

 \section{The super tau-cover of the extended Ablowitz-Ladik hierarchy}
 
In order to show that the extended Ablowitz-Ladik hierarchy possesses an infinite set of Virasoro symmetries which act linearly on its tau function, we 
construct in this section the super tau-cover of this integrable hierarchy by following the notation and construction given in \cite{liu2021super}. To this end, we first recall some notations associated with the infinite jet space of a super manifold, see  \cite{liu2011jacobi, liu2013bihamiltonian, liu2018lecture} for more details.

Let $M$ be a smooth manifold of dimension $n$, and  $\hat{M}=\prod(T^*M)$ be the super manifold of dimension $(n|n)$ obtained from the cotangent bundle of $M$ by reversing the parity of its fibers. Locally, we consider an open subset $U\subset M$ with coordinates $u^1,\dots,u^n$. The dual coordinates $\theta_1,\dots, \theta_n$ on the fibers of $\hat{U}\subset \hat{M}$ satisfy the relations
\[\theta_i\theta_j+\theta_j\theta_i=0,\quad i, j=1,\dots,n.\]
Let $J^{\infty}(\hat{M})$ be the infinite jet space of $\hat{M}$, and $\hat{\mathcal{A}}$, $\mathcal{A}$ be space of differential polynomials on $\hat{M}$ and $M$ respectively.  Locally $\mathcal{A}, \hat{\mathcal{A}}$ can be represented in the forms 
\begin{align}
\mathcal{A}&=C^{\infty}(\hat{U})[[u^{i,s}\mid i=1,\dots n;s\ge 1]],\\\hat{\mathcal{A}}&=C^{\infty}(\hat{U})[[u^{i,s},\theta^{t}_i\mid i=1,\dots n;s\ge 1,t\ge 0]],\label{zh-26}
\end{align}
where $u^{i,0}=u^{i}$, $\theta_{i}^0=\theta_{i}$. By using the global vector field 
\begin{equation}
    \partial_x=\sum_{i=1}^{n}\sum_{s\ge 0}\left(u^{i,s+1}\frac{\partial}{\partial u^{i,s}}+\theta_{i}^{s+1}\frac{\partial }{\partial\theta_{i}^{s}}\right)
\end{equation}
on $J^{\infty}(\hat{M})$ we define the space of local functionals as the quotient space
\[\hat{\mathcal{F}}=\hat{\mathcal{A}}/\partial_x\hat{\mathcal{A}},\] and denote by $\int$ the projection operator. There are two gradations on $\hat{\mathcal{A}}$ which are defined by 
\[\deg_x u^{i,s}=\deg_x \theta_i^s=s;\quad\deg_{\theta} u^{i,s}=0\,\deg_{\theta} \theta_i^{s}=1.\]
We call them the differential gradation and super gradation of $\hat{\mathcal{A}}$ repsectively, and denote the homogeneous spaces with respect to these gradations by 
\[\hat{\mathcal{A}}_d=\{f\in\hat{\mathcal{A}}|\deg_x f=d\},\quad\hat{\mathcal{A}}^p=\{f\in\hat{\mathcal{A}}|\deg_{\theta} f=p\},\quad \hat{\mathcal{A}}_d^p=\hat{\mathcal{A}}_d\cap\hat{\mathcal{A}}^p.
\] 
From the above definition we have $\hat{\mathcal{A}}^0=\mathcal{A}$, $\hat{\mathcal{A}}^0_d=\mathcal{A}_d$. The
differential and super gradations on $\hat{\mathcal{A}}$ induce two gradations on $\hat{\mathcal{F}}$, and we denote the corresponding homogeneous spaces by $\hat{\mathcal{F}}_d$, $\hat{\mathcal{F}}^p$, $\hat{\mathcal{F}}_d^p$. 
We equip the space of local functionals a graded Lie algebra structure by using the following Schouten-Nijenhuis bracket:
\[[F,G]=\int \left(\frac{\delta F}{\delta\theta_i}\frac{\delta  G}{\delta u^i}+(-1)^p\frac{\delta F}{\delta u^i}\frac{\delta  G}{\delta\theta_i}\right),\quad F\in\hat{\mathcal{F}}^p,\ G\in\hat{\mathcal{F}}^q.\]
Here the variational derivatives are defined by
\begin{equation}
\frac{\delta F}{\delta u^i}=\sum_{s\ge 0}(-\p_x)^{s}\frac{\p\tilde{F}}{\p u^{i,s}},\quad \frac{\delta F}{\delta\theta_i}=\sum_{s\ge 0}(-\p_x)^{s}\frac{\p\tilde{F}}{\p \theta_i^s},
\end{equation}
with $\tilde{F}\in \hat{\mathcal{A}}$ being any lift of $F$, i,e., $F=\int \tilde{F}$.
For each local functional $F\in \hat{\mathcal{F}}^p$ there is associated a derivation $D_F\in \textrm{Der}(\hat{\mathcal{A}})$ which is defined by
\begin{equation}
    D_F=\sum_{i=1}^n\sum_{s\ge 0}\partial_x^s\left(\frac{\delta F}{\delta \theta_i}\right)\frac{\partial}{\partial u^{i,s}}+(-1)^p\partial_x^s\left(\frac{\delta F}{\delta u^i}\right)\frac{\partial}{\partial \theta_i^s}.
\end{equation}
These derivations satisfy the following relations:
\begin{align}
&\left[F,G\right]=\int D_F(\tilde{G}),\quad \forall\,F\in\hat{\mathcal{F}}^p,\ G\in\hat{\mathcal{F}}^q,\label{w-1b}\\
&(-1)^{p-1}D_{[F,G]}=D_F\circ D_G-(-1)^{q-1}(-1)^{p-1}D_G\circ D_F\label{w0}.
\end{align}

A local functional $I\in \hat{\mathcal{F}}^2$ is called a Hamiltonian structure if $[I, I]=0$. We can represent a Hamiltonian structure $I$ in the form
\[I=\frac12\int P^{\al\beta}_s \theta_\al\theta^s_\beta,\]
where $P^{\al\beta}_s\in\mathcal{A}$ are determined by
\[\frac{\delta I}{\delta\theta_\al}=P^{\al\beta}_s\theta_\beta^s.\]
Such a Hamiltonian structure corresponds to a Hamiltonian operator of the form
\[\mathcal{P}=\sum_{s\ge 0}P^{\al\beta}_s\p_x^s\]
and vice versa.
Two Hamiltonian structures $I_0, I_1$ are said to form a bihamiltonian structure if $[I_0, I_1]=0$.
 
Now let us restrict our discussion to a 2-dimensional manifold $M$ with local coordinates $u^1=P, u^2=Q$. We still use $\theta_1, \theta_2$ to denote the dual odd coordinates on $\hat{M}$. We denote by $I_0, I_1$ the local functionals given by the Hamiltonian operators $\mathcal{P}_0, \mathcal{P}_1$ defined in \eqref{zh-23}, i.e., 
\begin{align}
	I_0&=\frac{1}{2\ve}\int\begin{pmatrix}\theta_1&\theta_2\end{pmatrix}\begin{pmatrix}Q \Lambda ^{-1}-\Lambda Q&(1-\Lambda)Q
	\\Q(\Lambda^{-1}-1)& 0\end{pmatrix}\begin{pmatrix}\theta_1\\ \theta_2
	\end{pmatrix},\label{i0}\\[6pt]
	I_1&=\frac{1}{2\ve}\int\begin{pmatrix}\theta_1&\theta_2\end{pmatrix}	\begin{pmatrix}0&P(\Lambda-1)Q\\
	Q(1-\Lambda^{-1})P&Q(\Lambda-\Lambda^{-1})Q
	\end{pmatrix}\begin{pmatrix}\theta_1\\ \theta_2
	\end{pmatrix}.	\label{i1}
	\end{align}
We introduce, following the construction of the super extension of a bihamiltonian integrable hierarchy given in \cite{liu2021super, 2109.01845}, the following family of odd variables:
	\[
	\{\sigma_{\alpha,k}^{s}\mid\alpha=1,2;k\in\mathbb{Z},s\ge0\}.
	\]
We will also denote $\sigma_{\al,k}^0$ by $\sigma_{\al,k}$, and we require that $\sigma_{\alpha,0}=\theta_{\alpha}$.
Let us replace the vector field $\partial_x$ on $J^\infty(\hat{M})$ by
	\begin{equation}\label{dx}
	\partial_x=\sum_{s\ge 0}\left(P^{(s+1)}\frac{\partial}{\partial P^{(s)}}+Q^{(s+1)}\frac{\partial}{\partial Q^{(s)}}\right)+\sum_{i=1}^2\sum_{s\ge0,\,k\in\mathbb{Z}}\sigma_{i,k}^{s+1}\frac{\partial}{\partial \sigma_{i,k}^s},
	\end{equation}
then the odd variables $\sigma_{\al,k}^s$ satisfy the recursion relations
	\begin{equation}\label{super-rr}
	    	\mathcal{P}_0^{\alpha\beta}\sigma_{\beta,k+1}=\mathcal{P}_1^{\alpha\beta}\sigma_{\beta,k},
	\end{equation}
which can be represented in the following explicit form:	
	\begin{align}\label{recursions}
&(\Lambda Q-Q\Lambda^{-1})\sigma_{1,k+1}+(\Lambda-1)Q\sigma_{2,k+1}-P(1-\Lambda)Q\sigma_{2,k}=0,\\
&\sigma_{1,k+1}+P\sigma_{1,k}+(\Lambda+1)Q\sigma_{2,k}=0,\label{recursions2}
	\end{align}
	where $k\in\mathbb{Z}$. 
We also extend the ring of differential polynomials $\hat{\mathcal{A}}$ to the ring 
\begin{equation}\label{zh-27}
\mathcal{B}=C^{\infty}(U)[[P^{(s)},\,Q^{(s)},\, \sigma_{\alpha,k}^s\mid \alpha=1,2;\, k\in\mathbb{Z},\, s\ge 0]]/J,
\end{equation}
where $J$ is the differential ideal generated by the left-hand sides of recursion relations \eqref{recursions}, \eqref{recursions2}. 
	 
The derivations $D_{I_0}, D_{I_1}\in \textrm{Der}(\hat{\mathcal{A}})$ given by the bihamiltonian structure $(I_0, I_1)$ yield two odd flows
\begin{equation}\label{zh-24}
\frac{\p u^{\al}}{\p \tau_i}=\frac{\delta I_i}{\delta \theta_{\al}}=D_{I_i}(u^{\al}),\quad \frac{\p \theta_{\al}}{\p \tau_i}=\frac{\delta I_i}{\delta u^{\al}}=D_{I_i}(\theta_{\al}),\quad i=0,1.
\end{equation}
They commute with each other, and satisfy the recursion relation
\begin{equation}\label{zh-25}
	\frac{\p u^{\al}}{\p\tau_1}=\left(\mathcal{P}_1\circ\mathcal{P}_0^{-1}\right)^{\al}_{\gamma}\frac{\p u^{\gamma}}{\p \tau_0}=R^{\al}_{\gamma}\frac{\p u^{\gamma}}{\p \tau_0},
\end{equation}
where the recursion operator $\mathcal{R}=(R^{\al\gamma})=\mathcal{P}_1\circ\mathcal{P}_0^{-1}$ has the expression
	\[
	\mathcal{R}=\begin{pmatrix}
	-P&P(Q-\Lambda Q\Lambda)(1-\Lambda)^{-1}Q^{-1}\\
	-Q(1+\Lambda^{-1})& Q((\Lambda-1)P^{-}+(1+\Lambda^{-1})(Q-\Lambda Q\Lambda))(1-\Lambda)^{-1}Q^{-1}
	\end{pmatrix}.
	\]
We may continue the recursion procedure  \eqref{zh-25} to define odd flows $\frac{\p}{\p \tau_2}$,
$\frac{\p}{\p \tau_3}$ and so on, these flows should correspond to the Hamiltonian structures $\mathcal{R}\mathcal{P}_1$, $\mathcal{R}^2\mathcal{P}_1, \dots$.
However, due to the appearance of the integral operator $(1-\Lambda)^{-1}$ in the expression of the recursion operator $\mathcal{R}$, these odd flows are in general nonlocal, i.e., they can not be represented in terms of elements of $\hat{\mathcal{A}}$ that is defined in \eqref{zh-26}. It is shown in 
\cite{liu2011jacobi, liu2013bihamiltonian, liu2018lecture} 
that such flows are important for our understanding of properties of the Virasoro symmetries of the integrable hierarchy associated with the bihamiltonian structure $(I_0, I_1)$. The introduction of the odd variables $\sigma_{\al,k}^s$ enables us to define such flows in a consistent way, and
we obtain a certain super extension of the extended Ablowitz-Ladik hierarchy, which is presented in the following proposition.

	\begin{pro}\label{sp-extension}
		The super extension of the extended Ablowitz-Ladik hierarchy \eqref{w-1} consists of the following flows: 
		\begin{enumerate}
		\item The flows of extended Ablowitz-Ladik hierarchy \eqref{w-1} represented in terms of the unknown functions $P, Q$ and the two point functions $\Omega_{\al,p;\beta,q}$:
	\begin{align}
	\ve \frac{\partial P}{\partial t^{\alpha,k}}&=\frac{P}{Q-P}\left((\Lambda-1)\Omega_{2,0;\al,k}-Q(1-\Lambda^{-1})h_{\al,k}\right),\\
	\ve \frac{\partial Q}{\partial t^{\alpha,k}}&=\frac{Q}{Q-P}\left((\Lambda-1)\Omega_{2,0;\al,k}-P(1-\Lambda^{-1})h_{\al,k}\right).
	\end{align}
 \item The odd flows correspond to the local and nonlocal Hamiltonian structures:
 \begin{align}
     \label{odd-flow1}
	   & \ve\frac{\partial P}{\partial \tau_k}=P(\Lambda-1)Q\sigma_{2,k-1},
			\quad
		\ve	\frac{\partial Q}{\partial \tau_k}=Q(\Lambda^{-1}-1)\sigma_{1,k},\\
			\label{odd-flow2}
			&\ve\frac{\partial \sigma_{1,k+m}}{\partial \tau_k}=\sum\limits_{i=0}^{m-1}\sigma_{1,k+i}(1-\Lambda)Q\sigma_{2,k+m-1-i},\quad m\ge 1,\\
			&\frac{\partial \sigma_{1,k}}{\partial \tau_{k}}=0,\quad
			\frac{\partial \sigma_{1,k}}{\partial \tau_{k+m}}=-\frac{\partial \sigma_{1,k+m}}{\partial \tau_k},\quad m\ge 1,\\
			& \ve\frac{\partial \left(Q\sigma_{2,k+m}\right)}{\partial \tau_k}=-Q\sum\limits_{i=0}^{m}\sigma_{1,k+m-i}^-\sigma_{1,k+i},\quad m\ge 0,\\\label{odd-flow3}
			& \frac{\partial \left(Q\sigma_{2,k}\right)}{\partial \tau_{k+1}}=0,\quad \ve\frac{\partial \left(Q\sigma_{2,k}\right)}{\partial \tau_{k+m}}=Q\sum\limits_{i=1}^{m-1}\sigma_{1,k+m-i}^-\sigma_{1,k+i},\quad m\ge 2.
 \end{align}
 Here $k\in\mathbb{Z}$, and $\sigma_{\al,k}^{-}=\Lambda^{-1}\sigma_{\al,k}$.
 \item  The evolutions of the odd unknown functions along the time variables $t^{\al,p}$ are defined by the following equations:
		\begin{equation}\label{odd-flow4}
		    \frac{\partial Q(\sigma_{1,k}+\sigma_{2,k})}{\partial t^{\beta,q}}=\frac{\partial \Omega_{2,0;\beta,q}}{\partial\tau_k},\quad
		     \frac{\partial (P\sigma_{1,k}+Q\sigma_{2,k})}{\partial t^{\beta,q}}=\frac{\partial h_{\beta,q} }{\partial \tau_{k+1}}.
		\end{equation}
	\end{enumerate}
		All these flows are well defined and commute with each other.
	\end{pro}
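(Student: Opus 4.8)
The argument follows the general construction of the super tau-cover of a tau-symmetric bihamiltonian hierarchy given in \cite{liu2021super,2109.01845}, specialized to the bihamiltonian structure $(I_0,I_1)$ of \eqref{i0}, \eqref{i1}. The plan is to treat the three items of the proposition separately and then establish well-definedness and commutativity at the end.

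\emph{The even flows of item (1).} These are the flows \eqref{w-1} of the extended Ablowitz--Ladik hierarchy written in the variables $P,Q$. Using $Q=e^{w^2}$, $P=e^{w^2}-w^1$ together with the tau-structure identities $h_{\al,k}=\ve(\Lambda-1)\frac{\p\log\tau}{\p t^{\al,k}}$ and $\ve\frac{\p f_{2,0}}{\p t^{\al,k}}=\Omega_{2,0;\al,k}$ and the relations \eqref{zh-21} (note $e^{w^2}-w^1=P$), one gets
\[\frac{\p Q}{\p t^{\al,k}}-\frac{\p P}{\p t^{\al,k}}=\frac1\ve(\Lambda-1)\Omega_{2,0;\al,k},\qquad \frac1Q\frac{\p Q}{\p t^{\al,k}}-\frac1P\frac{\p P}{\p t^{\al,k}}=\frac1\ve(1-\Lambda^{-1})h_{\al,k}.\]
I would then solve this $2\times2$ linear system, whose determinant is $\tfrac{1}{PQ}(P-Q)\neq0$; the result is exactly the stated pair of formulae for $\ve\frac{\p P}{\p t^{\al,k}}$ and $\ve\frac{\p Q}{\p t^{\al,k}}$. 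This step is routine.

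\emph{The odd flows of items (2) and (3).} I would first define $\frac{\p u^\al}{\p\tau_k}:=\mathcal{P}_0^{\al\beta}\sigma_{\beta,k}$ for all $k\in\mathbb{Z}$; for $k=0,1$ this reproduces the derivations $D_{I_0},D_{I_1}$ attached to $I_0,I_1$, and for general $k$ it descends to the quotient ring $\mathcal{B}$ because $\mathcal{P}_0^{\al\beta}\sigma_{\beta,k+1}-\mathcal{P}_1^{\al\beta}\sigma_{\beta,k}$ is, by construction, one of the generators of the ideal $J$. Substituting the recursion relations \eqref{recursions}, \eqref{recursions2} into $\mathcal{P}_0^{\al\beta}\sigma_{\beta,k}$ yields the closed forms \eqref{odd-flow1}. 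The action of $\frac{\p}{\p\tau_k}$ on the $\sigma_{\al,\ell}$ is then forced: demanding that $\frac{\p}{\p\tau_k}$ be a derivation of $\mathcal{B}$ commuting with $\p_x$ and preserving $J$, I would differentiate \eqref{recursions}, \eqref{recursions2} along $\tau_k$, substitute \eqref{odd-flow1}, and obtain a system for $\frac{\p}{\p\tau_k}\bigl(Q\sigma_{2,k+m}\bigr)$ and $\frac{\p}{\p\tau_k}\sigma_{1,k+m}$ that is triangular in $m$ and solved recursively; its solution is precisely \eqref{odd-flow2}--\eqref{odd-flow3}, the vanishings $\frac{\p\sigma_{1,k}}{\p\tau_k}=0$, $\frac{\p(Q\sigma_{2,k})}{\p\tau_{k+1}}=0$ being inherited from the corresponding properties of $D_{I_0},D_{I_1}$ on $\theta_\al$. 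Finally \eqref{odd-flow4} defines the action of $\frac{\p}{\p t^{\beta,q}}$ on the $\sigma_{\al,k}$ (the $2\times2$ matrix carrying $\sigma_{1,k},\sigma_{2,k}$ to $Q(\sigma_{1,k}+\sigma_{2,k})$, $P\sigma_{1,k}+Q\sigma_{2,k}$ is invertible), and the consistency of the right-hand sides $\frac{\p\Omega_{2,0;\beta,q}}{\p\tau_k}$, $\frac{\p h_{\beta,q}}{\p\tau_{k+1}}$ with $\p_x$ and $J$ will follow from the tau-symmetry of $h_{\beta,q}$ and $\Omega_{\al,p;\beta,q}$ proved in Proposition \ref{leading0}.

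\emph{Well-definedness and commutativity.} It remains to show that all these flows are well defined on $\mathcal{B}$ and pairwise commute. Commutativity of the $t^{\al,p}$-flows among themselves is \eqref{w-0}. Commutativity of $\frac{\p}{\p\tau_i}$ with $\frac{\p}{\p\tau_j}$ for $i,j\in\{0,1\}$ follows from $[I_i,I_j]=0$ and the identity \eqref{w0}; the extension to all $i,j\in\mathbb{Z}$ can be obtained by induction, rewriting the $\frac{\p}{\p\tau_{k+1}}$-action through the $\frac{\p}{\p\tau_k}$-action and the operators $\mathcal{P}_0,\mathcal{P}_1$ via \eqref{recursions}, \eqref{recursions2} and using $[I_0,I_1]=0$. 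Commutativity of $\frac{\p}{\p t^{\al,p}}$ with $\frac{\p}{\p\tau_k}$ is, after unwinding the definitions of items (1)--(3), equivalent to the symmetry of mixed second derivatives of $\log\tau$ together with its odd counterpart encoded in \eqref{odd-flow4}, and reduces to the tau-symmetry relations and the bihamiltonian recursion. The hard part will be exactly this last point: one has to verify that every flow maps each generator of $J$ back into $J$, so that the flows descend to $\mathcal{B}$, and that no nonlocal obstruction survives when the nonlocal odd flows $\frac{\p}{\p\tau_k}$ with $|k|$ large are paired with the even flows. I expect this to be handled by the same bookkeeping as in \cite{liu2021super,liu2013bihamiltonian}, carried out for the specific operators $\mathcal{P}_0,\mathcal{P}_1$ of \eqref{zh-23} and the recursion operator $\mathcal{R}$; once the needed identities among the $\sigma_{\al,k}$, $P$ and $Q$ are recorded, the commutation relations follow by a direct, if lengthy, computation.
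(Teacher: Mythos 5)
Your proposal is correct and follows essentially the same route as the paper: the odd flows \eqref{odd-flow1}--\eqref{odd-flow3} are obtained from the basic derivations \eqref{zh-24} by induction via the recursion relations \eqref{recursions}, \eqref{recursions2}, \eqref{zh-25}, the equations \eqref{odd-flow4} are forced by commutativity of the $\tau_k$- and $t^{\al,p}$-flows on $P$ and $Q$, and well-definedness and mutual commutativity are checked by the standard super tau-cover bookkeeping. The paper's own proof is only a few lines and leaves the verifications implicit, so your additional detail on item (1) and on the invertibility of the linear systems is consistent with, and somewhat more explicit than, what the authors wrote.
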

\begin{proof}
By using the recursion relations \eqref{recursions}, \eqref{recursions2}, \eqref{zh-25}  and by induction on $k$ we can get the expressions of the flows \eqref{odd-flow1}--\eqref{odd-flow3} starting from \eqref{zh-24}. 
 The equations \eqref{odd-flow4}
can be obtained from the commutation relations
\[\frac{\partial^2 P}{\partial \tau_k\partial t^{\alpha,p}}=\frac{\partial^2 P}{\partial t^{\alpha,p}\p \tau_k},\quad\frac{\partial^2 Q}{\partial \tau_k\partial t^{\alpha,p}}=\frac{\partial^2 Q}{\partial t^{\alpha,p}\p \tau_k}.\] 
It's easy to check that all the flows given above are well defined and mutually commutative.
\end{proof}

 We are to use the shift operator $T$ which is introduced in \cite{liu2023variational} to describe the relation among the odd flows of the super extension of a bihamiltonian integrable hierarchy. It acts on elements of $\hat{\mathcal{A}}$ and  only shifts the index of odd variables, e.g.,
\[T(a\sigma_{\alpha,k})=a\sigma_{\alpha,k+1},\quad T^{-1}(a\sigma_{\alpha,k})=a\sigma_{\alpha,k-1},\quad a\in\mathcal{A}.\]
From \eqref{odd-flow1} it follows that 
\begin{equation}\label{w10}
\frac{\p b}{\p \tau_k}=T^{k}\frac{\p b}{\p \tau_0},\quad k\in\mathbb{Z}
\end{equation}
for any differential polynomial $b\in\mathcal{A}$.
By using the definition \eqref{i0}, \eqref{i1} of $I_0$, $I_1$ we can check that 
\begin{equation}\label{w9}
	T\frac{\delta I_0}{\delta \theta_i}=\frac{\delta I_1}{\delta \theta_i},\quad i=1,2, 
\end{equation}
therefore for any $(\al,k)\in\mathcal{I}$ we have
\begin{equation}\label{w8}
[I_1,H_{\alpha,k}]=T[I_0, H_{\alpha,k}].
\end{equation}

Let us proceed to construct the super tau-cover of the extended Ablowitz-Ladik hierarchy. 
By using a similar method  employed in the proof of  Lemma 5 given in \cite{2109.01845}, we know the existence of differential polynomials $\Delta_{\alpha,p}^{n,m}\in\mathcal{B}$ such that 
\[\epsilon\frac{\partial^2h_{\alpha,k}}{\partial \tau_m\partial\tau_n}=(\Lambda-1)\Delta_{\alpha,k}^{n,m},\quad \Delta_{\alpha,k}^{n,m}=-\Delta_{\alpha,k}^{m,n},\quad (\al,k)\in\mathcal{I},\,m,n\in\mathbb{Z},\]
here $\mathcal{B}$ is defined in \eqref{zh-27}. 
We introduce two families of the odd variables $\psi_{0,1}^n$, $\psi_{1,1}^n$, $n\in\mathbb{Z}$ which satisfy the relations
  \[(\Lambda-1)\psi_{0,1}^n=\epsilon\frac{\partial h_{0,1}}{\partial \tau_n},\quad (\Lambda-1)\psi_{1,1}^n=\epsilon\frac{\partial h_{1,1}}{\partial \tau_n},\] 
and define the evolutions of these odd variables with respect to $t^{\beta,q}, \tau_n$ by     \begin{equation}
      \frac{\partial\psi_{\alpha,1}^n}{\partial t^{\beta,q}}=\frac{\partial \Omega_{\alpha,1;\beta,q}}{\partial \tau_n}, \quad
      \frac{\partial \psi_{\alpha,1}^n}{\partial \tau_m}=\Delta_{\alpha,1}^{n,m},\quad (\beta,q)\in\mathcal{I},\,\al=0,1,\,m,n\in\mathbb{Z}.
  \end{equation}
\begin{pro} For any given $(\alpha,k)\in\mathcal{I}$,  $(\alpha,k)\neq (0,1),(1,1)$, there exist differential polynomials $\psi_
{\alpha,k}^n\in\mathcal{B}$ such that 
    \[\epsilon\frac{\partial h_{\alpha,k}}{\partial\tau_n}=(\Lambda-1)\psi_{\alpha,k}^n,\quad n\in\mathbb{Z}.\]
\end{pro}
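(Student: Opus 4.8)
First I would reduce to the case $n=0$. Since $h_{\alpha,k}$ is a differential polynomial, \eqref{w10} gives $\epsilon\,\partial h_{\alpha,k}/\partial\tau_n=T^{n}\bigl(\epsilon\,\partial h_{\alpha,k}/\partial\tau_0\bigr)$, and the shift operator $T$, which only changes the indices of the odd variables $\sigma_{\beta,\ell}^{s}$, commutes with $\Lambda$. Hence it is enough to produce $\psi_{\alpha,k}^{0}\in\mathcal{B}$ with $\epsilon\,\partial h_{\alpha,k}/\partial\tau_0=(\Lambda-1)\psi_{\alpha,k}^{0}$ and then put $\psi_{\alpha,k}^{n}:=T^{n}\psi_{\alpha,k}^{0}$.

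To construct $\psi_{\alpha,k}^{0}$ I would argue by ``integrating along the flows of the hierarchy''. Differentiating the tau-symmetry relation $\epsilon\,\partial h_{\alpha,k}/\partial t^{\beta,q}=(\Lambda-1)\Omega_{\alpha,k;\beta,q}$ along $\tau_0$, and using that $\partial_{\tau_0}$ commutes with $\Lambda=e^{\epsilon\partial_x}$ and with every flow of the super extension (Proposition \ref{sp-extension}), one gets
\[\frac{\partial}{\partial t^{\beta,q}}\Bigl(\epsilon\,\frac{\partial h_{\alpha,k}}{\partial\tau_0}\Bigr)=(\Lambda-1)\,\frac{\partial\Omega_{\alpha,k;\beta,q}}{\partial\tau_0},\qquad(\beta,q)\in\mathcal{I}.\]
The symmetry and mutual commutativity of the two-point functions $\Omega$, together with $[\partial_{\tau_0},\partial_{t^{\gamma,r}}]=0$, show that the family $\bigl\{\partial_{\tau_0}\Omega_{\alpha,k;\beta,q}\bigr\}_{(\beta,q)\in\mathcal{I}}$ is closed with respect to the $t$-flows, i.e.\ $\partial_{t^{\gamma,r}}\partial_{\tau_0}\Omega_{\alpha,k;\beta,q}=\partial_{t^{\beta,q}}\partial_{\tau_0}\Omega_{\alpha,k;\gamma,r}$. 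Applying the analogue of Corollary \ref{w-2} over the ring $\mathcal{B}$ (its hypotheses on the leading terms of the flows are precisely the nondegeneracy conditions checked in the Remark following Lemma \ref{1}), this closedness then produces an element $\psi_{\alpha,k}^{0}\in\mathcal{B}$ with $\partial\psi_{\alpha,k}^{0}/\partial t^{\beta,q}=\partial\Omega_{\alpha,k;\beta,q}/\partial\tau_0$ for all $(\beta,q)\in\mathcal{I}$.

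It then remains to identify $(\Lambda-1)\psi_{\alpha,k}^{0}$ with $\epsilon\,\partial h_{\alpha,k}/\partial\tau_0$. Applying $\Lambda-1$ to the defining relations of $\psi_{\alpha,k}^{0}$ and comparing with the display above, the difference $(\Lambda-1)\psi_{\alpha,k}^{0}-\epsilon\,\partial h_{\alpha,k}/\partial\tau_0$ is annihilated by every $\partial_{t^{\beta,q}}$, in particular by $\partial_x=\partial_{t^{0,0}}$; being a $\partial_x$-closed element of $\mathcal{B}$ of super degree $1$, it vanishes (up to an inessential additive constant that can be absorbed into $\psi_{\alpha,k}^{0}$; cf.\ Corollary A.4 of \cite{dubrovin2018bihamiltonian}). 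For the Hamiltonians $h_{2,k}$, $h_{0,0}$ and $h_{0,-q}$ coming directly from the Lax operators one can also proceed more directly by a residue computation: writing the $\tau_0$-flow in Lax form $\epsilon\,\partial L/\partial\tau_0=[X_0,L]$, $\epsilon\,\partial M/\partial\tau_0=[X_0,M]$ with $X_0$ an odd difference operator with coefficients in $\mathcal{B}$, one has $\epsilon\,\partial h_{2,k}/\partial\tau_0=\frac{1}{(k+1)!}\Res[X_0,L^{k+1}]$ (and likewise for $h_{0,-q}$ in terms of $M$), and the residue of a commutator of difference operators always lies in $(\Lambda-1)\mathcal{B}$.

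The step I expect to be the main obstacle is the leading-order input to Corollary \ref{w-2} used in the second paragraph: the construction yields a \emph{local} primitive $\psi_{\alpha,k}^{0}$ only when the residue obstruction to writing $\epsilon\,\partial h_{\alpha,k}/\partial\tau_0$ in the form $(\Lambda-1)(\text{local})$ vanishes. A direct computation with the genus-zero densities $\theta_{\alpha,k}$ of $M_\textrm{AL}$—using the residue representations \eqref{zh-8}--\eqref{zh-11}—shows that this obstruction vanishes exactly for $(\alpha,k)\neq(0,1),(1,1)$, which is precisely the reason $\psi_{0,1}^{n}$ and $\psi_{1,1}^{n}$ had to be introduced above as independent odd variables rather than as elements of $\mathcal{B}$. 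A secondary technical point is that the odd variables $\sigma_{\beta,\ell}^{s}$ with $\ell\le-1$ are nonlocal, so the leading-order bookkeeping behind Corollary \ref{w-2} has to be carried out in $\mathcal{B}$ and not in $\hat{\mathcal{A}}$; this is controlled by means of the recursion relations \eqref{recursions}, \eqref{recursions2} that generate the defining ideal $J$ of $\mathcal{B}$.
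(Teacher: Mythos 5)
Your reduction to $n=0$ via the shift operator $T$ is fine, and your closing residue argument for the densities $h_{2,k}$ and $h_{0,-q}$ that come from the Lax operators (using Proposition \ref{odd-lax} and the fact that the residue of a commutator of difference operators lies in the image of $\Lambda-1$) is sound. The gap is in the central step: the construction of a local primitive $\psi^0_{\alpha,k}$ by invoking ``the analogue of Corollary \ref{w-2} over $\mathcal{B}$''. First, Lemma \ref{1} and Corollary \ref{w-2} are proved only for the even ring $\mathcal{A}$; their proofs rest on a lexicographic analysis of monomials in the even jet variables $u^{i,s}$ and on flows of the diagonal form $V^i_q u^{i,1}+O(\ve)$, none of which applies to elements of $\mathcal{B}$ involving the odd generators $\sigma^s_{\gamma,\ell}$ ($\ell\in\mathbb{Z}$) modulo the nonlocal recursion ideal $J$. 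Second, and more decisively, the only hypothesis you actually verify is the closedness $\p_{t^{\gamma,r}}\p_{\tau_0}\Omega_{\alpha,k;\beta,q}=\p_{t^{\beta,q}}\p_{\tau_0}\Omega_{\alpha,k;\gamma,r}$, and this holds for \emph{every} $(\alpha,k)\in\mathcal{I}$, including $(0,1)$ and $(1,1)$. If your extended corollary delivered a primitive under that hypothesis alone, your argument would prove the proposition for $(0,1)$ and $(1,1)$ as well, which is false: those are exactly the cases where $\ve\,\p_{\tau_n}h_{\alpha,1}$ is \emph{not} in $(\Lambda-1)\mathcal{B}$ and $\psi^n_{0,1},\psi^n_{1,1}$ have to be adjoined as new nonlocal odd variables. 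So the obstruction cannot sit where you place it, and the ``direct computation with the genus-zero densities'' that you defer would in any case only control the dispersionless leading term, not all orders in $\ve$.

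For contrast, the paper computes the obstruction exactly in the space of local functionals: by \eqref{w-1b} and \eqref{w10} one has $\int\p_{\tau_n}h_{\alpha,k}=T^n[I_0,H_{\alpha,k-1}]$ in $\mathcal{B}/\p_x\mathcal{B}$, and the identity $[I_1,H]=T[I_0,H]$ together with the bihamiltonian recursion relations \eqref{w6}, \eqref{w7} reduces every such bracket to a multiple of $T^m[I_0,H_{\gamma,-1}]$ or $T^m[I_1,H_{0,-1}]$, which vanish by the Casimir conditions \eqref{zh-28}. The recursion reaches every $(\alpha,k)$ except $(0,1)$ and $(1,1)$, whose Hamiltonians $H_{0,0}$ and $H_{1,0}$ generate the nontrivial flows $\p_x$ and $\p_{t^{1,0}}$ and hence have nonvanishing brackets with $I_0$; this is precisely the origin of the exclusion in the statement. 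Once the class $\int\p_{\tau_n}h_{\alpha,k}$ vanishes, exactness under $\Lambda-1$ follows. If you want to salvage your route, you would need to formulate and prove an odd, nonlocal version of Lemma \ref{1} with hypotheses strong enough to detect the two exceptional cases; the bracket computation is both shorter and pinpoints where the obstruction lives.
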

\begin{proof}
Since the bihamiltonian conserved quantities $H_{\al,-1}\,(\al=1,2)$, $H_{0,-1}$ are deformations of  Casimirs of the dispersionless limits of  the bihamiltonian structure $(I_0, I_1)$, we know that
\begin{equation}\label{zh-28}
    [I_0, H_{\alpha,-1}]=0,
    \quad [I_1,H_{0,-1}]=0,
    \quad\al=1,2.
\end{equation}  
By using \eqref{w-1b} we have
\begin{align*}
\int \frac{\p h_{\al,k}}{\p \tau_0}&=\int D_{I_0}(h_{\al,k})=	[I_0,H_{\al,k-1}],
\\
\int \frac{\p h_{\al,k}}{\p \tau_1}&=\int D_{I_1}(h_{\al,k})=	[I_1,H_{\al,k-1}],\quad (\al,k)\in\mathcal{I}.
\end{align*}
Thus from \eqref{w10}, \eqref{w8} it follows that 
\begin{align}
    \int \frac{\partial h_{\alpha,0}}{\partial \tau_n}&=T^n[I_0,H_{\alpha,-1}]=0,\quad\al=1,2,\\
    \int \frac{\p h_{0,0}}{\p \tau_n}&=T^{n-1} [I_1,H_{0,-1}]=0,\quad n\in\mathbb{Z}.
\end{align}
Since a bihamitonian vector field is uniquely determined by its leading term, we obtain from \eqref{zh-29}--\eqref{zh-30} the following bihamiltonian recursion relations: 
    \begin{align}
     & [I_1,H_{2,k-1}]=(k+1)[I_0,H_{2,k}],\quad [I_1,H_{1,k-1}]=k[I_0,H_{1,k}]+2[I_0,H_{2,k-1}],\label{w6}\\
  & [I_1,H_{0,k-1}]=k[I_0,H_{0,k}]+[I_0,H_{2,k-1}],
    \quad [I_1,H_{0,-k-1}]=-k[I_0,H_{0,-k}],\label{w7}
\end{align}
where $k\ge 0$. Then we can derive from these relations and \eqref{w8} that
  \begin{equation}
      \int \frac{\partial h_{\alpha,k}}{\partial \tau_n}=T^{n}[I_0,H_{\al,k-1}]=0,
  \end{equation}
  here $T^{n}[I_0,H_{\al,k-1}]$ is viewed as an element of  $\mathcal{B}/\p_x\mathcal{B}$ with $\p_x$, $\mathcal{B}$ being defined by  \eqref{dx}, \eqref{zh-27} respectively. For example, when $(\al,k)=(2,1)$ we have 
  \begin{align*}
  	\int\frac{\p h_{2,1}}{\p \tau_n}
  	&=T^n\int\frac{\p h_{2,1}}{\p \tau_0}=T^n\int D_{I_0}(h_{2,1})= T^n[I_0,H_{2,0}]\\
  	&=T^n[I_1,H_{2,-1}]=T^{n+1}[I_0,H_{2,-1}]=0.
  \end{align*}
 Thus $\frac{\partial h_{\alpha,k}}{\partial \tau_n}$ is a total $x$-derivative for any $(\al,k)\in\mathcal{I}$ and $(\Lambda-1)^{-1}\frac{\partial h_{\alpha,k}}{\partial \tau_n}$ is a differential polynomial of $\mathcal{B}$.
The proposition is proved. \end{proof}

Now we are ready to give a super extension of the tau cover 
\eqref{zh-33} of the extended Ablowitz-Ladik hierarchy.
\begin{de}
We call the system consists of the flows 
	\begin{align}\label{1pt}
\ve\frac{\partial f_{\alpha,p}}{\partial t^{\beta,q}}&=\Omega_{\alpha,p;\beta,q},\quad
\ve \frac{\partial f_{\alpha,p}}{\partial \tau_n}=\psi_{\alpha,p}^n,\\
\frac{\partial \psi_{\alpha,p}^n}{\partial t^{\beta ,q}}&=\frac{\partial \Omega_{\alpha,p;\beta,q}}{\partial \tau_n},\quad
\frac{\partial \psi_{\alpha,p}^n}{\partial \tau_m}=\Delta_{\alpha,p}^{n,m},
\quad (\al,p)\in\mathcal{I}, n\in\mathbb{Z},
\end{align}
and the ones given in Proposition \ref{sp-extension} the super tau-cover of the extended Ablowtiz-Ladik hierarchy.
\end{de}

The following proposition shows that the odd flows \eqref{odd-flow1} can be represented in term of Lax equations, which will be used in the next section.
\begin{pro}\label{odd-lax}
	There exist operators $A$, $B$ of the form
\[A=\sum_{i\ge 1} a_i \Lambda^{-i},\quad  B=\sum_{i\ge 1} b_i \Lambda^{-i}\] with $a_i,b_i\in \mathcal{B}$ 
such that
	\begin{equation}\label{zh-31}
	\ve\frac{\partial P}{\partial \tau_0}=(\Lambda-P)B-A(\Lambda-P),\quad\ve\frac{\partial Q}{\partial \tau_0}=(\Lambda-Q)B^{-}-A(\Lambda-Q),
	\end{equation}
here $B^{-}=\Lambda^{-1}B\Lambda$. Moreover, we have
	\begin{equation}\label{zh-32}
	\ve\frac{\partial L}{\partial \tau_n}=[T^nB,L].
	\end{equation}
\end{pro}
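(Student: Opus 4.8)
The plan is to construct the operators $A$ and $B$ by matching the $\tau_0$-evolution of the Lax operator $L=(1-Q\Lambda^{-1})^{-1}(\Lambda-P)$ against a commutator and then reading off the consequences for $P$ and $Q$ separately. First I would compute $\ve\,\p_{\tau_0}L$ using the odd flow \eqref{odd-flow1}, i.e. $\ve\,\p_{\tau_0}P=P(\Lambda-1)Q\sigma_{2,-1}$ and $\ve\,\p_{\tau_0}Q=Q(\Lambda^{-1}-1)\sigma_{1,0}$, together with the recursion relations \eqref{recursions}, \eqref{recursions2}. Writing $L=B_0^{-1}(\Lambda-P)$ with $B_0=1-Q\Lambda^{-1}$, one has $\ve\,\p_{\tau_0}L=-B_0^{-1}(\ve\,\p_{\tau_0}B_0)L-B_0^{-1}(\ve\,\p_{\tau_0}P)$, which is manifestly an operator whose symbol is a Laurent series in $\Lambda^{-1}$ with no nonnegative powers beyond what $L$ itself carries; the point of introducing the odd variables $\sigma_{\al,k}^s$ is precisely that the nonlocal tails organize into the purely lower-triangular operators $A,B$. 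I would then posit the ansatz \eqref{zh-31} and show that it is consistent: substituting $L=B_0^{-1}(\Lambda-P)$ into $\ve\,\p_{\tau_0}L=[A',L]$ for an unknown $A'=\sum_{i\ge 1}a_i\Lambda^{-i}$ and using $\ve\,\p_{\tau_0}(\Lambda-P)=(\Lambda-P)B-A(\Lambda-P)$, $\ve\,\p_{\tau_0}(1-Q\Lambda^{-1})=\text{(to be determined)}$, one finds that the system closes provided the second equation of \eqref{zh-31} holds, since $1-Q\Lambda^{-1}=\Lambda^{-1}(\Lambda-Q)$ and $B^-=\Lambda^{-1}B\Lambda$ makes the two conjugation patterns compatible.

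The coefficients $a_i,b_i$ would be determined recursively by comparing powers of $\Lambda$ in \eqref{zh-31}: the $\Lambda^1$-coefficient gives $b_1$ and $a_1$ in terms of $\ve\,\p_{\tau_0}P$, $\ve\,\p_{\tau_0}Q$, and then each lower power expresses $a_{i+1},b_{i+1}$ through $a_i,b_i,P,Q$ and the previously solved data; because the right-hand sides $\ve\,\p_{\tau_0}P$, $\ve\,\p_{\tau_0}Q$ lie in $\mathcal{B}$ (they involve $\sigma_{2,-1}$, $\sigma_{1,0}$), each $a_i,b_i$ obtained this way is an element of $\mathcal{B}$, as claimed. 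I would also check that the two scalar equations in \eqref{zh-31} are not over-determined: the difference of the two, after multiplying by the appropriate shift operators, reduces to an identity in the odd variables that follows from the recursion relations \eqref{recursions}, \eqref{recursions2}; this is the step where the relations $\sigma_{\al,0}=\theta_\al$ and the explicit form \eqref{odd-flow1} of the flow get used in an essential way.

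Once \eqref{zh-31} is established, \eqref{zh-32} follows in two moves. For $n=0$, from $L=(1-Q\Lambda^{-1})^{-1}(\Lambda-P)$ and \eqref{zh-31} one computes
\[
\ve\,\p_{\tau_0}L=(1-Q\Lambda^{-1})^{-1}\Lambda^{-1}\bigl((\Lambda-Q)B^--A(\Lambda-Q)\bigr)\Lambda^{-1}\cdot(\text{correction})+\dots
\]
and a direct rearrangement using $\Lambda B^-=B\Lambda$ collapses the expression to $[B,L]$; I expect the bookkeeping here to be the only genuinely delicate part, since one must track how $A$ cancels between the numerator and the factor $(1-Q\Lambda^{-1})^{-1}$. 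For general $n$, one invokes the shift operator $T$ from the text: by \eqref{w10}, $\ve\,\p_{\tau_n}P=T^n(\ve\,\p_{\tau_0}P)$ and likewise for $Q$, and since $T$ only relabels the odd indices $\sigma_{\al,k}\mapsto\sigma_{\al,k+n}$ while commuting with all differential-polynomial operations on $P,Q$, applying $T^n$ to \eqref{zh-31} gives the same identity with $A,B$ replaced by $T^nA,T^nB$; feeding this into the $n=0$ computation yields $\ve\,\p_{\tau_n}L=[T^nB,L]$. The main obstacle, as noted, is the explicit operator identity reducing the two-sided expression to the single commutator $[B,L]$; everything else is recursion and an application of the already-established properties of $T$.
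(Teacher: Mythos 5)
Your proposal follows essentially the same route as the paper: posit $A=\sum_{i\ge1} a_i\Lambda^{-i}$, $B=\sum_{i\ge1} b_i\Lambda^{-i}$, determine the coefficients order by order in $\Lambda$ from \eqref{zh-31}, then obtain \eqref{zh-32} by differentiating $L=(1-Q\Lambda^{-1})^{-1}(\Lambda-P)$ with the Leibniz rule, observing that the $A$-terms cancel and that $\Lambda B^-=B\Lambda$ collapses the remainder to $[B,L]$, and finally applying $T^n$ via \eqref{w10} for general $n$. The one place where you are thinner than the paper is the existence of the $a_i,b_i$ in $\mathcal{B}$: comparing coefficients of $\Lambda^{-k+1}$ in the two equations of \eqref{zh-31} determines only the combinations $b_k^+-a_k$ and $b_k-a_k$, so extracting $b_k$ requires inverting $\Lambda-1$, and the locality of the result is precisely the solvability condition you defer to ``an identity in the odd variables''; the paper settles this point not by a general argument but by exhibiting explicit closed-form recursions for $a_k$ and $b_k$ built from the $\sigma_{\al,j}$ and verified against the recursion relations \eqref{recursions}, \eqref{recursions2}.
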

\begin{proof}
In order to show that we can represent the flow $\frac{\p}{\p\tau_0}$ in the form \eqref{zh-31}, we need to find differential polynomials $a_i$, $b_i\in\mathcal{B}$ such that 
	\begin{align*}
	b_1^+-a_1&=-Q^+\sigma_{1,0}^++Q\sigma_{1,0}^--Q^+\sigma_{2,0}^++Q\sigma_{2,0},\quad
	b_1-a_1=-Q\sigma_{1,0}+Q\sigma_{1,0}^-,\\
	 b_k^+-a_k&=P b_{k-1}-P^{-(k-1)}a_{k-1},\quad b_k-a_k=Qb_{k-1}^--Q^{-(k-1)}a_{k-1},\quad k\ge 2.
	\end{align*}
Indeed, the following choice of $a_i,b_i$ meets these conditions:
	 \begin{align*}
	a_1&=-Q(\sigma_{1,0}^-+\sigma_{2,0}),\quad b_1=-Q(\sigma_{1,0}+\sigma_{2,0}),\\
	  a_2&=Q\sigma_{1,1}+Q\sigma_{2,1}+QP^-\sigma_{2,0}-QQ^-\sigma_{2,0}^-,\\
	 b_2&=Q\sigma_{1,1}+Q\sigma_{2,1}+QP^-\sigma_{2,0}-QQ^-\sigma_{2,0},\\
	 a_k&=a_{k-1} \Lambda^{-k+1}P+Ta_{k-1}-Ta_{k-2} \Lambda^{-k+2}Q,\\
	 b_k&=P^-b_{k-1}^-+Tb_{k-1}^--Q^-Tb_{k-2}^{--},\quad k\ge 3.
	 \end{align*}
Since $L=\Lambda(\Lambda-Q)^{-1}(\Lambda-P)$,  we have
	\begin{align*}
	\ve\frac{\partial L}{\partial \tau_0}=&\Lambda(\Lambda-Q)^{-1}\ve\frac{\partial Q}{\partial \tau_0}(\Lambda-Q)^{-1}(\Lambda-P)-\Lambda(\Lambda-Q)^{-1}\ve\frac{\partial P}{\partial \tau_0}\\
	=&\Lambda(\Lambda-Q)^{-1}((\Lambda-Q)B^--A(\Lambda-Q))(\Lambda-Q)^{-1}(\Lambda-P)\\
	&-\Lambda(\Lambda-Q)^{-1}((\Lambda-P)B-A(\Lambda-P))=[B,L],
	\end{align*}
from which it follows \eqref{zh-32}.
The proposition is proved.	
\end{proof}


\section{Virasoro symmetries of the extended Ablowitz-Ladik hierarchy}
In this section, we show that the extended Ablowitz-Ladik hierarchy
possesses an infinite number Virasoro symmetries which act linearly on its tau function, i.e.,
\begin{equation}\label{zh-36}
	\frac{\p \tau}{\p s_m}=L_m\tau,\quad m\ge -1,
\end{equation}
where the Virasoro operators $L_m$ are given in \eqref{virop}--\eqref{vir-m}.
The flows \eqref{zh-36} induce the following flows of the tau cover \eqref{zh-33} of the  extended Ablowitz-Ladik hierarchy:
\begin{align}
\frac{\p f_{\gamma,k}}{\p s_m}=&a_m^{\al,p;\beta,q}\left(\ve \frac{\p\Omega_{\al,p;\beta,q}}{\p t^{\gamma,k}}
+f_{\al,p}\Omega_{\gamma,k;\beta,q}+f_{\beta,q}\Omega_{\gamma,k;\al,p}\right)\notag\\
&+b^{\al,p}_{m;\gamma,k} f_{\al,p}+\frac1{\ve}b^{\al,p}_{m;\beta,q}t^{\beta,q} \Omega_{\gamma,k;\al,p}+\frac1{\ve}c_{m;\al,p;\gamma,k} t^{\al,p},\label{zh-40}\\
\frac{\p w^1}{\p s_m}=&(\Lambda-1)\left(\frac{\p f_{2,0}}{\p s_m}\right),\quad \frac{\p \bigl(w^2-\log(e^{w^2}-w^1)\bigr)}{\p s_m}=(\Lambda-1)B_-\left(\frac{\p f_{0,0}}{\p s_m}\right),\label{zh-41}
\end{align}
 where $w^1=Q-P, w^2=\log Q$, as they are defined in \eqref{zh-37}, and the operator $B_-$ is given by
\[B_-=\frac{(1-\Lambda^{-1})}{\ve\p_x}=1-\frac{1}{2}\ve\p_x+\dots.\]
We are to prove that these flows are symmetries of the tau cover of the extended Ablowitz-Ladik hierarchy, i.e., the following commutation relations hold true:
\begin{equation}\label{31-w1}
\left[\frac{\p}{\p s_m},\,\frac{\p }{\p t^{\al,p}}\right]X=0,\quad m\ge -1,\ (\al,p)\in\mathcal{I},
\end{equation}
here $X=f_{\beta,q},\,P,\,Q$.
To this end, we employ the approach of \cite{2109.01845} by considering a certain modification of the above defined flows which act on the unknown functions of the super tau cover of the extended Ablowitz-Ladik hierarchy.
The first step is to introduce the modified Virasoro operators
	\begin{equation}
	\tilde{L}_{m}=L_m+L_m^{\text{odd}},\quad L_m^{\text{odd}}=\sum_{p\in\mathbb{Z}}(c_0+p)\tau_{p}\frac{\partial}{\partial\tau_{p+m}},
	\end{equation}	
	where $c_0$ is an arbitrarily given constant.
	It's easy to  check that these modified operators still satisfy the Virasoro commutation relations
	\[
		[\tilde{L}_m,\tilde{L}_n]=(m-n)\tilde{L}_{m+n},\quad m, n\ge -1,
	\]
and they yield the flows
\begin{align}
&\frac{\p f_{\gamma,k}}{\p \tilde{s}_m}=a_m^{\al,p;\beta,q}\left(\ve \frac{\p\Omega_{\al,p;\beta,q}}{\p t^{\gamma,k}}
+f_{\al,p}\Omega_{\gamma,k;\beta,q}+f_{\beta,q}\Omega_{\gamma,k;\al,p}\right)\notag\\
&\hskip 1.2cm   +b^{\al,p}_{m;\gamma,k} f_{\al,p}+\frac1{\ve}b^{\al,p}_{m;\beta,q}t^{\beta,q} \Omega_{\gamma,k;\al,p}+\frac1{\ve}c_{m;\al,p;\gamma,k} t^{\al,p}+\sum_{p\in\mathbb{Z}}(c_0+p)\tau_{p}\frac{\partial f_{\gamma,k}}{\partial\tau_{p+m}},\\
&\frac{\p w^1}{\p \tilde{s}_m}=(\Lambda-1)\left(\frac{\p f_{2,0}}{\p \tilde{s}_m}\right),\quad \frac{\p \bigl(w^2-\log(e^{w^2}-w^1)\bigr)}{\p \tilde{s}_m}=(\Lambda-1)B_-\left(\frac{\p f_{0,0}}{\p \tilde{s}_m}\right),\\
&\frac{\p \bigl(Q\sigma_{1,n}+Q\sigma_{2,n}\bigr)}{\p\tilde{s}_m}=\frac{\p \psi_{2,0}^n}{\p\tilde{s}_m},\quad  \frac{\p\bigl(P\sigma_{1,n-1}+Q\sigma_{2,n-1}\bigr)}{\p\tilde{s}_m}=\Lambda B_-\left(\frac{\p \psi_{0,0}^n}{\p \tilde{s}_m}\right),\\
&\frac{\p\psi_{\al,p}^n}{\p\tilde{s}_m}=\ve\frac{\p}{\p\tau_n}\left(\frac{\p f_{\al,p}}{\p \tilde{s}_m}\right).\label{zh-39}
\end{align}	
	
We can represent the operators $\tilde{L}_{-1}$, $\tilde{L}_2$  in the form
	\begin{align*}
	\tilde{L}_{-1}=&E_{-1}+\frac{1}{\epsilon^2}t^{1,0}t^{2,0},\\
	\tilde{L}_2=&E_2+\epsilon^2\frac{\partial^2}{\partial t^{2,0}\partial t^{2,0}}+2\left(2t^{1,0}+t^{0,0}\right)\frac{\p}{\p t^{2,1}}+\frac{1}{\epsilon^2}\sum_{k\ge 0}\left(3k^2+6k+2\right)(-1)^{k+2}t^{0,-k-2}t^{1,k},
	\end{align*}
where the linear operators $E_{-1}$, $E_2$ are defined by
	\begin{align*}
	E_{-1}=&\sum\limits_{k\ge 1}\left(t^{1,k}\frac{\partial}{\partial t^{1,k-1}}+t^{2,k}\frac{\partial}{\partial t^{2,k-1}}\right)+\sum_{p\in\mathbb{Z}}t^{0,p}\frac{\partial}{\partial t^{0,p-1}}+(p+c_0)\tau_p\frac{\partial}{\partial \tau_{p-1}}\\
	E_2=&\sum_{k\ge 1}\frac{(2+k)!}{(k-1)!}\left(t^{0,k}\frac{\partial}{\partial t^{0,k+2}}-t^{0,-k-2}\frac{\partial}{\partial t^{0,-k}}+t^{1,k}\frac{\partial}{\partial t^{1,k+2}}+t^{2,k-1}\frac{\partial}{\partial t^{2,k+1}}\right)\notag\\
	&+\sum_{k\ge 1}\left(3k^2+6k+2\right)\left(2t^{1,k}+t^{0,k}\right)\frac{\partial}{\partial t^{2,k+1}}-t^{0,-1}\frac{\partial}{\partial t^{2,0}}+\sum_{k\ge 0}(k+c_0)\tau_k\frac{\partial}{\partial \tau_{k+2}},
	\end{align*}
From the above definition of the flows $\frac{\p}{\p\tilde{s}_m}$, we have
the following:
	\begin{enumerate}
		\item For $m=-1$,
		\begin{align}
		\label{first}
		\frac{\partial P}{\partial \tilde{s}_{-1}}&=E_{-1}P-1+\frac{Q^+}{P^+}+\frac{Q}{P^-},\quad \frac{\partial Q}{\partial \tilde{s}_{-1}}=E_{-1}Q+\frac{Q}{P^-},\\
		\frac{\partial\sigma_{1,0}}{\partial \tilde{s}_{-1}}&=E_{-1}\sigma_{1,0}+c_0\sigma_{1,-1},\quad
		\frac{\partial (Q\sigma_{2,0})}{\partial \tilde{s}_{-1}}=E_{-1}Q\sigma_{2,n}+c_0Q\sigma_{2,-1}-\frac{Q}{P^-}\sigma_{1,0}.\label{first2}
		\end{align}
		\item For $m=2$,
		\begin{align}
			\frac{\partial P}{\p \tilde{s}_2}&=E_2P+\frac{P}{P-Q}\left(Q\left(1-\Lambda^{-1}\right)(\Lambda-1)X-(\Lambda-1)Y\right),\label{zh-38}\\
		\frac{\partial Q}{\p \tilde{s}_2}&=E_2Q+\frac{Q}{P-Q}\left(P\left(1-\Lambda^{-1}\right)(\Lambda-1)X-(\Lambda-1)Y\right),\\
		\frac{\partial\sigma_{1,0}}{\p \tilde{s}_2}&=E_2\sigma_{1,0}+c_0\sigma_{1,2}+A_{1,0},\quad
		\frac{\partial (Q\sigma_{2,0})}{\p \tilde{s}_2}=E_2(Q\sigma_{2,0})+c_0Q\sigma_{2,2}+A_{2,0},\label{last} 
		\end{align}
		where
		\begin{align*}
		 X&=2\ve^{-1}\left(t^{0,0}+2t^{1,0}\right)f_{2,1}+\Omega_{2,0;2,0}+f_{2,0}^2,\\
		Y&=2\ve^{-1}\left(t^{0,0}+2t^{1,0}\right)\Omega_{2,1;2,0}+\frac{\partial\Omega_{2,0;2,0}}{\partial t^{2,0}}+2\ve^{-1}f_{2,0}\Omega_{2,0;2,0}+6\ve^{-1}f_{2,2},\\
		A_{1,0}&=\frac{1}{Q-P}\left(\frac{\partial Y}{\partial \tau_0}-\frac{\partial (\Lambda-1)X}{\partial \tau_{1}}-\left((\Lambda-1)Y\right)\sigma_{1,0}-P\sigma_{1,2}-Q\sigma_{2,2}\right),\\
		A_{2,0}&=\frac{1}{P-Q}\left(P\frac{\partial Y}{\partial \tau_0}-Q\frac{\partial(\Lambda-1) X}{\partial \tau_{1}}-PQ\left(
		\left(\left(\Lambda-\Lambda^{-1}\right)X\right)\sigma_{1,0}+\sigma_{1,2}+\sigma_{2,2}\right)\right).
		\end{align*}
	\end{enumerate}	

We prove the commutation relations \eqref{31-w1} by the following three steps:
	\begin{enumerate}
		\item We prove that \begin{equation}
		\left[\frac{\p}{\p \tilde{s}_{i}},\frac{\p}{\p\tau_j}\right]K=0,\quad i=-1,\,2,\, j=0,\,1,
		\end{equation}
			for $K=f_{\al,p}, \sigma_{1,0}, \sigma_{2,0}, P, Q$.
		\item Then we check the validity of the locality condition 
		\begin{equation}
		\left[\frac{\p}{\p \tilde{s}_{i}},\frac{\p}{\p t^{\al,p}}\right]K\in\hat{\mathcal{A}},\quad\text{for 
		}\,i=-1,\,2,\,(\al,p)\in\mathcal{I}
		\end{equation}
	that is introduced in \cite{2109.01845}, where $K=\sigma_{1,0},\,\sigma_{2,0}, P, Q$.
		\item By using the Jacobi identity, the super tau cover of the extended Ablowitz-Ladik hierarchy and by using the main result \cite{2109.01845} we can obtain that
		\begin{equation}
				\left[\frac{\p}{\p s_{i}},\frac{\p}{\p t^{\al,p}}\right]X=0,\quad \text{for }\,i=-1,2,
		\end{equation}
		where $X=f_{\beta,q}, P,Q$.
		\end{enumerate}
These steps are accomplished by the three lemmas that are given below.
\begin{lem}\label{5.1}
	The flows $\frac{\p}{\p \tilde{s}_{-1}}, \frac{\p}{\p \tilde{s}_2}$ which are given by  \eqref{first}--\eqref{last}  commute with the odd flows $\frac{\p }{\p \tau_0}, \frac{\p }{\p \tau_1}$  which are  defined by \eqref{odd-flow1}--\eqref{odd-flow3}, i.e.,
		\begin{equation}
	\left[\frac{\p}{\p \tilde{s}_{i}},\frac{\p}{\p\tau_j}\right]K=0,\quad
	i=-1, 2,\, j=0, 1,\,\label{w5}
\end{equation}
for $K=f_{\al,p}, \sigma_{1,0}, \sigma_{2,0}, P, Q$.
\end{lem}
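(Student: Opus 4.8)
The plan is to reduce \eqref{w5} to a small set of generators and then verify it by direct computation. The case $K=f_{\al,p}$ is immediate: by \eqref{1pt} one has $\ve\,\p_{\tau_j}f_{\al,p}=\psi_{\al,p}^j$, while the $\tilde s_i$-action on $\psi_{\al,p}^j$ was defined in \eqref{zh-39} precisely as $\p_{\tilde s_i}\psi_{\al,p}^j=\ve\,\p_{\tau_j}\!\left(\p_{\tilde s_i}f_{\al,p}\right)$, so $\left[\p_{\tilde s_i},\p_{\tau_j}\right]f_{\al,p}=0$ holds by construction. Moreover every odd variable $\sigma_{\al,k}$ with $k\neq 0$ is determined from $\sigma_{1,0},\sigma_{2,0},P,Q$ through the recursion relations \eqref{recursions}, \eqref{recursions2}, so it suffices to prove \eqref{w5} for $K=P,Q,\sigma_{1,0},\sigma_{2,0}$ and then propagate along the recursion. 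Below I describe the verification for $j=0$; the case $j=1$ is entirely parallel (and for $K=P,Q$ it also follows from $j=0$ by the shift relation \eqref{w10} together with \eqref{w9}, since $\p/\p\tau_1=D_{I_1}$ is the $T$-shift of $\p/\p\tau_0=D_{I_0}$ from \eqref{zh-24} on the even unknowns).

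For $K=P,Q$ I would insert the explicit evolutions \eqref{first} (for $i=-1$) and \eqref{zh-38} (for $i=2$) together with the odd flow \eqref{odd-flow1} into the commutator $\left[\p_{\tilde s_i},\p_{\tau_0}\right]K$. The part coming from the linear operator $E_i$ acts only on the time variables and on the index of the $\sigma$'s, and it cancels against $\p_{\tau_0}$ precisely because of the extra summand $(p+c_0)\tau_p\,\p/\p\tau_{p-1}$, respectively $(k+c_0)\tau_k\,\p/\p\tau_{k+2}$, in $E_{-1}$, respectively $E_2$; the remaining rational-in-$(P,Q)$ corrections are treated using the already established mutual commutativity of the flows $\p_{\tau_0}$, $\p_x$ and $\p_{t^{\al,p}}$, together with \eqref{recursions}, \eqref{recursions2}. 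For $i=2$ one moreover has to differentiate the quantities $X$ and $Y$ of \eqref{zh-38} along $\p_{\tau_0}$; this is done by rewriting $\p_{\tau_0}X$ and $\p_{\tau_0}Y$ in terms of the $\sigma$'s, the $\Omega$'s and the $h$'s via the tau-cover equations \eqref{1pt} and the tau-symmetry identities for $h_{2,p}$ and $\Omega_{2,p;2,q}$.

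For $K=\sigma_{1,0}$ and $K=Q\sigma_{2,0}$ (it is convenient to work with $Q\sigma_{2,0}$, as in \eqref{first2} and \eqref{last}) the scheme is the same, now using \eqref{first2} for $i=-1$ and \eqref{last} for $i=2$ together with the odd flows \eqref{odd-flow1}--\eqref{odd-flow3}. Here the $c_0$-dependent terms $c_0\sigma_{1,-1},\,c_0Q\sigma_{2,-1}$ (for $i=-1$) and $c_0\sigma_{1,2},\,c_0Q\sigma_{2,2}$ (for $i=2$) must cancel, inside the commutator, against the $c_0$-part of $E_i$ identically in $c_0$, and the quadratic-in-$\sigma$ pieces $A_{1,0},A_{2,0}$ of \eqref{last}—which themselves contain $\p_{\tau_0}Y$ and $\p_{\tau_1}X$—must reorganize, after differentiation by $\p_{\tau_0}$ and repeated use of \eqref{recursions}, \eqref{recursions2}, so that all $\sigma$-bilinear contributions vanish.

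I expect the main obstacle to be exactly this last point: in the case $i=2$ with $K\in\{\sigma_{1,0},Q\sigma_{2,0}\}$ one must organize a fairly long computation involving the nonlocal ingredients hidden in the $\sigma$-recursion and in $A_{1,0},A_{2,0}$ (the operator $(1-\Lambda)^{-1}$), check that every expression which appears is a genuine element of $\mathcal{B}$, and verify the cancellation of the $\sigma$-bilinear and $c_0$-dependent terms. By contrast the even-sector manipulations and the whole $i=-1$ case are routine. I do not expect any conceptual difficulty beyond the bookkeeping, since the nonlinear correction terms in \eqref{first}--\eqref{last} were tailored so that $\p\tau/\p\tilde s_m=\tilde L_m\tau$, which is precisely what forces commutativity with the Hamiltonian odd flows $\p/\p\tau_0=D_{I_0}$ and $\p/\p\tau_1=D_{I_1}$ of \eqref{zh-24}.
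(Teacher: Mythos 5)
Your overall strategy -- reduce to the generators $f_{\al,p},P,Q,\sigma_{1,0},Q\sigma_{2,0}$, dispose of $f_{\al,p}$ by the construction \eqref{zh-39}, treat $i=-1$ by direct computation, and then verify $i=2$ by brute force -- is the same as the paper's, and your identification of the hard point (the cancellation of the $A_{1,0},A_{2,0}$ terms against the $\tau$-derivatives of $X$ and $Y$) is accurate. However, there is a concrete gap at exactly that point. The quantities $X$ and $Y$ contain the one-point functions $f_{2,1}$ and $f_{2,2}$, whose odd-flow derivatives $\ve\,\p_{\tau_n}f_{2,p}=\psi_{2,p}^n$ are, in the super tau-cover, only known to \emph{exist} as elements of $\mathcal{B}$ (they are produced by an abstract argument showing $\p h_{2,p}/\p\tau_n$ is a total $x$-difference). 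The tau-cover equations \eqref{1pt} and the tau-symmetry identities that you propose to use relate $t$-derivatives of the $h$'s and $\Omega$'s to each other; they do not produce explicit formulas for $\psi_{2,1}^0,\psi_{2,1}^1,\psi_{2,2}^0$, and without such formulas the claimed cancellation against $A_{1,0}$ and $A_{2,0}$ cannot be checked.

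The missing ingredient is Proposition \ref{odd-lax}: the odd flows admit the Lax representation $\ve\,\p L/\p\tau_n=[T^nB,L]$ with $B=b_1\Lambda^{-1}+b_2\Lambda^{-2}+\cdots$ given explicitly in terms of the $\sigma$'s. Since $h_{2,p}=\frac{1}{(p+1)!}\Res L^{p+1}$, this yields closed-form expressions such as $2\ve\,\p_{\tau_1} f_{2,1}^{+}=(\Lambda-1)^{-1}\Lambda\,\p_{\tau_1}\Res L^2$ and $6\ve\,\p_{\tau_0} f_{2,2}=(\Lambda-1)^{-1}\p_{\tau_0}\Res L^3$ as differential polynomials in $P,Q,b_i$, from which the paper derives the two key identities
\begin{equation*}
(1-\Lambda^{-1})A_{1,0}=\ve\,\frac{\p}{\p\tau_0}\Bigl(\tfrac{P(1-\Lambda^{-1})(\Lambda-1)X-(\Lambda-1)Y}{P-Q}\Bigr),\qquad
(\Lambda-1)A_{2,0}=\ve\,\frac{\p}{\p\tau_1}\Bigl(\tfrac{Q(1-\Lambda^{-1})(\Lambda-1)X-(\Lambda-1)Y}{P-Q}\Bigr),
\end{equation*}
and it is these identities that make the $i=2$ commutators close. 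So your plan needs to be supplemented by this Lax-theoretic computation of the odd derivatives of the residues of $L^2$ and $L^3$; as written, the step ``rewrite $\p_{\tau_0}X$ and $\p_{\tau_0}Y$ via the tau-cover and tau-symmetry'' would stall. A secondary, more minor caveat: your reduction of $j=1$ to $j=0$ via the shift $T$ is plausible for the even unknowns, but the operator $L_m^{\mathrm{odd}}=\sum_p(c_0+p)\tau_p\,\p/\p\tau_{p+m}$ does not commute with $T$ on the odd sector, so the $j=1$ case for $\sigma_{1,0}$ and $Q\sigma_{2,0}$ still requires its own verification.
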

\begin{proof}
The validity of 
\eqref{w5} for $i=-1$ can be checked via a straightforward calculation, so we only need give the prove of 
\eqref{w5} for $i=2$.

By using the result given in Proposition \ref{odd-lax}, we can obtain the following identities:
\begin{align}
2\ve\frac{\partial f_{2,1}^+}{\partial\tau_1}=&\left(\Lambda-1\right)^{-1}\Lambda\frac{\partial \Res L^2}{\partial \tau_1}\notag\\
=&-\left(Tb_2^++Tb_2^{++}+b_1^+\left(Q^+-P^++Q-P\right)\right),\label{4.2.3}\\
6\ve\frac{\partial f_{2,2}}{\partial\tau_0}=&(\Lambda-1)^{-1}\frac{\partial \Res L^3}{\partial \tau_0}\notag\\
=&-\left(b_3+b_3^++b_3^{++}+b_2\ell_2^{--}+b_2^+\ell_2^-+b_1\ell_1^-\right),\\
2\ve\frac{\partial f_{2,1}^+}{\partial t^{2,0}}=&Q^+(Q-P)^2+Q^+\left(Q^+-P^+\right)(Q-P)\notag\\
&+Q^{+}Q\left(Q^--P^-\right)+Q^{++}Q^{+}(Q-P),\label{4.2.4}
\end{align}
where $\ell_i$ are the coefficients of $\Lambda^i$ in the expansion of $L^3$, i.e., 
\[L^3=\Lambda^3+\ell_2\Lambda^2+\ell_1\Lambda+\ell_0+\dots,\]  
the operators $T$ and $B=b_1\Lambda^{-1}+b_2\Lambda^{-2}+\dots$ are defined in Proposition \ref{odd-lax}. From \eqref{4.2.3}--\eqref{4.2.4} we also have the identities
\begin{align}
(1-\Lambda^{-1})A_{1,0}&=\ve\frac{\partial}{\partial \tau_{0}}\left(\frac{P(1-\Lambda^{-1})(\Lambda-1)X-(\Lambda-1)Y}{P-Q}\right),\\
(\Lambda-1)A_{2,0}&=\ve\frac{\partial}{\partial \tau_{1}}\left(\frac{Q(1-\Lambda^{-1})(\Lambda-1)X-(\Lambda-1)Y}{P-Q}\right).
\end{align}
Then by using these identities and \eqref{zh-38}--\eqref{last} we obtain the commutation relations 
\begin{equation}
\frac{\partial^2\log P}{\p \tilde{s}_2\partial \tau_n}=\frac{\partial^2\log P}{\partial \tau_n\p \tilde{s}_2},\quad \frac{\partial^2\log Q}{\p \tilde{s}_2\partial \tau_n}=\frac{\partial^2\log Q}{\partial \tau_n\p \tilde{s}_2},\end{equation}
for the unknown functions $P, Q$, 
and the following ones for the odd unknown functions: 
\begin{align}\label{spcom}
\frac{\partial^2\sigma_{1,0}}{\partial \tau_0\p \tilde{s}_2}&
=\frac{\partial^2\sigma_{1,0}}{\p \tilde{s}_2\partial \tau_0},\quad \frac{\partial^2(Q\sigma_{1,0}+Q\sigma_{2,0})}{\partial\tau_0\p \tilde{s}_2}
=\frac{\partial^2(Q\sigma_{1,0}+Q\sigma_{2,0})}{\p \tilde{s}_2\partial \tau_0},
\\
\frac{\partial^2 (Q\sigma_{2,0})}{\partial \tau_1\p \tilde{s}_2}&=
\frac{\partial^2(Q\sigma_{2,0})}{\p \tilde{s}_2\partial\tau_1}
,\quad\frac{\partial^2( P\sigma_{1,0}+Q\sigma_{2,0})}{\partial \tau_1\p \tilde{s}_2}=
\frac{\partial^2(P\sigma_{1,0}+Q\sigma_{2,0})}{\p \tilde{s}_2\partial\tau_1}.
\end{align}
From \eqref{zh-39} we also have 
\begin{equation}
\frac{\p}{\p \tilde{s}_j}\left(\frac{\p f_{\al,p}}{\p \tau_i}\right)=\frac{\p\psi_{\al,p}^i}{\p \tilde{s}_j}=\frac{\p}{\p \tau_i}\left(\frac{\p f_{\al,p}}{\p \tilde{s}_j}\right).
\end{equation}
The Lemma is proved.
\end{proof}

For the convenience of our proof of the validity of the locality condition, we use the coordinates $(w^1,w^2,\tilde{\sigma}_{1,n},\tilde{\sigma}_{2,n})$ instead of $(P,Q,\sigma_{1,n},\sigma_{2,n})$, here $w^1, w^2$ are given by \eqref{zh-37} and
	\begin{align*}
	\tilde{\sigma}_{1,n}&=\frac{\p P}{\p w^1}\sigma_{1,n}+\frac{\p Q}{\p w^1}\sigma_{2,n}=-\sigma_{1,n},\\
	\tilde{\sigma}_{2,n}
	&=\frac{\p P}{\p w^2}\sigma_{1,n}+\frac{\p Q}{\p w^2}\sigma_{2,n}=Q(\sigma_{1,n}+\sigma_{2,n}).
	\end{align*}
Then by using \eqref{4.1}, we have 
	\begin{align}
		\frac{\p w^{1}}{\p \tilde{s}_i}&=(\Lambda-1)\left(\frac{\p f_{2,0}}{\p \tilde{s}_i}\right),\quad \frac{\p w^2}{\p \tilde{s}_i}=(1-\Lambda^{-1})\tilde{B}_+\left(\frac{\p f_{1,0}}{\p \tilde{s}_m}\right),\label{4.1.2}\\
		\frac{\p\tilde{\sigma}_{2,n}}{\p \tilde{s}_i}&=\frac{\p \psi_{2,0}^n}{\p\tilde{s}_i},\quad \frac{\p\tilde{\sigma}_{1,n}}{\p \tilde{s}_i}=\tilde{B}_+\left(\frac{\p\psi_{1,0}^n}{\p \tilde{s}_i}\right),\label{4.1.3}
	\\	\frac{\p\tilde{\sigma}_{2,n}}{\p t^{\al,p}}&=\frac{\p}{\p \tau_n}\left(\frac{\delta H_{\al,p}}{\delta w^2}\right),\quad \frac{\p\tilde{\sigma}_{1,n}}{\p t^{\al,p}}=\frac{\p}{\p \tau_n}\left(\frac{\delta H_{\al,p}}{\delta w^1}\right),\label{4.1.4}
	\end{align}
	where \[\tilde{B}_+=\frac{\tilde{\Lambda}-1}{\ve\p_{t^{1,0}}}=1+\frac{\ve}{2}\frac{\p}{\p t^{1,0}}+\dots,\quad \tilde{\Lambda}=e^{\ve\frac{\p}{\p t^{1,0}}}.\]
	In the new coordinates the Hamiltonian operators are given by \eqref{bho} and their dispersionless limits coincide with the Hamiltonian operators of the Principal Hierarchy of $M_\text{AL}$. The recursion relation \eqref{super-rr} of the odd variables is invariant under the above Miura type transformation, and Lemma \ref{5.1} still holds in the new coordinates.  Next let us  check the locality condition. 
	\begin{lem}
	 The locality condition
			\begin{equation}\label{lc}
		\left[\frac{\p}{\p \tilde{s}_{i}},\frac{\p}{\p t^{\al,p}}\right]\tilde{K}\in\hat{\mathcal{A}},\quad i=-1,\,2 ,\quad(\al,p)\in\mathcal{I}
		\end{equation}
holds true, here $\tilde{K}=\tilde{\sigma}_{1,0},\tilde{\sigma}_{2,0}, w^1, w^2$.
	\end{lem}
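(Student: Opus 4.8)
The statement to be proved concerns only the two values $i=-1$ and $i=2$, so I would treat them separately. The case $i=-1$ is essentially immediate: the formulae \eqref{first}, \eqref{first2} (and, in the $(w^1,w^2,\tilde\sigma_{1,0},\tilde\sigma_{2,0})$ coordinates, the corresponding consequences of \eqref{4.1.2}, \eqref{4.1.3}) express $\frac{\p \tilde K}{\p \tilde s_{-1}}$ by honest differential polynomials in $\mathcal B$ — no one-point functions $f_{\al,p}$ or two-point functions $\Omega_{\al,p;\beta,q}$ enter. Hence $\frac{\p \tilde K}{\p \tilde s_{-1}}\in\hat{\mathcal A}$, and since the hierarchy flow $\frac{\p}{\p t^{\al,p}}$ preserves differential polynomials, $\frac{\p}{\p t^{\al,p}}\frac{\p\tilde K}{\p\tilde s_{-1}}$ is local; likewise $\frac{\p\tilde K}{\p t^{\al,p}}$ is a differential polynomial and applying $\frac{\p}{\p\tilde s_{-1}}$ to it by the chain rule, using that every $\frac{\p}{\p\tilde s_{-1}}$-evolution of a basic variable is again a differential polynomial, stays in $\hat{\mathcal A}$. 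The commutator is then a difference of local expressions, so \eqref{lc} holds for $i=-1$.

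The substantive case is $i=2$. Here $\frac{\p\tilde K}{\p\tilde s_2}$ genuinely contains the nonlocal functions $f_{2,0},f_{2,1},f_{2,2}$, which enter through $X$ and $Y$ in \eqref{zh-38}--\eqref{last} (for the even variables) and through the functions $\psi^n_{\al,p}$, $\Delta^{n,m}_{\al,p}$ for the odd ones. The plan is to exploit that each of $w^1,w^2,\tilde\sigma_{1,0},\tilde\sigma_{2,0}$ is the image of a one-point function under a fixed operator built from $\Lambda$ and $\tilde\Lambda$: from \eqref{4.1.2}--\eqref{4.1.3} one has $\frac{\p w^1}{\p\tilde s_2}=(\Lambda-1)\frac{\p f_{2,0}}{\p\tilde s_2}$, $\frac{\p w^2}{\p\tilde s_2}=(1-\Lambda^{-1})\tilde B_+\frac{\p f_{1,0}}{\p\tilde s_2}$, $\frac{\p\tilde\sigma_{2,0}}{\p\tilde s_2}=\frac{\p\psi^0_{2,0}}{\p\tilde s_2}$, $\frac{\p\tilde\sigma_{1,0}}{\p\tilde s_2}=\tilde B_+\frac{\p\psi^0_{1,0}}{\p\tilde s_2}$, while by the tau-cover relations \eqref{1pt}, \eqref{4.1.4} the $t^{\al,p}$-derivatives of these same variables are the same operators applied to $\ve^{-1}\Omega_{2,0;\al,p}$, $\ve^{-1}\Omega_{1,0;\al,p}$, $\frac{\p\Omega_{2,0;\al,p}}{\p\tau_0}$, $\frac{\p\Omega_{1,0;\al,p}}{\p\tau_0}$ respectively. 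Since these operators commute with $\frac{\p}{\p t^{\al,p}}$, the commutator $[\frac{\p}{\p\tilde s_2},\frac{\p}{\p t^{\al,p}}]\tilde K$ equals the operator applied to $[\frac{\p}{\p\tilde s_2},\frac{\p}{\p t^{\al,p}}]$ of the corresponding one-point function, and it is enough to control the nonlocal part of $[\frac{\p}{\p\tilde s_2},\frac{\p}{\p t^{\al,p}}]f_{\beta,q}$ and of $[\frac{\p}{\p\tilde s_2},\frac{\p}{\p t^{\al,p}}]\psi^0_{\beta,q}$.

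For this I would compute both sides of the commutator. Using the explicit form of the modified Virasoro action on $f_{\gamma,k}$ (the $\frac{\p}{\p s_m}$ part as in \eqref{zh-40}, plus the local odd correction $\sum_p(c_0+p)\tau_p\,\ve^{-1}\psi^{p+m}_{\gamma,k}$), one sees that applying $\frac{\p}{\p t^{\al,p}}$ converts every bare factor $f_{\gamma,r}$ into the differential polynomial $\ve^{-1}\Omega_{\gamma,r;\al,p}$ by \eqref{1pt}, so that the only surviving nonlocal contributions come from (i) $\frac{\p}{\p t^{\al,p}}$ hitting the explicit coefficient $t^{0,0}+2t^{1,0}$ inside $X$ — i.e. for $(\al,p)=(0,0)$ or $(1,0)$ — which leaves a bare $f_{2,1}$, and (ii) $\frac{\p}{\p t^{\al,p}}$ acting on the local differential-polynomial coefficients of the terms $f_{2,j}\cdot(\text{local})$ in $\frac{\p f_{2,0}}{\p\tilde s_2}$. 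On the other side, $\frac{\p}{\p\tilde s_2}$ applied through the chain rule to the local expression $\ve^{-1}\Omega_{2,0;\al,p}$ produces, via $\frac{\p w^b}{\p\tilde s_2}$, precisely nonlocal pieces of the form $f_{2,j}\cdot(\text{local})$; the key point is that the identities \eqref{4.2.3}--\eqref{4.2.4}, which rewrite $\frac{\p f_{2,j}^+}{\p\tau_n}$ and $\frac{\p f_{2,1}^+}{\p t^{2,0}}$ in terms of the Lax data $B$, $T$ of Proposition \ref{odd-lax} and the operator relation \eqref{zh-32}, together with the commutativity of the hierarchy flows, force these two nonlocal contributions to cancel, so that $[\frac{\p}{\p\tilde s_2},\frac{\p}{\p t^{\al,p}}]f_{2,0}$ — and hence $[\frac{\p}{\p\tilde s_2},\frac{\p}{\p t^{\al,p}}]w^1$ — is local, and similarly for $w^2$. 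For the odd variables $\tilde\sigma_{1,0},\tilde\sigma_{2,0}$ one runs the same argument using \eqref{odd-flow4}, the super tau-cover flows of Proposition \ref{sp-extension} and \eqref{zh-32}, and re-expresses the shifted odd variables through the recursion relations \eqref{recursions}, \eqref{recursions2}; Lemma \ref{5.1}, which gives $[\frac{\p}{\p\tilde s_2},\frac{\p}{\p\tau_j}]=0$ on the basic variables for $j=0,1$, is what guarantees that these reorderings of odd derivatives are consistent.

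I expect the main obstacle to lie in the odd sector: keeping track of how the nonlocal functions $\psi^n_{\al,p}$, $\Delta^{n,m}_{\al,p}$ and the shift operator $T$ interact both with the explicit $t$- and $\tau$-coefficients of the operator $E_2$ and with the recursion relations for $\sigma_{\al,k}$, and then verifying that every residual nonlocal contribution cancels. In the even sector the cancellation is a finite if lengthy computation resting squarely on \eqref{4.2.3}--\eqref{4.2.4}; in the odd sector one must additionally organize all $\tau_n$-derivatives through the Lax equation $\ve\frac{\p L}{\p\tau_n}=[T^nB,L]$ of Proposition \ref{odd-lax} and the explicit choices of $A_{1,0}$, $A_{2,0}$ displayed in \eqref{last}, which is where the bookkeeping is heaviest.
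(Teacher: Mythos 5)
Your proposal follows essentially the same route as the paper: dispose of $i=-1$ and of the even variables $w^1,w^2$ as immediate, then for $i=2$ isolate the nonlocal (one-point-function) contributions on the two sides of the commutator using the tau-cover relations \eqref{4.1.2}--\eqref{4.1.4}, the identities \eqref{4.2.3}--\eqref{4.2.4} and the commutation with the odd flows from Lemma \ref{5.1}, and check that they coincide so that only a differential polynomial survives. The paper merely organizes this by computing directly on $\tilde{\sigma}_{2,0}$ (exhibiting the common nonlocal part $c_0\,\p_{t^{\al,p}}\tilde{\sigma}_{2,2}+2f_{2,0}\,\p^2_{t^{\al,p}t^{2,0}}\tilde{\sigma}_{2,0}+\tilde{E}_2(\p_{t^{\al,p}}\tilde{\sigma}_{2,0})$ on both sides) rather than first reducing to the one-point functions, which is a cosmetic difference.
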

\begin{proof}
		By using the definition of the flows $\frac{\p}{\p\tilde{s}_i}$ it's easy to verify that
		\begin{equation}
	\left[\frac{\p}{\p \tilde{s}_{-1}},\frac{\p}{\p t^{\al,p}}\right]\tilde{K}\in\hat{\mathcal{A}},\quad	\left[\frac{\p}{\p \tilde{s}_{2}},\frac{\p}{\p t^{\al,p}}\right]X\in\hat{\mathcal{A}},\quad (\al,p)\in\mathcal{I}
	\end{equation}
	for $\tilde{K}=\tilde{\sigma}_{1,0},\tilde{\sigma}_{2,0}, w^1, w^2$ and $X=w_1,\,w_2$. To prove the lemma, we only need to check that
	\begin{equation}
	\left[\frac{\p}{\p \tilde{s}_{2}},\frac{\p}{\p t^{\al,p}}\right]
	\tilde{\sigma}_{i,0}\in\hat{\mathcal{A}},\quad i=1,2.	
	\end{equation}
	Let us focus on the non-local part of the flows. From \eqref{4.1.2}--\eqref{4.1.4} it follows that
	\begin{align}
\frac{\p}{\p t^{\al,p}}	\left(\frac{\p\tilde{\sigma}_{2,0}}{\p\tilde{s}_2}\right)&=c_0\frac{\p \tilde{\sigma}_{2,2}}{\p t^{\al,p}}+2f_{2,0}\frac{\p^2\tilde{\sigma}_{2,0}}{\p t^{\al,p}\p t^{2,0}}+\tilde{E}_2\left(\frac{\p \tilde{\sigma}_{2,0}}{\p t^{\al,p}}\right)+loc,\\
	\frac{\p}{\p \tilde{s}_2 }\left(\frac{\p \tilde{\sigma}_{2,0}}{\p t^{\al,p}}\right)&=\frac{\p^2}{\p \tau_0\p\tilde{s}_2}\left(\frac{\delta H_{\al,p}}{\delta w^2}\right)
	=c_0\frac{\p\tilde{\sigma}_{2,2}}{\p t^{\al,p}}+2f_{2,0}\frac{\p^2\tilde{\sigma}_{2,0}}{\p ^{\al,p}\p t^{2,0}}+\tilde{E}_2\left(\frac{\p \tilde{\sigma}_{2,0}}{\p t^{\al,p}}\right)+loc,
	\end{align}
	where $loc\in\hat{\mathcal{A}}$ and 
	\begin{align*}
			\tilde{E}_2=&\sum_{k\ge 1}\frac{(2+k)!}{(k-1)!}\left(t^{0,k}\frac{\partial}{\partial t^{0,k+2}}-t^{0,-k-2}\frac{\partial}{\partial t^{0,-k}}+t^{1,k}\frac{\partial}{\partial t^{1,k+2}}+t^{2,k-1}\frac{\partial}{\partial t^{2,k+1}}\right)\notag\\
		&+\sum_{k\ge 0}\left(3k^2+6k+2\right)\left(2t^{1,k}+t^{0,k}\right)\frac{\partial}{\partial t^{2,k+1}}-t^{0,-1}\frac{\partial}{\partial t^{2,0}}+\sum_{k\ge 0}(k+c_0)\tau_k\frac{\partial}{\partial \tau_{k+2}}.
		\end{align*}
The analogue of the above relation for $\tilde{\sigma}_{1,0}$ can be proved similarly, therefore we arrive at
	\begin{equation}
	\left[\frac{\p}{\p \tilde{s}_{2}},\frac{\p}{\p t^{\al,k}}\right]
	\tilde{\sigma}_{i,0}\in\hat{\mathcal{A}},\quad i=1,2.
	\end{equation}
	The Lemma is proved.
\end{proof}

Now let us proceed to prove the validity of \eqref{31-w1}.  
\begin{lem}
The flows \eqref{zh-36} yield an infinite number of Virasoro symmetries for the tau cover \eqref{zh-33} of the extended Ablowitz-Ladik hierarchy, i.e., the relations \eqref{31-w1} hold true.
\end{lem}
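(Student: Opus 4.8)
The plan is to reduce \eqref{31-w1} to the two cases $m=-1$ and $m=2$, and then to obtain those two cases from the two preceding lemmas together with the main result of \cite{2109.01845}. First I would observe that, since each Virasoro operator $L_n$ has coefficients that do not depend on the variables $s_m$, the relations $[L_m,L_n]=(m-n)L_{m+n}$ translate into $\bigl[\frac{\p}{\p s_m},\frac{\p}{\p s_n}\bigr]=(n-m)\frac{\p}{\p s_{m+n}}$ acting on any tau function, hence on $f_{\al,p}$, $P$ and $Q$. In particular the Lie algebra generated by $\frac{\p}{\p s_{-1}}$ and $\frac{\p}{\p s_2}$ contains all $\frac{\p}{\p s_m}$ with $m\ge-1$: one has $\frac{\p}{\p s_1}=\frac13\bigl[\frac{\p}{\p s_{-1}},\frac{\p}{\p s_2}\bigr]$, $\frac{\p}{\p s_0}=\frac12\bigl[\frac{\p}{\p s_{-1}},\frac{\p}{\p s_1}\bigr]$, and $\frac{\p}{\p s_{m+1}}=\frac1{m-1}\bigl[\frac{\p}{\p s_1},\frac{\p}{\p s_m}\bigr]$ for $m\ge2$. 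Therefore, granting $\bigl[\frac{\p}{\p s_{-1}},\frac{\p}{\p t^{\al,p}}\bigr]X=\bigl[\frac{\p}{\p s_2},\frac{\p}{\p t^{\al,p}}\bigr]X=0$ for $X=f_{\beta,q},P,Q$, a short induction on $m$ based on the Jacobi identity yields \eqref{31-w1} for every $m\ge-1$; this also shows that the $\frac{\p}{\p s_m}$ form an infinite-dimensional Virasoro algebra of symmetries, as claimed.

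For the two cases $m=-1$ and $m=2$ I would pass to the super tau-cover constructed in the previous section and to the modified flows $\frac{\p}{\p\tilde s_{-1}}$, $\frac{\p}{\p\tilde s_2}$. Lemma \ref{5.1} already establishes that these commute with the odd flows $\frac{\p}{\p\tau_0}$ and $\frac{\p}{\p\tau_1}$; using the shift-operator identities \eqref{w10} and \eqref{w9} this extends to commutativity with $\frac{\p}{\p\tau_n}$ for all $n\in\mathbb{Z}$. The preceding locality lemma provides $\bigl[\frac{\p}{\p\tilde s_i},\frac{\p}{\p t^{\al,p}}\bigr]\tilde K\in\hat{\mathcal A}$ for $i=-1,2$ and $\tilde K=\tilde\sigma_{1,0},\tilde\sigma_{2,0},w^1,w^2$. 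These are exactly the conditions required by the main theorem of \cite{2109.01845}, which then forces $\frac{\p}{\p\tilde s_i}$ to be a symmetry of the whole super tau-cover, i.e. $\bigl[\frac{\p}{\p\tilde s_i},\frac{\p}{\p t^{\al,p}}\bigr]=0$ on it for $i=-1,2$ and all $(\al,p)\in\mathcal I$.

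It remains to descend this to the tau-cover \eqref{zh-33}. The latter is obtained from the super tau-cover by discarding the odd variables $\sigma_{\al,k}^s$, $\psi_{\al,p}^n$ and the odd times $\tau_n$, and on this subsystem $\frac{\p}{\p\tilde s_i}$ coincides with $\frac{\p}{\p s_i}$ because the extra term $L_i^{\text{odd}}$ only differentiates the $\tau_n$. Hence $\bigl[\frac{\p}{\p s_i},\frac{\p}{\p t^{\al,p}}\bigr]X=0$ for $X=f_{\beta,q},P,Q$, $i=-1,2$, and combined with the reduction of the first step this proves the lemma. I expect the genuinely hard part to be the one already taken care of in Lemma \ref{5.1}, namely the commutativity of $\frac{\p}{\p\tilde s_2}$ with $\frac{\p}{\p\tau_0}$ and $\frac{\p}{\p\tau_1}$, which depends on the Lax representation of the odd flows furnished by Proposition \ref{odd-lax} and on the identities \eqref{4.2.3}--\eqref{4.2.4}; once those ingredients are available, the present lemma reduces to an application of the Jacobi identity and of the result of \cite{2109.01845}.
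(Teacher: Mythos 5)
Your proposal follows essentially the same route as the paper: establish the cases $m=-1,2$ on the super tau-cover using Lemma \ref{5.1}, the locality lemma and the Jacobi identity, invoke the variational-bihamiltonian-cohomology machinery of \cite{2109.01845} to conclude that the commutators $\left[\frac{\p}{\p \tilde s_i},\frac{\p}{\p t^{\al,p}}\right]$ vanish, and then generate all $m\ge -1$ from the Virasoro commutation relations. The only ingredient you leave implicit is that these commutators, besides being local and commuting with the odd flows, have differential degree greater than two because their dispersionless parts vanish by the known Virasoro symmetries of the Principal Hierarchy of $M_{\text{AL}}$ --- this is the hypothesis that makes the cohomological vanishing applicable, and the paper states it explicitly.
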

\begin{proof}
From the definition of the odd flows given in Proposition \ref{sp-extension}of the super tau cover of the extended Ablowitz-Ladik hierarchy we know that
	\begin{equation}
	\left[\frac{\partial }{\partial \tau_{0}},\frac{\partial}{\partial t^{\alpha,p}}\right]K=\left[\frac{\partial }{\partial \tau_{1}},\frac{\partial}{\partial t^{\alpha,p}}\right]K=0,\quad(\al,p)\in\mathcal{I}
	\end{equation}
for $K=f_{\beta,q}, P, Q,\sigma_{1,n},\sigma_{2,n}$,	and from Lemma \ref{5.1} we have
	\begin{equation}
	\left[\frac{\p}{\p \tilde{s}_{i}},\frac{\p}{\p\tau_j}\right]K=0,\quad i=-1, 2,\, j=0, 1.\label{w5}
	\end{equation}
Thus by using the Jacobi identity we can obtain
	\begin{equation}\label{zh-42}
	 \left [\frac{\partial }{\partial \tau_{j}},\left[\frac{\partial}{\partial \tilde{s}_{i}},\frac{\partial}{\partial t^{\alpha,p}}\right]\right]K=-\left(\left [\frac{\partial }{\partial \tilde{s}_i},\left[\frac{\partial}{\partial t^{\al,p}},\frac{\partial}{\partial \tau_{j}}\right]\right]+\left [\frac{\partial }{\partial t^{\al,p}},\left[\frac{\partial}{\partial \tau_j},\frac{\partial}{\partial \tilde{s}_{i}}\right]\right]\right)K=0.
	\end{equation}
By uisng the locality condition \eqref{lc} and the Virasoro symmetries of the Principal Hierarchy of $M_{\text{AL}}$ we know that $\left[\frac{\p}{\p s_{-1}},\frac{\p}{\p t^{\al,p}}\right], \left[\frac{\p}{\p s_{2}},\frac{\p}{\p t^{\al,p}}\right]$ are derivations on $\hat{\mathcal{A}}$ whose differential degrees are greater than two, thus from \eqref{zh-42} and the triviality of certain variational bihamiltonian cohomologies \cite{liu2023variational, 2109.01845} associated with the bihamiltonian structure $(\mathcal{P}_0, \mathcal{P}_1)$ we arrive at, following the argument given in the proof of existence and uniqueness of deformations of Virasoro symmetries of the Principal Hierarchy of a semisimple Frobenius manifold in \cite{2109.01845}, a proof of the lemma.
\end{proof}

So far we have proved that the extended Ablowitz-Ladik hierarchy possesses Virasoro symmetries which are represented by actions of the Virasoro operators $L_m\, (m\ge -1)$ on the tau function, as it is shown by \eqref{zh-36}. On the other hand, the topological deformation of the Principal Hierarchy of the generalized Frobenius manifold $M_\textrm{AL}$ is uniquely determined by the requirement that the Virasoro symmetries of the Principal Hierarchy act linearly on the tau function of the deformed integrable hierarchy which has the genus expansion form given by \eqref{zh-2}.
Since the dispersionless limit of the extended Ablowitz-Ladik hierarchy coincides with the Principal Hierarchy of the generalized Frobenius manifold $M_\textrm{AL}$, in order to prove the coincidence of these two integrable hierarchies, we only need to show that the tau function of the extended Ablowitz-Ladik hierarchy also has the genus expansion of the form given in \eqref{zh-2}. 

Let us denote by $\tilde{w}^1, \tilde{w}^2$ the normal coordinates of the extended Ablowitz-Ladik hierarchy \cite{normal}, they are related with the tau function of the integrable hierarchy by
\[\tilde{w}^{\al}=\ve^2\eta^{\al\beta}\frac{\p^2\log\tau}{\p x\p t^{\beta,0}},\quad \al=1,2.\]
From \cite{dubrovin2006hamiltonian} we know that any semisimple bihamiltonian integrable hierarchy is quasi-trivial, so there exists a quasi-Miura transformation 
\begin{equation}
	\tilde{w}^{\al}=v^{\al}+g^{\al}\quad \al=1,2
	\end{equation}
which reduces the extended Ablowitz-Ladik hierarchy to its dispersionless limit, here the functions $g_{\al}:=\eta_{\al\beta}g^{\beta}$ can be represented in the form
	\begin{equation*}
	g_{\al}=\sum\limits_{k\ge1}\ve^k \Phi_{\al}^k(v;v_x,\dots,v^{(n_k)})
	\end{equation*}
with $\Phi_\al^k$ depending rationally on $\p_x v^\al,\dots, \p_x^{n_k} v^\al$ and having differential degree $k$. It can be checked that in the normal coordinates the bihamiltonian structure $(\mathcal{P}_0, \mathcal{P}_1)$ does not contain linear in $\epsilon$ terms, so $\Phi_{\al}^1=0$.	 Since both $\tilde{w}^\al$ and $v^\al$ are Casimirs of the first Hamiltonian structure of the dispersionless extended Ablowitz-Ladik hierarchy, from Lemma 3.7.21 of \cite{normal} it follows the existence of functions $G_{\al}(v,v_x,\dots)$ with differential degree greater than one  such that 
	\begin{equation*}
	g_{\al}=\p_x G_{\al}, \quad\al=1,2.
	\end{equation*}
On the other hand, from the definition of the normal coordinates we know that
	\begin{equation*}
	\frac{\p g_{\al}}{\p t^{\beta,0}}=\frac{\p g_{\beta}}{\p t^{\al,0}},\quad \al,\beta=1,2.
	\end{equation*}
These relations can be rewritten as 
	\begin{equation*}
	\frac{\p}{\p x}\left(\frac{\p G_{\al}}{\p t^{\beta,0}}-\frac{\p G_{\beta}}{\p t^{\al,0}}\right)=0,\quad \al,\beta=1,2.
	\end{equation*}
	Since $\deg_x\left(\frac{\p G_{\al}}{\p t^{\beta,0}}-\frac{\p G_{\beta}}{\p t^{\al,0}}\right)\ge2$, we obtain from the above relations that 
	\begin{equation}\label{zh-43}
	\frac{\p G_{\al}}{\p t^{\beta,0}}=\frac{\p G_{\beta}}{\p t^{\al,0}},\quad \al,\beta=1,2.
	\end{equation}
Now let us perform a linear reciprocal transformation on the dispersionless limit of the Ablowitz-Ladik hierarchy by exchanging the spatial variable $x$ and the time variable $t^{1,0}$, and denote $\tilde{x}=t^{1,0}$. Then we have
	\begin{equation}\label{eq1}
	\frac{\p G_1}{\p t^{2,0}}=\frac{\p G_{\beta}}{\p \tilde{x}},
	\end{equation}
	so $G_1$ is a conserved quantity of the flows $\frac{\p}{\p t^{\beta,0}}$ with differential degree greater than one. Thus it follows from Theorem A.3 of \cite{dubrovin2018bihamiltonian} that there exists a function $G(v, v_x,\dots)$  such that 
	\begin{equation*}
	G_1=\frac{\p G}{\p \tilde{x}}, 
	\end{equation*}
	so the equation \eqref{eq1} can be rewritten as 
	\begin{equation*}
	\frac{\p}{\p \tilde{x}}\left(\frac{\p G}{\p t^{2,0}}-G_2\right)=0.
	\end{equation*}
	Since the differential degree of $\left(\frac{\p G}{\p t^{2,0}}-G_{2}\right)$ is greater than one, we arrive at
	\begin{equation*}
	G_2=\frac{\p G}{\p t^{2,0}}.
	\end{equation*}
	Therefore we have 
	\begin{equation}
	\tilde{w}^{\al}=v^{\al}+\eta^{\al\beta}\frac{\p^2 G}{\p x\p t^{\beta,0}}.
	\end{equation}
	where the function $G$ can be represented in the form 
	\[G=\sum\limits_{k\ge 2}\ve^{k}\mathcal{F}^{[k]}(v;v_x,\dots,v^{(m_k)}).\] 
	Now by using \eqref{tc} and \eqref{tauf} we obtain
	\begin{align*}
	&\eta^{\lambda\gamma}\frac{\p^2}{\p x\p t^{\gamma,0}}(	\frac{\p^2 G}{\p t^{\al,p}\p t^{\beta,q}})=\frac{\p^2}{\p t^{\al,p}\p t^{\beta,q}}(\tilde{w}^{\lambda}-v^{\lambda})\\
	=&\eta^{\lambda\gamma}\frac{\p^2}{\p x\p t^{\gamma,0}}\left(\Omega_{\al,p;\beta,q}(\tilde{w})-\Omega^{[0]}_{\al,p;\beta,q}(v)\right).
	\end{align*}
	Since $\deg_x\left(\frac{\p^2 G}{\p t^{\al,p}\p t^{\beta,q}}-\Omega_{\al,p;\beta,q}(\tilde{w})+\Omega^{[0]}_{\al,p;\beta,q}(v)\right)\ge1$, it follows from th eabove relation that 
	\[\frac{\p^2 G}{\p t^{\al,p}\p t^{\beta,q}}=\Omega_{\al,p;\beta,q}(\tilde{w})-\Omega^{[0]}_{\al,p;\beta,q}(v),\quad \forall \,(\al,p),\,(\beta,q)\in\mathcal{I},\]
	Thus we have
		\begin{equation}
	\ve^2\frac{\p^2( \ve^{-2}\log\tau^{[0]}+\ve^{-2}G)}{\p t^{\al,p}\p t^{\beta,q}}=\Omega_{\al,p;\beta,q}(\tilde{w})=\ve^2\frac{\p^2\log\tau}{\p t^{\al,p}\p t^{\beta,q}},
	\end{equation}
and the tau function of the extended Ablowitz-Ladik hierarchy has the genus expansion of the form \eqref{zh-2}. Thus we have the following theorem, and from which the Main Theorem given in Sec.\,1 also follows.
\begin{theorem}
The topological deformation of the Principal Hierarchy of the generalized Frobenius manifold $M_\textrm{AL}$ coincides with the extended Ablowitz-Ladik hierarchy \eqref{w-1}.
\end{theorem}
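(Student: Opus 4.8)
The plan is to reduce the statement to a uniqueness argument. By the construction of \cite{liu2022generalized}, the topological deformation of the Principal Hierarchy of $M_\textrm{AL}$ is the \emph{unique} deformation of \eqref{normal}, \eqref{0flow} whose tau function has a genus expansion of the form \eqref{zh-2} and on which the Virasoro symmetries with operators \eqref{vir-1}--\eqref{vir-m} act linearly as in \eqref{zh-36}. The preceding lemmas already show that the extended Ablowitz--Ladik hierarchy \eqref{w-1} possesses precisely such Virasoro symmetries, and its dispersionless limit is by construction the Principal Hierarchy of $M_\textrm{AL}$. Hence it suffices to prove the single remaining fact: that the tau function $\tau$ of \eqref{w-1} admits a genus expansion of the form \eqref{zh-2} (equivalently, that the quasi-Miura transformation reducing \eqref{w-1} to its dispersionless limit is generated by a single potential). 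Once this is established, uniqueness forces the two hierarchies to coincide.

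To carry this out, I would first pass to the normal coordinates $\tilde w^1,\tilde w^2$ of \eqref{w-1}, defined by $\tilde w^\al=\ve^2\eta^{\al\beta}\frac{\p^2\log\tau}{\p x\,\p t^{\beta,0}}$. By quasi-triviality of semisimple bihamiltonian hierarchies \cite{dubrovin2006hamiltonian} there is a quasi-Miura transformation $\tilde w^\al=v^\al+g^\al$, with $g_\al=\sum_{k\ge1}\ve^k\Phi^k_\al$, reducing \eqref{w-1} to its dispersionless limit; checking that $(\mathcal P_0,\mathcal P_1)$ in the normal coordinates carries no term linear in $\ve$ gives $\Phi^1_\al=0$. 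Since both $\tilde w^\al$ and $v^\al$ are Casimirs of the hydrodynamic first Hamiltonian structure, Lemma 3.7.21 of \cite{normal} yields functions $G_\al$ with $g_\al=\p_x G_\al$ of differential degree $\ge 2$. The tau-symmetry relation $\frac{\p g_\al}{\p t^{\beta,0}}=\frac{\p g_\beta}{\p t^{\al,0}}$ then says $\p_x(\frac{\p G_\al}{\p t^{\beta,0}}-\frac{\p G_\beta}{\p t^{\al,0}})=0$, and a degree count forces $\frac{\p G_\al}{\p t^{\beta,0}}=\frac{\p G_\beta}{\p t^{\al,0}}$. Performing the reciprocal transformation $\tilde x=t^{1,0}$ makes $G_1$ a conserved density of the $\frac{\p}{\p t^{\beta,0}}$-flows of positive differential degree, so Theorem A.3 of \cite{dubrovin2018bihamiltonian} gives $G_1=\p_{\tilde x}G$ for a single $G=\sum_{k\ge2}\ve^k\mathcal F^{[k]}$; integrating once more yields $G_2=\frac{\p G}{\p t^{2,0}}$, hence $\tilde w^\al=v^\al+\eta^{\al\beta}\frac{\p^2 G}{\p x\,\p t^{\beta,0}}$. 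Finally, comparing the $x$-derivative of $\frac{\p^2 G}{\p t^{\al,p}\p t^{\beta,q}}$ with that of $\Omega_{\al,p;\beta,q}(\tilde w)-\Omega^{[0]}_{\al,p;\beta,q}(v)$ and using that the difference has positive differential degree, one concludes $\ve^{-2}\log\tau=\ve^{-2}\log\tau^{[0]}+\ve^{-2}G$, which is exactly the genus expansion \eqref{zh-2}.

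I expect the main obstacle to be the double integration step, i.e. upgrading the existence of the $G_\al$ with $g_\al=\p_x G_\al$ to the existence of a \emph{single} generating function $G$ with $G_\al=\frac{\p G}{\p t^{\al,0}}$. This hinges on controlling the differential degrees of the obstruction terms $\frac{\p G_\al}{\p t^{\beta,0}}-\frac{\p G_\beta}{\p t^{\al,0}}$ and $\frac{\p G}{\p t^{2,0}}-G_2$, so that the relation $\p_x(\,\cdot\,)=0$ forces them to vanish (rather than merely to equal a constant), and on the legitimacy of the reciprocal transformation exchanging $x$ with $t^{1,0}$, which is guaranteed because the $\frac{\p}{\p t^{1,0}}$-flow is an honest invertible evolutionary flow by Theorem \ref{zh-45}; it also uses the triviality of the relevant variational bihamiltonian cohomology for $(\mathcal P_0,\mathcal P_1)$. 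After that, everything is bookkeeping with the tau structure $\{\Omega_{\al,p;\beta,q}\}$, and the Main Theorem of Section \ref{zh-22} (together with Conjecture 11.2 of \cite{liu2022generalized} via \eqref{4.1}) follows by unwinding the defining relations \eqref{zh-21} between $P,Q$ and $\log\tau$.
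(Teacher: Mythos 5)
Your proposal is correct and follows essentially the same route as the paper: reduce to showing the tau function of \eqref{w-1} admits the genus expansion \eqref{zh-2}, then establish this via normal coordinates, quasi-triviality with $\Phi^1_\al=0$, Lemma 3.7.21 of the cited work to get $g_\al=\p_x G_\al$, a differential-degree count to integrate once more, the reciprocal transformation $\tilde x=t^{1,0}$ with Theorem A.3 of \cite{dubrovin2018bihamiltonian} to produce the single potential $G$, and the final comparison with the tau structure $\{\Omega_{\al,p;\beta,q}\}$. This matches the paper's argument step for step, so no further comment is needed.
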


%
%

\section{Conclusion}
In this paper, we present an explicit construction of logarithmic flows of the Principal Hierarchy of the generalized Frobenius manifold  $M_\textrm{AL}$. We then define a certain extension of the Ablowitz-Ladik hierarchy, and show that it is a tau-symmetric deformation of this Principal Hierarchy. We prove that this extended Ablowitz-Ladik hierarchy possesses an infinite number of Virasoro symmetries which act linearly on its tau function, and that it coincides with the topological deformation of the Principal Hierarchy of $M_\textrm{AL}$.  

Unlike the positive and negative flows of the extended Ablowitz-Ladik hierarchy, for the flows that correspond to the logarithmic flows of the Principal Hierarchy of $M_\textrm{AL}$, which we call the logarithmic flows of the extended Ablowitz-Ladik hierarchy, we do not give their construction explicitly in terms of the Lax operator $L$ in the present paper. It would be interesting to give a definition of the logarithm of the Lax operator $L$ and to represent these flows in terms of $\log L$. On the other hand, for the logarithmic flow $\frac{\partial}{\partial t^{1,0}}$ we obtain in Theorem \ref{zh-45} an explicit expression of the exponential of the flow $\frac{\p}{\p t^{1,0}}-\frac{\p}{\p x}$, which yields an auto-B\"acklund transformation of the extended Ablowitz-Ladik hierarchy. We expect that exponentials of other logarithmic flows also have
simple and explicit expressions.

\vskip 0.3truecm
\noindent\textbf{Acknowledgement.}\, This work is supported by NSFC No.\,12171268. The authors would like to thank Di Yang and Zhe Wang for very helpful discussions on this work.
\vskip 0.3truecm

\end{document}